\newcommand{\dfn}{\triangleq}
\let\abs=\envert
\theoremstyle{definition} 
\newtheorem{exmp}{Example}
\newtheorem{theorem}{Theorem}[section]
\newtheorem{corollary}{Corollary}[theorem]
\newtheorem{lemma}[theorem]{Lemma}
\newtheorem{assumption}{Assumption}
\newtheorem{problem}{Problem}
\newtheorem{remark}{Remark}
\newcommand{\rev}[1]{#1}
\newenvironment{revblock}{\begingroup\color{black}}{\endgroup}
\title{Mixed Bernstein-Fourier Approximants for Optimal Trajectory Generation with Periodic Behavior}
\author{Liraz Mudrik\footnote{Postdoctoral fellow, Dept.\ of Mechanical and Aerospace Engineering. Email: \texttt{liraz.mudrik.ctr@nps.edu}. Member AIAA \textcolor{black}{(Corresponding Author).}}}
\author{Sean Kragelund\footnote{Research Associate Professor, Dept.\ of Mechanical and Aerospace Engineering. Email: \texttt{spkragel@nps.edu}.}}
\author{Isaac Kaminer\footnote{Professor, Dept.\ of Mechanical and Aerospace Engineering. Email: \texttt{kaminer@nps.edu}.
    \newline\newline Partially presented at the 2025 American Control Conference (ACC), Denver, CO, July 08--10, 2025.}}
\affil{Naval Postgraduate School, Monterey, CA, 93943}
\begin{document}


\maketitle

\renewcommand{\include}{\input}

\begin{abstract}
Efficient trajectory generation is crucial for autonomous systems; however, current numerical methods often struggle to handle periodic behaviors effectively, particularly when \textcolor{black}{the onboard sensors require equidistant temporal sampling}.
This paper introduces a novel mixed Bernstein-Fourier approximation framework tailored explicitly for optimal motion planning. 
Our proposed methodology leverages the uniform convergence properties of Bernstein polynomials for nonperiodic behaviors while effectively capturing periodic dynamics through the Fourier series. 
Theoretical results are established, including uniform convergence proofs for approximations of functions, derivatives, and integrals, as well as detailed error bound analyses. 
We further introduce a regulated least squares approach for determining approximation coefficients, enhancing numerical stability and practical applicability.
Within an optimal control context, we establish the feasibility and consistency of approximated solutions to their continuous counterparts. 
We also extend the covector mapping theorem, providing theoretical guarantees for approximating dual variables crucial in verifying the necessary optimality conditions from Pontryagin's Maximum Principle.
Numerical examples illustrate the method's superior performance, demonstrating substantial improvements in computational efficiency and precision in scenarios with complex periodic constraints and dynamics. 
Our mixed Bernstein-Fourier methodology thus presents a robust, theoretically grounded, and computationally efficient approach for advanced optimal trajectory planning in autonomous systems.
\end{abstract}

\section{Introduction}
Motion planning is critical for autonomous vehicles tasked with executing complex missions in challenging and dynamic environments. 
Such missions frequently involve navigating around obstacles, efficiently utilizing limited resources, and strictly adhering to operational constraints. 
Optimal control theory provides an efficient framework for tackling these motion planning problems. 
Within this theoretical structure, the optimal trajectory is determined by minimizing a carefully chosen objective function, which typically represents cost, energy, or time, subject to a set of constraints. 
These constraints often include the vehicle’s dynamic equations, initial and terminal conditions, and physical limitations such as bounds on speed, acceleration, and control inputs. 
Due to the complexity inherent in these constraints and objectives, analytical solutions to optimal control problems are extremely rare, necessitating the use of numerical methods.

Direct methods have emerged as the predominant approach for solving optimal control problems numerically. 
These methods transform the infinite-dimensional continuous-time optimal control problem into a finite-dimensional nonlinear programming (NLP) problem by discretizing the state and control variables over time. 
This discretization allows the use of readily available numerical 
solvers to compute approximate solutions efficiently. 
The theoretical underpinnings that ensure feasibility and consistency of these discretized NLP solutions to the original continuous-time problem have been extensively studied and validated~\cite{betts_practical_2010_chap4,polak_optimization_1997_chap4}. 
A wide range of discretization methods have been developed, such as Euler~\cite{polak_optimization_1997_chap4}, Runge-Kutta~\cite{schwartz_consistent_1996}, pseudospectral~\cite{ross_review_2012}, and Bernstein~\cite{cichella_optimal_2021}.
Each method offers distinct advantages and trade-offs in accuracy, computational complexity, and convergence rates.

\textcolor{black}{A fundamental distinction arises regarding the intervals between time nodes. 
Standard pseudospectral methods, which rely on non-uniform node distributions, e.g., Legendre-Gauss-Lobatto (LGL), to mitigate the Runge phenomenon, exhibit an inherent mismatch with hardware constraints in scenarios like optimal observer trajectory generation~\cite{hammel_optimal_1989,oshman_optimization_1999,ponda_trajectory_2009}. 
Since onboard sensors typically operate at a constant rate, the non-uniform solution requires post-processing interpolation to align with the sensor grid. 
This decoupling implies that the optimality and strict constraint satisfaction achieved at the non-uniform time nodes may no longer be guaranteed at the actual sampling instants. 
In contrast, Bernstein polynomials naturally accommodate uniform time intervals without exhibiting the Runge phenomenon. 
By supporting native uniform spacing, they align the optimization grid directly with the physical sensor rate, ensuring consistent performance without the need for lossy resampling. 
However, Bernstein polynomials typically exhibit slower convergence compared to pseudospectral methods~\cite{ross_review_2012}.}

Despite their broad adoption and effectiveness, direct methods primarily focus on accurately approximating primal variables—the system states and control inputs—while often neglecting the dual variables (costates and multipliers) that appear in Pontryagin’s Maximum Principle (PMP)~\cite{kirk_optimal_2004_chap5}.
The PMP provides the necessary conditions for optimality, encompassing both primal and dual variables. 
The convergence of primal variables alone does not imply the convergence of dual variables, and thus, verifying optimality demands careful consideration of costate convergence. 
The \textit{covector mapping theorem} bridges this critical gap by explicitly linking solutions from discretized NLP problems back to the PMP conditions. 
This theorem ensures that the dual variables derived from numerical solutions accurately approximate their continuous-time counterparts. 
Prior work has successfully established this theorem for various discretization methods, including Runge-Kutta~\cite{hager_runge-kutta_2000}, pseudospectral~\cite{gong_connections_2008,garg_unified_2010}, and Bernstein~\cite{cichella_consistency_2022}.

\textcolor{black}{In addition to native uniform discretization,} Bernstein polynomials possess several \textcolor{black}{other} beneficial properties for trajectory generation, notably their uniform convergence to the approximated functions and their derivatives. 
Additionally, Bernstein polynomials feature favorable geometric characteristics, simplifying the computation of critical trajectory constraints like obstacle avoidance and trajectory bounds~\cite{kielas-jensen_bernstein_2022,strano_kinodynamic_2023}.
\textcolor{black}{These properties make them an attractive candidate for robust motion planning, provided their convergence rate can be improved.} 

The mixed Bernstein-Fourier approximation approach proposed in this paper is particularly beneficial for motion planning problems exhibiting periodic behaviors, such as optimal observer trajectory generation~\cite{hammel_optimal_1989,oshman_optimization_1999,ponda_trajectory_2009}. \textcolor{black}{
}
The primary contributions of this work are as follows. 
First, we establish the uniform convergence properties for the mixed Bernstein-Fourier basis, including convergence of the approximated functions, their derivatives, integrals, and explicit error bounds. 
Next, we introduce a regulated least squares (LS) formulation designed to robustly handle cases that may exhibit clear periodicity that is not straightforward to identify, thereby broadening the applicability of our method. 
We then demonstrate how this mixed approximation method can effectively address optimal motion planning problems, providing accurate and efficient solutions. 
Specifically, we provide detailed proofs of feasibility and consistency, showing that the solutions obtained from the approximated problems converge to the true solutions of the original continuous-time optimal control problems. 
These proofs offer theoretical guarantees on the reliability and accuracy of our approximation approach. 
Finally, we generalize the covector mapping theorem to accommodate the mixed Bernstein-Fourier context, ensuring precise and reliable approximations of dual variables.
Preliminary results using this mixed approximation method were initially presented in an earlier conference paper~\cite{mudrik_optimal_2025}. 
This work presents numerical simulations that validate the theoretical results and illustrate the practical advantages of our proposed methodology.

This paper is organized as follows.
The next section presents the necessary mathematical background of Bernstein polynomials and the Fourier series.
Sec.~\ref{Sec:prop_sol} presents the mixed Bernstein-Fourier approximants and proves the uniform convergence of this approximation, its derivative, and its integral.
Sec.~\ref{sec:primal} formulates the motion planning problem using optimal control theory and the approximated version of this problem using the mixed Bernstein-Fourier basis functions, including an analysis that proves the feasibility and consistency of the approximated problem.
After proving the covector mapping theorem in Sec.~\ref{sec:dual}, 
we present simulation results in Sec.~\ref{sec:sim} to illustrate the performance of the proposed solution in two examples, followed by concluding remarks.

\section{Mathematical Background}
\label{sec:math}
This section describes the required mathematical background on using Bernstein and Fourier approximants for function approximation.
We first present the notation used in this paper. 
Bold letters denote vectors: $\textbf{x} = [x_{1},\dots,x_{m}]^{T}$.
The symbol $\mathcal{C}^{r}$ denotes the space of functions with $r$ continuous derivatives, and $\mathcal{C}_{m}^{r}$ denotes the space of $m$-dimensional vector-valued functions in $\mathcal{C}^{r}$.
\textcolor{black}{
$\Vert \cdot \Vert : \mathbb{R}^{m} \rightarrow \mathbb{R}$ denotes the standard Euclidean norm of a vector, i.e., $\Vert \textbf{x} \Vert = \sqrt{x_{1}^{2}+\cdots+x_{m}^{2}}$.
For functions $\textbf{f}(t)$, bounds of the form $\Vert \textbf{f}(t) \Vert \le \epsilon$ are understood to hold uniformly for all $t \in [0, t_f]$.}
The modulus of continuity is defined for any \rev{continuous} function $\textbf{f} : [0,t_{f}] \rightarrow \mathbb{R}^{m_{\textbf{f}}}$ as
\begin{equation}
	W_{\textbf{f}}(\delta) = \max_{\substack{t_{1},t_{2}\in[0,t_{f}]\\ \abs{t_{1} - t_{2}}\leq \delta}}  \Vert \textbf{f}(t_{1}) - \textbf{f}(t_{2}) \Vert
\end{equation}
where $m_{\textbf{f}}$ is the dimension of the function vector $\textbf{f}(\cdot)$.
If $\textbf{f} \in \mathcal{C}_{m_{\textbf{f}}}^{0}$ on $ [0,t_{f}]$, then $\lim\limits_{\delta \rightarrow 0} W_{\textbf{f}}(\delta) = 0$.
\textcolor{black}{Note that the use of max is justified because continuous functions on the compact interval $[0,t_f]$ attain their extrema.}
\subsection{Bernstein Approximation}
The Bernstein basis polynomials of order $n$ are defined as
\begin{equation}
	b_{k,n}(t) = \binom{n}{k} \frac{t^{k} (t_{f}-t)^{n-k}}{t_{f}^{n}}, \quad \forall t \in [0,t_{f}]
\end{equation}
where 
\begin{equation}
	\binom{n}{k} = \frac{n!}{k! (n-k)!}.
\end{equation}
We denote equidistant time nodes $t_{k} = \frac{k}{n}t_{f}, k=0,\dots,n$, where $t_{f} > 0$ is given.
We then use these basis functions to introduce the $n^{\text{th}}$ approximation of a function
\begin{equation}
	\textbf{f}(t) \approx \textbf{f}_{n}(t) = \sum_{k=0}^{n} \bar{\textbf{f}}_{k,n} b_{k,n}(t) , \quad t \in [0,t_{f}]
	\label{eq:br_app}
\end{equation}
where $\bar{\textbf{f}}_{k,n} \in \mathbb{R}^{m_{\textbf{f}}}, k=0,\dots,n$, are the Bernstein coefficients, satisfying $\bar{\textbf{f}}_{k,n} = \textbf{f}(t_{k})$ for all $k=0,\dots,n$.
Note that the first and last coefficients of a Bernstein polynomial are its boundary points, that is,
$\textbf{f}(0) = \bar{\textbf{f}}_{0,n}$ and $\textbf{f}(t_{f}) = \bar{\textbf{f}}_{n,n}$.
We can also approximate the time derivative of the function using the following differentiation property
\begin{align}
	\dot{\textbf{f}}_{n}(t) = \sum_{k=0}^{n-1} \frac{n}{t_{f}} (\bar{\textbf{f}}_{k+1,n} - \bar{\textbf{f}}_{k,n}) b_{k,n-1}(t)  =\frac{1}{t_{f}} \sum_{k=0}^{n-1} \sum_{i=0}^{n} \bar{\textbf{f}}_{i,n} \textbf{D}_{i,k}  b_{k,n-1}(t) 
	\label{eq:dyn_ber}
\end{align}
where $\textbf{D}_{i,k} $ is the $(i,k)$ element of the differentiation matrix, satisfying
\begin{equation}
	\textbf{D} =  \begin{bmatrix}
		-n     & 0      & \cdots & 0  \\
		n      & -n     & \cdots & 0  \\
		\vdots & \ddots & \ddots &    \\
		0      & \cdots & n      & -n \\
		0      & \cdots & 0      & n
	\end{bmatrix} \in \mathbb{R}^{(n+1)\times n}.
\end{equation}

\begin{lemma}[Uniform Convergence, see~\rev{\cite{cichella_optimal_2021}}]
	\label{lem:B_con_mod}
	Let $\textup{\textbf{f}} \in \mathcal{C}^{0}_{m_{\textup{\textbf{f}}}}$ on $[0,t_{f}]$. Its Bernstein approximation, defined in Eq.~\eqref{eq:br_app}, satisfies for any $n \in \mathbb{N}$ \textcolor{black}{and all $t \in [0, t_f]$:}
	\begin{equation}
		\Vert \textup{\textbf{f}}(t) - \textup{\textbf{f}}_{n}(t) \Vert \leq C_{0} W_{\textup{\textbf{f}}}\Big(\frac{1}{\sqrt{n}}\Big)
	\end{equation}
	where $C_{0}$ is a positive constant independent of $n$ and $W_{\textup{\textbf{f}}}(\cdot)$ is the modulus of continuity of $\textup{\textbf{f}}$ in $[0,t_{f}]$.
	
	If $\textup{\textbf{f}} \in \mathcal{C}^{1}_{m_{\textup{\textbf{f}}}}$ on $[0,t_{f}]$, then
	\begin{equation}
		\Vert \dot{\textup{\textbf{f}}}(t) - \dot{\textup{\textbf{f}}}_{n}(t) \Vert \leq C_{1} W_{\dot{\textup{\textbf{f}}}}\Big(\frac{1}{\sqrt{n}}\Big)
	\end{equation}
	where $C_{1}$ is a positive constant independent of $n$ and $W_{\dot{\textup{\textbf{f}}}}(\cdot)$ is the modulus of continuity of $\dot{\textup{\textbf{f}}}$ in $[0,t_{f}]$.
\end{lemma}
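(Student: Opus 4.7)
The plan is to reduce the vector-valued statement to the scalar case, since the Euclidean norm of the error vector is controlled componentwise and the modulus of continuity $W_{\textbf{f}}$ dominates each scalar modulus. For each component one then invokes the classical Bernstein convergence argument, which I outline below in two parts.

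\emph{Part 1 (the $\mathcal{C}^0$ bound).} Using the partition-of-unity identity $\sum_{k=0}^{n} b_{k,n}(t) = 1$ together with the interpolation property $\bar{\textbf{f}}_{k,n} = \textbf{f}(t_{k})$, I would rewrite the error as
\begin{equation}
\textbf{f}(t) - \textbf{f}_n(t) \;=\; \sum_{k=0}^{n} \bigl(\textbf{f}(t) - \textbf{f}(t_k)\bigr)\, b_{k,n}(t).
\end{equation}
Bounding each summand by $W_{\textbf{f}}(|t-t_k|)$ and invoking the standard sub-linear property $W_{\textbf{f}}(\lambda \delta) \le (1+\lambda) W_{\textbf{f}}(\delta)$ with $\delta = 1/\sqrt{n}$ reduces the estimate to controlling $\sum_{k} (1 + \sqrt{n}\,|t-t_k|)\, b_{k,n}(t)$. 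By the Cauchy-Schwarz inequality and the nonnegativity of $b_{k,n}$, this is bounded by $1 + \sqrt{n}\,\bigl(\sum_k (t-t_k)^2 b_{k,n}(t)\bigr)^{1/2}$, and the classical second-moment identity $\sum_{k=0}^n (t-t_k)^2 b_{k,n}(t) = t(t_f - t)/n \le t_f^2/(4n)$ then yields a bound independent of $t$ and linear in $W_{\textbf{f}}(1/\sqrt{n})$, giving an explicit constant $C_0$ depending only on $t_f$.

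\emph{Part 2 (the $\mathcal{C}^1$ bound).} I would apply the differentiation formula in Eq.~\eqref{eq:dyn_ber}, which expresses $\dot{\textbf{f}}_n(t)$ as a degree-$(n-1)$ Bernstein polynomial with coefficients $\frac{n}{t_f}(\bar{\textbf{f}}_{k+1,n} - \bar{\textbf{f}}_{k,n})$. By the mean value theorem, each such coefficient equals $\dot{\textbf{f}}(\xi_k)$ for some $\xi_k \in (t_k, t_{k+1})$. Using the partition-of-unity identity for the $(n-1)$-order basis, I can then write
\begin{equation}
\dot{\textbf{f}}(t) - \dot{\textbf{f}}_n(t) \;=\; \sum_{k=0}^{n-1}\bigl(\dot{\textbf{f}}(t) - \dot{\textbf{f}}(\xi_k)\bigr)\, b_{k,n-1}(t),
\end{equation}
and apply the argument from Part 1 with $\dot{\textbf{f}}$ replacing $\textbf{f}$, $n-1$ replacing $n$, and the elementary estimate $|t-\xi_k| \le |t-t_k| + t_f/n$ absorbed into the constant.

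\emph{Main obstacle.} The technical heart of the argument is the second-moment identity and the Cauchy-Schwarz step, which together convert a pointwise modulus bound into a uniform $W_{\textbf{f}}(1/\sqrt n)$ rate; this is where the specific $\sqrt{n}$ scaling originates. The vector extension is immediate, and the $\mathcal{C}^1$ case reduces to the $\mathcal{C}^0$ case once the mean value theorem identifies $\dot{\textbf{f}}_n$ as a Bernstein-type interpolant of $\dot{\textbf{f}}$. Since Lemma~\ref{lem:B_con_mod} is cited from~\cite{floater_convergence_2005}, only this sketch is needed to anchor the novel results that follow.
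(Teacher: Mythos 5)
Your sketch is correct and follows the standard classical argument (partition of unity, sublinearity of the modulus of continuity, Cauchy--Schwarz with the second-moment identity $\sum_k (t-t_k)^2 b_{k,n}(t)=t(t_f-t)/n$, and the finite-difference/mean-value reduction of $\dot{\textbf{f}}_n$ to a Bernstein-type sum of $\dot{\textbf{f}}$). The paper does not prove Lemma~\ref{lem:B_con_mod} itself but quotes it from~\cite{floater_convergence_2005}, and your outline reproduces essentially that proof; the only points worth noting are that the mean value theorem must be applied componentwise (which your announced scalar reduction covers) and that in Part~2 the points $\xi_k$ should be compared with the degree-$(n-1)$ nodes $k\,t_f/(n-1)$, an $O(t_f/n)$ offset that is indeed absorbable into $C_1$ as you claim.
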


Integrating the approximated state vector is readily computed as
\begin{equation}
	\int_{0}^{t_{f}} \textbf{f}_{n}(t)dt = \frac{t_{f}}{n+1} \sum_{k=0}^{n} \bar{\textbf{f}}_{k,n}.
	\label{eq:int_ber}
\end{equation}
\begin{lemma}[Uniform Convergence of the Integral, see~\cite{cichella_optimal_2021}]
	\label{lem:B_int}
	If $\textup{\textbf{f}} \in \mathcal{C}^{0}_{m_{\textup{\textbf{f}}}}$ on $[0,t_{f}]$, then
	\begin{equation}
		\left\| \int_{0}^{t_{f}} \textup{\textbf{f}}(t)dt - \frac{t_{f}}{n+1} \sum_{k=0}^{n} \bar{\textup{\textbf{f}}}_{k,n} \right\| \leq C_{I} W_{\textup{\textbf{f}}}\Big(\frac{1}{\sqrt{n}}\Big)
	\end{equation}
	where $C_{I}$ is a positive constant independent of $n$.
\end{lemma}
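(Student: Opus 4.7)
The plan is to reduce the integral bound directly to the pointwise uniform convergence result of Lemma~\ref{lem:B_con_mod}. The key observation is that by Eq.~\eqref{eq:int_ber}, the quadrature sum on the left-hand side is identically equal to $\int_0^{t_f} \textbf{f}_n(t)\,dt$. Consequently, the quantity to be bounded is not a quadrature-versus-integral error in the usual sense, but rather the integral of the Bernstein approximation error of $\textbf{f}$ itself.

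Given this reformulation, the first step is to rewrite
\begin{equation}
\int_{0}^{t_{f}} \textbf{f}(t)\,dt - \frac{t_{f}}{n+1} \sum_{k=0}^{n} \bar{\textbf{f}}_{k,n} \;=\; \int_{0}^{t_{f}} \bigl(\textbf{f}(t) - \textbf{f}_{n}(t)\bigr)\,dt
\end{equation}
using Eq.~\eqref{eq:int_ber}. Next, I would apply the standard vector-valued triangle inequality for the integral to bring the norm inside:
\begin{equation}
\left\| \int_{0}^{t_{f}} \bigl(\textbf{f}(t) - \textbf{f}_{n}(t)\bigr)\,dt \right\| \;\leq\; \int_{0}^{t_{f}} \bigl\| \textbf{f}(t) - \textbf{f}_{n}(t) \bigr\|\,dt.
\end{equation}

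At this point Lemma~\ref{lem:B_con_mod} is directly applicable: since $\textbf{f} \in \mathcal{C}^0_{m_\textbf{f}}$, the integrand is bounded uniformly by $C_{0}\, W_{\textbf{f}}(1/\sqrt{n})$, a constant with respect to $t$. Integrating this constant over $[0,t_f]$ yields the claim with $C_I = t_f C_0$, which is manifestly independent of $n$.

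There is no genuine obstacle here, as the argument is essentially a two-line calculation once one identifies the quadrature sum with $\int_0^{t_f} \textbf{f}_n(t)\,dt$. The only subtlety worth flagging is the uniformity of the bound from Lemma~\ref{lem:B_con_mod}: it must hold for all $t\in[0,t_f]$, not merely at the nodes $t_k$, so that pulling the bound outside the integral as a constant is legitimate. This uniformity is exactly what Lemma~\ref{lem:B_con_mod} provides, as reinforced by the convention stated earlier in this section that function-valued bounds are understood to hold uniformly on $[0,t_f]$.
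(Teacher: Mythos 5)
Your argument is correct. Note that the paper itself does not prove Lemma~\ref{lem:B_int}; it is imported from~\cite{cichella_optimal_2021}, so there is no in-paper proof to compare against. Your reduction is the natural one: since each Bernstein basis polynomial integrates to $t_f/(n+1)$, Eq.~\eqref{eq:int_ber} identifies the quadrature sum with $\int_0^{t_f}\textup{\textbf{f}}_n(t)\,dt$, the vector-valued triangle inequality for integrals moves the norm inside, and the uniform bound of Lemma~\ref{lem:B_con_mod} (which indeed holds for all $t\in[0,t_f]$, not just at the nodes) is a constant in $t$, giving $C_I = t_f C_0$, independent of $n$. This makes the integral bound a two-line corollary of pointwise uniform convergence, which is all the paper needs; a direct quadrature-error argument comparing $\frac{t_f}{n+1}\sum_k \textup{\textbf{f}}(t_k)$ with the integral of $\textup{\textbf{f}}$ via the modulus of continuity would give a comparable (in fact slightly sharper, $W_{\textup{\textbf{f}}}(1/n)$-type) bound, but the weaker $W_{\textup{\textbf{f}}}(1/\sqrt{n})$ rate you obtain matches the statement exactly, so nothing is lost.
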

\subsection{Fourier Series}
A function $\textbf{p}(t)$ is $t_{f}$-periodic if $\textbf{p}(t) = \textbf{p}(t+t_{f})$ for all $t\in \mathbb{R}$; on the interval $[0,t_{f}]$ studied here, periodicity reduces to the endpoint condition $\textbf{p}(0) = \textbf{p}(t_{f})$.
\rev{For $r \in \mathbb{N}$, we denote by $\mathcal{P}^{r}_{m}$ the class of functions $\textbf{p} \in \mathcal{C}^{r}_{m}$ on $[0,t_{f}]$ whose derivatives match at the endpoints, i.e., $\textbf{p}^{(j)}(0) = \textbf{p}^{(j)}(t_{f})$ for all $j = 0,\dots,r$; equivalently, the $t_{f}$-periodic extension of $\textbf{p}$ possesses $r$ continuous derivatives on $\mathbb{R}$. In particular, every constant function belongs to $\mathcal{P}^{r}_{m}$ for all $r$.}
The Fourier series of a periodic function $\textbf{p}(t) \in \mathcal{C}^{r}_{m_{\textbf{p}}}$ on $[0,t_{f}]$ is constructed as
\begin{equation}
	\textbf{S}_{n}(t) = \frac{\textbf{a}_{0}}{2} + \sum_{k=1}^{n} \left[ \textbf{a}_{k} \cos \left( \frac{2 \pi k t}{t_{f}}\right) + \textbf{b}_{k} \sin \left( \frac{2 \pi k t}{t_{f}}\right) \right]
	\label{eq:for_app}
\end{equation}
whose coefficients satisfy:
\begin{align}
	\textbf{a}_{k} & = \frac{2}{t_{f}} \int_{0}^{t_{f}} \textbf{p}(t) \cos\left( \frac{2 \pi k t}{t_{f}}\right) dt, \quad \forall k=0,\dots,n, \\
	\textbf{b}_{k} & = \frac{2}{t_{f}} \int_{0}^{t_{f}} \textbf{p}(t) \sin\left( \frac{2 \pi k t}{t_{f}}\right) dt, \quad \forall k=1,\dots,n.
\end{align}
\textcolor{black}{Note that $\textbf{b}_{0}$ is zero for sine terms, so we index $\textbf{b}_{k}$ starting from $k=1$ while $\textbf{a}_{k}$ includes the DC component $k=0$.}

\begin{lemma}(Uniform Convergence of Periodic Function, see \rev{Chap.~I, Sec.~3, Thm.~IX and Cor.~IV} in~\cite{jackson_theory_1930_chap1})
	\label{lem:F_con_mod}
	If $\textup{\textbf{p}}(t)\in \rev{\mathcal{P}^{r}_{m_{\textup{\textbf{p}}}}}$, where $r \in \mathbb{N}$, then its Fourier approximation, defined in Eq.~\eqref{eq:for_app}, satisfies for any \rev{integer $n \geq 2$}
	\begin{equation}
		\Vert \textup{\textbf{p}}(t) - \textup{\textbf{S}}_{n}(t) \Vert \leq \frac{A_{0}\log n}{n^{r}} W_{\textup{\textbf{p}}^{(r)}}\Big(\rev{\frac{t_{f}}{n}}\Big)
	\end{equation}
	where $A_{0}$ is a positive constant independent of $n$ and \textcolor{black}{$W_{\textup{\textbf{p}}^{(r)}}(\cdot)$ is the modulus of continuity of the $r$-th derivative $\textup{\textbf{p}}^{(r)}$} in $[0,t_{f}]$.

    \rev{Moreover,}
    \begin{equation}
		\Vert \dot{\textup{\textbf{p}}}(t) - \dot{\textup{\textbf{S}}}_{n}(t) \Vert \leq \frac{A_{1}\log n}{n^{r-1}} W_{\textup{\textbf{p}}^{(r)}}\Big(\rev{\frac{t_{f}}{n}}\Big)
	\end{equation}
	where $A_{1}$ is a positive constant independent of $n$\rev{; the constants $A_{0}$ and $A_{1}$ depend only on $r$, $t_{f}$, and $m_{\textup{\textbf{p}}}$}.
\end{lemma}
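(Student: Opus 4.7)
The plan is to establish the first inequality via the classical Jackson-type analysis of Fourier partial sums as presented in~\cite{jackson_theory_1930_chap1}. Since the Euclidean norm is bounded above by $\sqrt{m_{\textbf{p}}}$ times the maximum of the component-wise absolute values, it suffices to prove the estimate for each scalar coordinate of $\textbf{p}$ separately and absorb the dimensional factor into $A_0$; each coordinate is then a scalar $t_f$-periodic $\mathcal{C}^r$ function to which the one-dimensional theory applies directly.

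For a scalar $p \in \mathcal{C}^r[0,t_f]$, I would express the partial sum as a Dirichlet-kernel convolution,
\begin{equation}
p(t) - S_n(t) = \frac{1}{t_f}\int_0^{t_f} \bigl[p(t) - p(s)\bigr] D_n(t-s)\, ds,
\end{equation}
where $D_n$ is the $n$-th Dirichlet kernel, and integrate by parts $r$ times, using periodicity of $p, \dot p, \dots, p^{(r-1)}$ (which propagates from the $\mathcal{C}^r$ regularity of the periodic extension of $p$) to eliminate the boundary terms. Each integration transfers a derivative onto $p$ and introduces a $1/k$ factor into the Fourier coefficients, reducing the problem, after $r$ steps, to estimating a convolution of $p^{(r)}$ against a kernel of order $n^{-r}$. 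Replacing $\Vert p^{(r)}\Vert_\infty$ by the sharper quantity $W_{p^{(r)}}(2\pi/n)$ via Jackson's theorem on trigonometric approximation, and bounding the remaining kernel integral by the $O(\log n)$ Lebesgue constant of $S_n$, yields the stated inequality.

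For the derivative statement, the key observation is that under the additional hypothesis that $\dot{\textbf{p}}$ is also $t_f$-periodic, termwise differentiation of the Fourier series is legitimate, so $\dot{\textbf{S}}_n$ coincides with the $n$-th Fourier partial sum of $\dot{\textbf{p}}$. Since $\dot{\textbf{p}} \in \mathcal{C}^{r-1}_{m_{\textbf{p}}}$ is periodic and $(\dot{\textbf{p}})^{(r-1)} = \textbf{p}^{(r)}$, applying the first part of the lemma to $\dot{\textbf{p}}$ with smoothness index $r-1$ immediately produces the bound $\frac{A_1 \log n}{n^{r-1}} W_{\textbf{p}^{(r)}}(2\pi/n)$.

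The principal technical obstacle is accounting for the $\log n$ factor sharply: it arises from the unbounded Lebesgue constants of Fourier partial summation, so producing it requires splitting the Dirichlet-kernel integral into a region near $t-s = 0$, where $D_n$ is $O(n)$, and a tail region, where oscillation provides $1/\abs{t-s}$-type decay; summing the contributions gives the logarithmic factor. A secondary subtlety is verifying periodicity of the intermediate derivatives $p^{(j)}$ for $1 \leq j \leq r-1$, which is needed for the integration-by-parts boundary terms to vanish; for the derivative statement this is precisely why the additional hypothesis on $\dot{\textbf{p}}$ is imposed.
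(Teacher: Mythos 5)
The paper does not actually prove this lemma: it is imported as a classical result with a citation to Sec.~I.3 of Jackson's monograph, so there is no in-paper argument to match your proposal against. What you have sketched is, in its essential ingredients, the standard proof of that classical result: Jackson's theorem on best trigonometric approximation ($E_n(\textbf{p}) \le C n^{-r} W_{\textbf{p}^{(r)}}(2\pi/n)$ for a periodic $\mathcal{C}^r$ function) combined with the $O(\log n)$ Lebesgue constant of the partial-sum operator, plus the componentwise reduction to the scalar case; and your treatment of the derivative bound --- identifying $\dot{\textbf{S}}_n$ with the $n$-th partial sum of $\dot{\textbf{p}}$ and applying the first estimate to $\dot{\textbf{p}}$ with smoothness index $r-1$ and $(\dot{\textbf{p}})^{(r-1)} = \textbf{p}^{(r)}$ --- is exactly the right reduction and explains the loss of one power of $n$.

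Two points deserve correction. First, the middle of your argument does not work as literally described: integrating the Dirichlet-kernel representation by parts $r$ times yields a bound of order $(\log n/n^{r})\Vert \textbf{p}^{(r)}\Vert_\infty$, and you cannot then simply ``replace'' $\Vert \textbf{p}^{(r)}\Vert_\infty$ by $W_{\textbf{p}^{(r)}}(2\pi/n)$. The modulus of continuity enters through Lebesgue's inequality: write $\textbf{p} - \textbf{S}_n[\textbf{p}] = (\textbf{p} - T_n^{*}) - \textbf{S}_n[\textbf{p} - T_n^{*}]$ with $T_n^{*}$ the best approximating trigonometric polynomial (so $\textbf{S}_n[T_n^{*}] = T_n^{*}$), bound the second term by the operator norm $L_n = O(\log n)$, and only then invoke Jackson's estimate for $E_n(\textbf{p})$; the kernel splitting you describe is the proof that $L_n = O(\log n)$, not a refinement of the raw convolution of $\textbf{p}$ itself. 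Second, a bookkeeping point on hypotheses: the identity $\frac{d}{dt}\textbf{S}_n[\textbf{p}] = \textbf{S}_n[\dot{\textbf{p}}]$ already follows from $\textbf{p}(0)=\textbf{p}(t_f)$ alone (the boundary terms in the coefficient integration by parts vanish); the additional hypothesis $\dot{\textbf{p}}(0)=\dot{\textbf{p}}(t_f)$ is what makes $\dot{\textbf{p}}$ itself a periodic $\mathcal{C}^{r-1}$ function so that the first part of the lemma may be applied to it. Relatedly, your claim that periodicity of the intermediate derivatives ``propagates from the $\mathcal{C}^r$ regularity of the periodic extension'' assumes what must be hypothesized: with the paper's weak notion of periodicity ($\textbf{p}(0)=\textbf{p}(t_f)$ only), $\mathcal{C}^r$ on $[0,t_f]$ does not imply $\textbf{p}^{(j)}(0)=\textbf{p}^{(j)}(t_f)$ for $1\le j\le r$, which the classical theorem (and your integration by parts) implicitly requires; the paper glosses over this in the same way, but it should be stated as an assumption rather than derived.
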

\begin{remark}
	\label{rem:jackson_scaling}
	\rev{Lemma~\ref{lem:F_con_mod} is the period-$2\pi$ estimate of~\cite{jackson_theory_1930_chap1} transported to $[0,t_{f}]$ by the rescaling $s = 2\pi t/t_{f}$, which maps $[0,t_{f}]$ onto $[0,2\pi]$. The rescaled function $\tilde{\textup{\textbf{p}}}(s) \dfn \textup{\textbf{p}}(t_{f}s/2\pi)$, equivalently $\tilde{\textup{\textbf{p}}}(2\pi t/t_{f}) = \textup{\textbf{p}}(t)$, is $2\pi$-periodic and has the same Fourier coefficients as $\textup{\textbf{p}}$. Its moduli satisfy $W_{\tilde{\textup{\textbf{p}}}^{(r)}}(2\pi/n) = (t_{f}/2\pi)^{r}\,W_{\textup{\textbf{p}}^{(r)}}(t_{f}/n)$, the $t_{f}$-dependent factors being absorbed into $A_{0}$ and $A_{1}$. The endpoint matching in $\mathcal{P}^{r}_{m_{\textup{\textbf{p}}}}$ and the restriction $n \geq 2$ are the conditions under which that estimate holds.}
\end{remark}
\begin{remark}
    The integral of periodic functions over their period, $\textup{\textbf{P}}(t) \dfn \int_{t}^{t+t_{f}}\textup{\textbf{p}}(t')dt'$, are constant for any $t\in \mathbb{R}$:
	\begin{equation}
		\frac{d\textup{\textbf{P}}(t)}{dt} = \textup{\textbf{p}}(t+t_{f}) - \textup{\textbf{p}}(t) = \textup{\textbf{0}}.
	\end{equation}
	Therefore, the integral of the Fourier series is an exact approximation of $\textup{\textbf{P}}(0)$:
	\begin{equation}
		\label{eq:F_int}
		\int_{0}^{t_{f}} \textup{\textbf{S}}_{n}(t)dt = \frac{\textup{\textbf{a}}_{0}t_{f}}{2} = \frac{t_{f}}{2}\frac{2}{t_{f}} \int_{0}^{t_{f}} \textup{\textbf{p}}(t) dt = \textup{\textbf{P}}(0).
	\end{equation}
\end{remark}
%

\section{Mixed Bernstein-Fourier Basis for Approximation Theory}
\label{Sec:prop_sol}
This section shows how the mixed Bernstein-Fourier basis functions can be used to approximate continuous functions.
\subsection{Foundations of the Mixed Bernstein-Fourier Basis}
We first note that for any periodic function, $\textbf{p}(t)$, we can use the following relation, which trivially holds for any $t\in[0,t_{f}]$: 
\begin{equation}
	\textbf{f}(t) = (\textbf{f}(t) - \textbf{p}(t)) + \textbf{p}(t) = \textbf{g}(t) + \textbf{p}(t)
	\label{eq:per}
\end{equation}
where $\textbf{g}(t) \dfn \textbf{f}(t) - \textbf{p}(t)$. 
This also implies that $m_{\textbf{f}} = m_{\textbf{p}} = m_{\textbf{g}}$.
\begin{remark}
    When the function $\textup{\textbf{f}}(t)$ does not exhibit periodic behavior, e.g., it is a straight line, the periodic component can be chosen as \rev{any constant value function, including} the zero function $\textup{\textbf{p}}(t)\equiv \textup{\textbf{0}}$.
    In this case, the mixed Bernstein-Fourier approximation reduces to the classical Bernstein-only approximation.
\end{remark}
Then, we can present the mixed approximation using both Bernstein polynomials and Fourier series
\begin{align}
	\textbf{T}_{n}(t) = \sum_{k=0}^{n_{\textbf{g}}} \bar{\textbf{g}}_{k} b_{k,n_{\textbf{g}}}(t) + \frac{\textbf{a}_{0}}{2} + \sum_{k=1}^{n_{\textbf{p}}} \left[ \textbf{a}_{k} \cos \left( \frac{2 \pi k t}{t_{f}}\right) + \textbf{b}_{k} \sin \left( \frac{2 \pi k t}{t_{f}}\right) \right]
	\label{eq:FB_app}
\end{align}
where $n_{\textbf{g}}\in \mathbb{N}$ and $n_{\textbf{p}}\in \mathbb{N}$ are the orders of approximation of the nonperiodic and periodic parts, respectively, $n \dfn \min(n_{\textbf{g}},n_{\textbf{p}})$, and the coefficients satisfy
\begin{align}
	\label{eq:g_k}
	\bar{\textbf{g}}_{k} & = \textbf{f}\Big(\frac{k}{n_{\textbf{g}}}t_{f}\Big) - \textbf{p}\Big(\frac{k}{n_{\textbf{g}}}t_{f}\Big), \quad \forall k=0,\dots,n_{\textbf{g}},\\
	\label{eq:a_k}
	\textbf{a}_{k} & = \frac{2}{t_{f}} \int_{0}^{t_{f}} \textbf{p}(t) \cos\left( \frac{2 \pi k t}{t_{f}}\right) dt, \quad \forall k=0,\dots,n_{\textbf{p}}, \\
	\label{eq:b_k}
	\textbf{b}_{k} & = \frac{2}{t_{f}} \int_{0}^{t_{f}} \textbf{p}(t) \sin\left( \frac{2 \pi k t}{t_{f}}\right) dt, \quad \forall k=1,\dots,n_{\textbf{p}}.
\end{align}
\begin{remark}
	At the end points, the mixed approximation satisfies
	\begin{align}
		\textup{\textbf{T}}_{n}(0) & = \bar{\textup{\textbf{g}}}_{0} + \frac{\textup{\textbf{a}}_{0}}{2} + \sum_{k=1}^{n_{\textup{\textbf{p}}}} \textup{\textbf{a}}_{k}, \\
		\textup{\textbf{T}}_{n}(t_{f}) & = \bar{\textup{\textbf{g}}}_{n_{\textbf{g}}} + \frac{\textup{\textbf{a}}_{0}}{2} + \sum_{k=1}^{n_{\textup{\textbf{p}}}} \textup{\textbf{a}}_{k}.
	\end{align}
\end{remark}
\subsection{Mathematical Properties and Convergence}
\rev{Given a decomposition~\eqref{eq:per} with $\textbf{p} \in \mathcal{P}^{r}_{m_{\textbf{f}}}$, $r \in \mathbb{N}$, we define} the approximation error upper bound of the approximated function $\textbf{f}$ and its derivative to be:
\begin{align}
	\delta_{\textbf{f}}^{n} &\dfn C_{0}W_{\textbf{g}}\Big(\frac{1}{\sqrt{n_{\textbf{g}}}}\Big) +  \frac{A_{0}\log n_{\textbf{p}}}{n_{\textbf{p}}^{r}} W_{\textbf{p}^{(r)}}\Big(\rev{\frac{t_{f}}{n_{\textbf{p}}}}\Big), \label{eq:delta_f}\\
	\delta_{\dot{\textbf{f}}}^{n} &\dfn C_{1}W_{\dot{\textbf{g}}}\Big(\frac{1}{\sqrt{n_{\textbf{g}}}}\Big) 
	+  \frac{A_{1}\log n_{\textbf{p}}}{n_{\textbf{p}}^{r-1}} W_{{\textbf{p}^{(r)}}}\Big(\rev{\frac{t_{f}}{n_{\textbf{p}}}}\Big). \label{eq:delta_fd}
\end{align}
\rev{Here and throughout, $\delta_{\textbf{f}}^{n}$ and $\delta_{\dot{\textbf{f}}}^{n}$ are understood relative to the chosen decomposition~\eqref{eq:per}. For simplicity of exposition, the dependence on $\textbf{p}$ is suppressed in the notation.}
\begin{theorem}
	\label{th:FB_conv1}
	Let $\textup{\textbf{f}}(t) \in \mathcal{C}_{m_{\textup{\textbf{f}}}}^{0}$ on $[0,t_{f}]$, \rev{let $\textup{\textbf{p}} \in \mathcal{P}^{r}_{m_{\textup{\textbf{f}}}}$, $r \in \mathbb{N}$, with $\textup{\textbf{g}} = \textup{\textbf{f}} - \textup{\textbf{p}}$ as defined in Eq.~\eqref{eq:per},} and \rev{let} its approximation be calculated according to Eq\rev{s}.~\eqref{eq:FB_app}\rev{--\eqref{eq:b_k}}. 
	Then\rev{,}
	\begin{equation}
		\Vert \textup{\textbf{f}}(t) - \textup{\textbf{T}}_{n}(t) \Vert \leq \delta_{\textup{\textbf{f}}}^{n}
	\end{equation}
 	for any $n_{\textup{\textbf{g}}} \in \mathbb{N}$ \rev{and any integer $n_{\textup{\textbf{p}}} \geq 2$}.
\end{theorem}
\begin{proof}
	Consider:
	\begin{align}
		 \Vert \textbf{f}(t) &- \textbf{T}_{n}(t) \Vert = 
		  \Vert \textbf{g}(t) + \textbf{p}(t) - \textbf{T}_{n}(t)\Vert \notag \\
		& = \Vert \textbf{g}(t) - \sum_{k=0}^{n_{\textbf{g}}} \bar{\textbf{g}}_{k} b_{k}(t) + \textbf{p}(t) - \frac{\textbf{a}_{0}}{2} - \sum_{k=1}^{n_{\textbf{p}}} \left[ \textbf{a}_{k} \cos \Big( \frac{2 \pi k t}{t_{f}}\Big) + \textbf{b}_{k} \sin \left( \frac{2 \pi k t}{t_{f}}\right) \right] \Vert \notag \\
		& \leq \Vert \textbf{g}(t) - \sum_{k=0}^{n_{\textbf{g}}} \bar{\textbf{g}}_{k} b_{k}(t) \Vert + \Vert\textbf{p}(t) - \frac{\textbf{a}_{0}}{2}  - \sum_{k=1}^{n_{\textbf{p}}} \left[ \textbf{a}_{k} \cos \left( \frac{2 \pi k t}{t_{f}}\right) + \textbf{b}_{k} \sin \left( \frac{2 \pi k t}{t_{f}}\right) \right] \Vert
		\label{eq:FB_proof}
	\end{align}
	where the first equality holds due to the relation in Eqs.~\eqref{eq:per} and~\eqref{eq:FB_app} and the inequality holds by the triangle inequality.
	Using Lemma~\ref{lem:B_con_mod}, we get that
	\begin{equation}
		\Vert \textbf{g}(t) - \sum_{k=0}^{n_{\textbf{g}}} \bar{\textbf{g}}_{k} b_{k}(t) \Vert \leq C_{0}W_{\textbf{g}}\Big(\frac{1}{\sqrt{n_{\textbf{g}}}}\Big)
	\end{equation}
	where $\bar{\textbf{g}}_{k}$ are calculated according to~\eqref{eq:g_k}.
	Similarly, using Lemma~\ref{lem:F_con_mod}, we get that
	\begin{align}
		\Vert\textbf{p}(t) - \frac{\textbf{a}_{0}}{2}  - 
		\sum_{k=1}^{n_{\textbf{p}}} \left[ \textbf{a}_{k} \cos \left( \frac{2 \pi k t}{t_{f}}\right) + \textbf{b}_{k} \sin \left( \frac{2 \pi k t}{t_{f}}\right) \right] \Vert  \leq \frac{A_{0}\log n_{\textbf{p}}}{n_{\textbf{p}}^{r}} W_{\textbf{p}^{(r)}}\Big(\rev{\frac{t_{f}}{n_{\textbf{p}}}}\Big)
	\end{align}
	where $\textbf{a}_{k}$ and $\textbf{b}_{k}$ are calculated according to~\eqref{eq:a_k} and~\eqref{eq:b_k}, respectively, and the result follows.
\end{proof}
The following example shows the advantages of using the mixed approximation when there is a known periodic component.
\begin{exmp}
    Consider $f(t) = t + \sin{(2\pi t - 0.5)}$ on $[0,1]$.
    In this case, we readily identify the periodic part as $p(t) = \sin{(2\pi t - 0.5)}$ and the non-periodic part as $g(t) = t$.
    We approximate $f(t)$ with a mixture of first order Bernstein and Fourier approximants,
    \begin{equation}
        T_{1}(t) = g_{1} t + a_{1} \cos{(2\pi t)} + b_{1} \sin{(2\pi t)}
    \end{equation}
    where $g_{1} = 1, a_{1} = - \sin{0.5}$, and $b_{1} = \cos{0.5}$. Note that this approximation has a total of five coefficients, but $g_{0} = a_{0} = 0$.
    This approximation is exact, i.e., $T_{1}(t) = f(t)$ for any $t\in[0,1]$.

    Fig.~\ref{fig:aprox} illustrates the benefit of using mixed Bernstein-Fourier basis functions to approximate $f(t)$, rather than using Bernstein polynomials alone.
    \textcolor{black}{The figure compares the mixed approximation (dashed red line) against Bernstein approximations of $5^{\text{th}}$ (magenta), $20^{\text{th}}$ (green), and $100^{\text{th}}$ (black) orders for the function $f(t)=t + \sin{(2\pi t - 0.5)}$ (black line).}
    Adding the Fourier series to the approximant exploits the periodic behavior of the sine function, greatly improving the quality of the approximation. 
    Note that using only a Fourier series approximant in this case does not uniformly converge because $f(t)$ is not a periodic function.
    \begin{figure}[htb]
    	\centering
    	\includegraphics[width=3.25in]{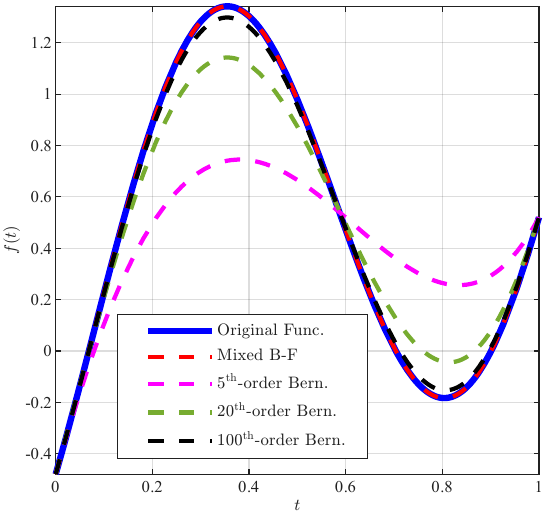}
    	\caption{\textcolor{black}{Approximation of a periodic function using mixed Bernstein-Fourier (B-F) versus Bernstein-only approximants.}}
    	\label{fig:aprox}
    \end{figure}
\end{exmp}

The derivative of the mixed approximation is:
\begin{align}
	\dot{\textbf{T}}_{n}(t)  =\frac{1}{t_{f}}\sum_{j=0}^{n_{\textbf{g}}-1} \sum_{k=0}^{n_{\textbf{g}}} \bar{\textbf{g}}_{k} \textbf{D}_{k,j}  b_{j,n-1}(t) + \frac{2 \pi}{t_{f}} \sum_{k=1}^{n_{\textbf{p}}}  k \left[\textbf{b}_{k} \cos \left( \frac{2 \pi k t}{t_{f}}\right) -  \textbf{a}_{k} \sin \left( \frac{2 \pi k t}{t_{f}}\right) \right].
	\label{eq:dFB_app}
\end{align}
\begin{corollary}
	\label{cor:diff}
	\rev{Under the hypotheses of Thm.~\ref{th:FB_conv1}, assume additionally that} $\textup{\textbf{f}}(t) \in \mathcal{C}_{m_{\textup{\textbf{f}}}}^{1}$ on $[0,t_{f}]$\rev{, so that $\textup{\textbf{g}} \in \mathcal{C}_{m_{\textup{\textbf{f}}}}^{1}$, since $\mathcal{P}^{r}_{m_{\textup{\textbf{f}}}} \subset \mathcal{C}^{1}_{m_{\textup{\textbf{f}}}}$ for $r \geq 1$.} Then, for any $n_{\textup{\textbf{g}}} \in \mathbb{N}$ \rev{and any integer $n_{\textup{\textbf{p}}} \geq 2$}
	\begin{align}
		\Vert \dot{\textup{\textbf{f}}}(t) - \dot{\textup{\textbf{T}}}_{n}(t)  \Vert \leq \delta_{\dot{\textup{\textbf{f}}}}^{n}.
	\end{align}
\end{corollary}


We also need to approximate the integral of functions; we use $n_{\textbf{g}}+1$ time steps as done in~\eqref{eq:int_ber} for the Bernstein-based approximation, yielding the following relation
\begin{align}
	\label{eq:BF_int}
	\int_{0}^{t_{f}} \textbf{T}_{n}(t)dt = \frac{t_{f}}{n_{\textbf{g}}+1} \sum_{k=0}^{n_{\textbf{g}}} \bar{\textbf{g}}_{k} + \frac{\textbf{a}_{0}t_{f}}{2} = w \sum_{k=0}^{n_{\textbf{g}}} \left(\bar{\textbf{g}}_{k} + \frac{\textbf{a}_{0}}{2}\right)
\end{align}
where $w = \frac{t_{f}}{n_{\textbf{g}}+1}$.

\begin{theorem}
	\label{lem:int}
	If $\textup{\textbf{f}}(t) \in \mathcal{C}^{0}_{m_{\textup{\textbf{f}}}}$ on $[0,t_{f}]$ \rev{and $\textup{\textbf{p}} \in \mathcal{C}^{0}_{m_{\textup{\textbf{f}}}}$ in Eq.~\eqref{eq:per}, so that $\textup{\textbf{g}} = \textup{\textbf{f}} - \textup{\textbf{p}} \in \mathcal{C}^{0}_{m_{\textup{\textbf{f}}}}$}, then
	\begin{align}
		\left\Vert \int_{0}^{t_{f}} \textup{\textbf{f}}(t)dt - w \sum_{k=0}^{n_{\textup{\textbf{g}}}} \left(\bar{\textup{\textbf{g}}}_{k} + \frac{\textup{\textbf{a}}_{0}}{2}\right) \right\Vert \leq C_{I} \rev{W_{\textup{\textbf{g}}}\Big(\frac{1}{\sqrt{n_{\textup{\textbf{g}}}}}\Big)}.
	\end{align}
\end{theorem}
\begin{proof}
	Using the relations from~\eqref{eq:per} and~\eqref{eq:BF_int} and applying the triangle inequality, we get
	\begin{align}
		&\left\Vert \int_{0}^{t_{f}} \textbf{f}(t)dt - w \sum_{k=0}^{n_{\textbf{g}}} \left(\bar{\textbf{g}}_{k} + \frac{\textbf{a}_{0}}{2}\right) \right\Vert \leq
		\left\Vert \int_{0}^{t_{f}} \textbf{g}(t)dt - \frac{t_{f}}{n_{\textbf{g}}+1} \sum_{k=0}^{n_{\textbf{g}}} \bar{\textbf{g}}_{k} \right\Vert + \left\Vert \int_{0}^{t_{f}} \textbf{p}(t)dt - \frac{\textbf{a}_{0}t_{f}}{2} \right\Vert
	\end{align}
	and the result follows by applying Lemma~\ref{lem:B_int} \rev{to $\textup{\textbf{g}}$, whose coefficients~\eqref{eq:g_k} are its nodal samples,} and Eq.~\eqref{eq:F_int}\rev{, which renders the second term identically zero}.
\end{proof}
\rev{Note that the bound of Thm.~\ref{lem:int} is independent of $n_{\textup{\textbf{p}}}$: by Eq.~\eqref{eq:F_int}, the Fourier block of~\eqref{eq:FB_app} contributes zero quadrature error at any order.}
\subsection{Approximation Using Regulated Least Squares}
\label{sec:LS}
Whereas some continuous functions contain obviously periodic components, others may have hidden periodic behavior.
Thus, we can formulate a least squares (LS) problem to optimally find the coefficients of a mixed basis that approximates the given function or data points. We now illustrate this method for the case of a scalar function, but generalization to vector-valued functions is trivial.
We use equidistant time nodes $t_{k} = \frac{k}{n_{t}} t_{f}$ for all $k = 0,\dots,n_{t}$, where $1 / n_{t}$ is the spacing in time.
Consider a given function or data using these time nodes:
\begin{equation}
	\textup{\textbf{y}} = 
	\begin{bmatrix}
		y_{0} & \cdots & y_{n_{t}}
	\end{bmatrix}^{T} \in \mathbb{R}^{n_{t}+1}.
\end{equation}
Define the Bernstein-Fourier basis matrix:
\begin{equation}
	B = 
	\begin{bmatrix}
		b_{0,n_{\textbf{g}}}(t_{0}) & \cdots & b_{n_{\textbf{g}},n_{\textbf{g}}}(t_0) & \frac{1}{2} & c_{1}(t_{0}) & \cdots & c_{n_{\textbf{p}}}(t_{0}) & s_{1}(t_{0}) & \cdots & s_{n_{\textbf{p}}}(t_{0}) \\
		b_{0,n_{\textbf{g}}}(t_{1}) & \cdots & b_{n_{\textbf{g}},n_{\textbf{g}}}(t_1) & \frac{1}{2} & c_{1}(t_{1}) & \cdots & c_{n_{\textbf{p}}}(t_{1}) & s_{1}(t_{1}) & \cdots & s_{n_{\textbf{p}}}(t_{1}) \\
		\vdots & & \vdots & \vdots & \vdots & & \vdots & \vdots & & \vdots\\
		b_{0,n_{\textbf{g}}}(t_{n_{t}}) & \cdots & b_{n_{\textbf{g}},n_{\textbf{g}}}(t_{n_{t}}) & \frac{1}{2} & c_{1}(t_{n_{t}}) & \cdots & c_{n_{\textbf{p}}}(t_{n_{t}}) & s_{1}(t_{n_{t}}) & \cdots & s_{n_{\textbf{p}}}(t_{n_{t}})
	\end{bmatrix}
\end{equation}
where $B \in \mathbb{R}^{(n_{t}+1)\times(n_{\text{tot}})}$,  $n_{\text{tot}} = n_{\textbf{g}}+2n_{\textbf{p}}+2$, $c_{j}(t_{k}) \dfn \cos \left( \frac{2 \pi j t_{k}}{t_{f}}\right)$ and $s_{j}(t_{k}) \dfn \sin \left( \frac{2 \pi j t_{k}}{t_{f}}\right)$, and the decision vector:
\begin{equation}
	\textbf{d} = 
	\begin{bmatrix}
		\bar{{g}}_{0} & \dots & \bar{{g}}_{n_{\textbf{g}}} & {a}_{0} & \dots & {a}_{n_{\textbf{p}}} & {b}_{1} & \dots & {b}_{n_{\textbf{p}}}
	\end{bmatrix}^{T}  \in \mathbb{R}^{n_{\text{tot}}}.
\end{equation}

Thus, we aim to solve the following LS problem~\cite{boyd_introduction_2018_chap15}:
\begin{equation}
	\color{black}{\min_{\textbf{d}} \Vert B\textbf{d}-\textup{\textbf{y}} \Vert^2}
\end{equation}
which\rev{, for full-column-rank $B$,} has the known solution: 
\begin{equation}
	\textbf{d}^{*} = (B^{T}B)^{-1} B^{T} \textup{\textbf{y}}.
	\label{eq:LS_sol}
\end{equation}

\rev{The mixed basis matrix $B$ is rank-deficient by at least one at every order: by the partition-of-unity identity $\sum_{k=0}^{n_{\textbf{g}}} b_{k,n_{\textbf{g}}}(t) = 1$, the Bernstein columns sum to twice the constant column associated with $a_{0}$, so $B^{T}B$ is singular and the closed form~\eqref{eq:LS_sol} does not apply. One may remove this redundancy by omitting the $a_{0}$ column; more generally, and to mitigate the ill-conditioning that grows with the approximation order, we add a Tikhonov regularizer (see, e.g., \cite{boyd_introduction_2018}), which yields a unique solution, penalizes large coefficients, and enhances numerical stability without relying on the explicit inversion of $B^{T}B$.}
The regulated LS problem is:
\begin{equation}
	\color{black}{\min_{\textbf{d}} \Vert B\textbf{d}-\textup{\textbf{y}} \Vert^2 + \lambda \Vert \textbf{d} \Vert^2}
\end{equation}
where $\lambda$ is the regularization parameter.
The solution is known to be:
\begin{equation}
	\textbf{d}^{*} = (B^{T}B + \lambda I)^{-1} B^{T} \textup{\textbf{y}}
	\label{eq:reg_LS_sol}
\end{equation}
where $I$ is the identity matrix.
\rev{Note that the error bounds of Thm.~\ref{th:FB_conv1}, Corollary~\ref{cor:diff}, and Thm.~\ref{lem:int} apply to the approximant with coefficients defined by Eqs.~\eqref{eq:g_k}--\eqref{eq:b_k}. 
The regulated LS estimate~\eqref{eq:reg_LS_sol} is a numerical surrogate, used here for data fitting and for initializing the NLP of Sec.~\ref{sec:primal}, for which no such bounds are claimed.}
\begin{exmp}
	We consider a case of data fitting using the mixed Bernstein-Fourier approximants.
	Figure~\ref{fig:LS_data} presents data obtained from  
    uniform sampling at a frequency of $100$~Hz.
	While this plot clearly shows oscillations, the functional form of these oscillations is not straightforward.
	Therefore, we use LS and regulated LS techniques to find coefficients for different approximants, namely for Bernstein-only, Fourier-only, and the proposed mixed basis approximants.
	\begin{figure}[htb]
		\centering
		\includegraphics[width=3.25in]{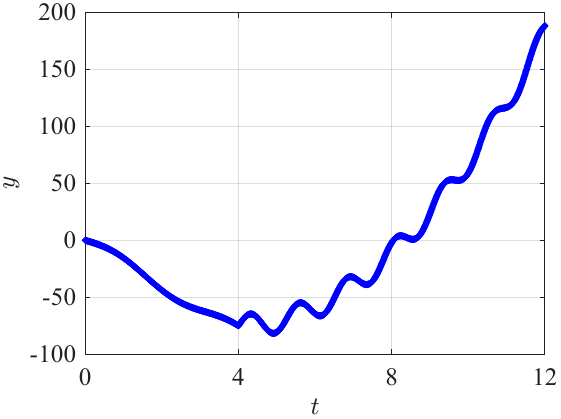}
		\caption{The data to be approximated.}
		\label{fig:LS_data}
	\end{figure}
	
	The Bernstein-only LS problem is solved by applying the solutions for the LS problem in~\eqref{eq:LS_sol} or the regulated LS problem in~\eqref{eq:reg_LS_sol} to the first $n_{\textup{\textbf{g}}} + 1$ columns of matrix $B$ and the first $n_{\textup{\textbf{g}}} + 1$ entries of $\textup{\textbf{d}}$.
	Similarly, we can solve the Fourier-only LS problem using the last $2n_{\textup{\textbf{\textbf{p}}}} + 1$ columns of matrix $B$ and the last $2n_{\textup{\textbf{\textbf{p}}}} + 1$ entries of $\textup{\textbf{d}}$.
	For simplicity of exposition, we set $n = n_{\textup{\textbf{g}}} = n_{\textup{\textbf{p}}}$.
	Hence, the dimensions for the decision vector for different approximants are
	\begin{equation}
		\dim(\textup{\textbf{d}}) = 
		\begin{cases}
			n + 1, & \rm{Bernstein-only} \\
			2n + 1, & \rm{Fourier-only} \\
			3n + 2, & \rm{Mixed\;Bernstein-Fourier}
		\end{cases}.
	\end{equation}

	We observe that mixed Bernstein-Fourier approximants have the smallest RMS errors, approximately three times lower than Bernstein-only approximants and ten times smaller than Fourier-only approximants.
    \textcolor{black}{Fig.~\ref{fig:LS_RMS} compares these errors for Bernstein-only (red), Fourier-only \rev{(blue)}, and mixed basis (black) approximants. The dots represent the LS solution, while squares represent the regulated LS solution ($\lambda = 10^{-14}$).}
    Moreover, we note that the regularization term improves the solution's numerical stability, especially for the Bernstein-only case.
    Even a very small regularization value makes the resulting LS solution more reliable and consistent.
    \begin{figure}[htb]
    	\centering
    	\includegraphics[width=3.25in]{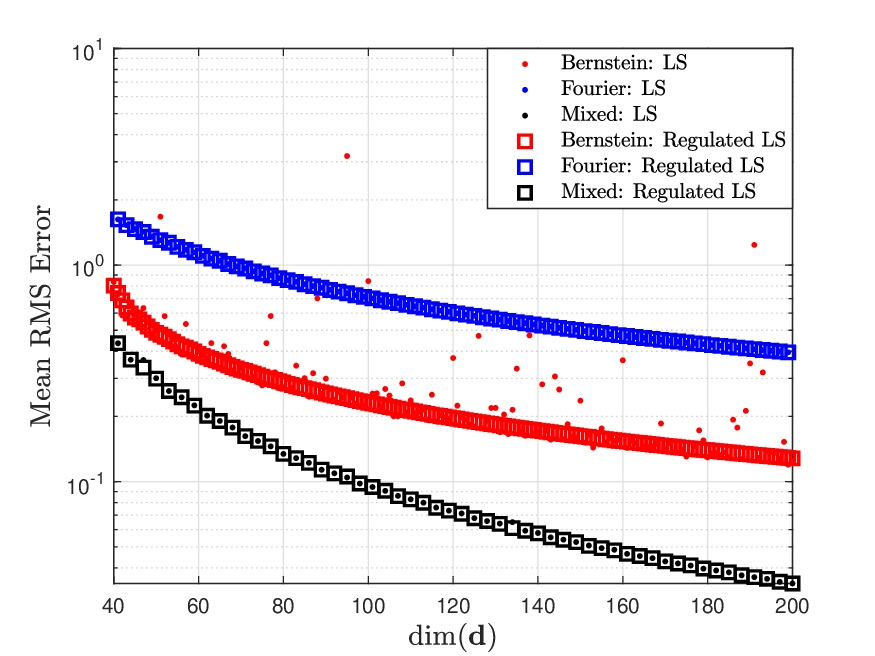}
    	\caption{\textcolor{black}{Mean RMS error comparison for different approximation bases.}}
    	\label{fig:LS_RMS}
    \end{figure}
\end{exmp}
In the ensuing section, we formulate an approximated version of the optimal motion planning problem using mixed Bernstein-Fourier approximants.
The regulated LS technique described above is used to find the Bernstein-Fourier coefficients for a given trajectory, which are needed to provide an initial guess to the NLP optimizer.

\section{Optimal Motion Planning Problem}
\label{sec:primal}
This section presents the optimal primal motion planning problem using a normalized time scale, as presented for the Bernstein-only case in~\cite{cichella_consistency_2022}.
We first generally formulate the problem and then present the approximated problem using the mixed Bernstein-Fourier approximants.
We then prove the feasibility and consistency of the approximated problem.
\subsection{Problem Formulation}
\label{sec:prob_for}
The optimal motion planning problem can be formulated as follows.
\begin{problem}[Problem $P$]
	\label{prob:P}
	Find $\textup{\textbf{x}}(t) : [0,1] \rightarrow \mathbb{R}^{m_{\textup{\textbf{x}}}}$ and $\textup{\textbf{u}}(t) : [0,1] \rightarrow \mathbb{R}^{m_{\textup{\textbf{u}}}}$ that minimize
	\begin{align}
		J(\textup{\textbf{x}}(t),\textup{\textbf{u}}(t)) &= E(\textup{\textbf{x}}(0),\textup{\textbf{x}}(1))
		 + \int_{0}^{1} F(\textup{\textbf{x}}(t),\textup{\textbf{u}}(t)) dt
	\end{align}
	subject to
	\textup{\begin{align}
		\label{eq:dyn_P}
		&\dot{\textbf{x}} = \textbf{f}(\textbf{x}(t),\textbf{u}(t)), \quad \forall t\in [0,1] \\
		\label{eq:eq_P}
		&\textbf{e}(\textbf{x}(0),\textbf{x}(1)) = 0 \\
		\label{eq:in_P}
		&\textbf{h}(\textbf{x}(t),\textbf{u}(t)) \leq 0, \quad \forall t\in [0,1]
	\end{align}}
	where
	\textup{$J: \mathbb{R}^{m_{\textbf{x}}} \times \mathbb{R}^{m_{\textbf{u}}} \rightarrow \mathbb{R}$,
	$E: \mathbb{R}^{m_{\textbf{x}}} \times \mathbb{R}^{m_{\textbf{x}}}  \rightarrow \mathbb{R}$,
	$F: \mathbb{R}^{m_{\textbf{x}}} \times \mathbb{R}^{m_{\textbf{u}}} \rightarrow \mathbb{R}$,
	$\textbf{f}: \mathbb{R}^{m_{\textbf{x}}} \times \mathbb{R}^{m_{\textbf{u}}} \rightarrow \mathbb{R}^{m_{\textbf{x}}}$, 
	$\textbf{e}: \mathbb{R}^{m_{\textbf{x}}} \times \mathbb{R}^{m_{\textbf{x}}} \rightarrow \mathbb{R}^{m_{\textbf{e}}}$, and
	$\textbf{h}: \mathbb{R}^{m_{\textbf{x}}} \times \mathbb{R}^{m_{\textbf{u}}} \rightarrow \mathbb{R}^{m_{\textbf{h}}}$}.
\end{problem}
The solution to this problem is generally intractable; therefore, we aim to approximate it.
We use equidistant time nodes $t_{k} = \frac{k}{n_{t}}, k=0,\dots,n_{t}$, where $n_{t} \in \mathbb{N}$.
The mixed approximants are comprised of both Bernstein polynomials and Fourier series to approximate the state and control trajectories:
\begin{align}
	\label{eq:x_app}
	\textbf{x}_n&(t)  = \sum_{k=0}^{n_{B}} \bar{\textbf{c}}_{x,k} b_{k,n}(t) + \frac{\textbf{a}_{x,0}}{2} + \sum_{k=1}^{n_{F}} \left[ \textbf{a}_{x,k} \cos \left( 2 \pi k t\right) + \textbf{b}_{x,k} \sin \left( 2 \pi k t\right) \right] \\
	\label{eq:u_app}
	\textbf{u}_n&(t)  =\sum_{k=0}^{n_{B}} \bar{\textbf{c}}_{u,k} b_{k,n}(t) + \frac{\textbf{a}_{u,0}}{2}  + \sum_{k=1}^{n_{F}} \left[ \textbf{a}_{u,k} \cos \left( 2 \pi k t\right) + \textbf{b}_{u,k} \sin \left( 2 \pi k t\right) \right],
\end{align}
where $\textbf{x}_n(t) : [0,1] \rightarrow \mathbb{R}^{m_{\textbf{x}}}$, $\textbf{u}_n(t) : [0,1] \rightarrow \mathbb{R}^{m_{\textbf{u}}}$, $n_{B} \in \mathbb{N}$ and $n_{F} \in \mathbb{N}$ are the order of approximation of the Bernstein and Fourier basis functions, respectively, and $n \dfn \min( n_{t},n_{B},n_{F}) $. 
Note that the maximal dimension of the decision vector is $(n_{B} + 2 n_{F} + 2)(m_{\textbf{x}} + m_{\textbf{u}})$, which is independent of $n_{t}$.
\begin{remark}
	\label{rem:nt}
	Choosing $n_{t}$ is crucial for the performance of the suggested method.
	Although the dimension of the decision vector is independent of $n_{t}$, it still affects the performance.
	If we set $n_{t}$ to be too large, we will have numerical problems as the matrix of the basis functions becomes ill-conditioned.
	In contrast, if $n_{t}$ is too small, we can face sampling problems such as aliasing~\cite{ross_low-thrust_2007}. 
	Therefore, we can easily overcome this challenge by obeying the guidelines of the Nyquist-Shannon sampling theorem and ensuring that $n_{t} > 2 n_{F}$. \rev{This is a sampling condition, imposed at each fixed order to resolve the Fourier block on the node grid, and it is all that the primal results of this section require. The dual analysis of Sec.~\ref{sec:dual} constrains instead how $n_{t}$ must grow relative to $n_{B}$ and $n_{F}$ as the approximation is refined, and is correspondingly stronger.}
	An appropriate initial value can be $n_{t} = n_{B}$, as implemented in the Bernstein-only case~\cite{cichella_optimal_2021,cichella_consistency_2022}.
\end{remark}

With these approximants, we can formulate the following NLP problem.
\begin{problem}[Problem $P_n$]
	\label{prob:Pn}
	Find $\textup{\textbf{x}}_n(t_{k})$ and $\textup{\textbf{u}}_n(t_{k})$ \rev{(equivalently, the coefficients of their mixed approximants)} that minimize
	\textup{\begin{align}
		J_n(\textbf{x}_n(t_{k}),\textbf{u}_n(t_{k})) = E(\textbf{x}_n(0),\textbf{x}_n(1)) +w  \sum_{k=0}^{n_{t}} F(\textbf{x}_n(t_{k}),\textbf{u}_n(t_{k}))
	\end{align}}
	subject to
	\textup{\begin{align}
		\label{eq:dyn_Pn}
		&\Vert\dot{\textbf{x}}_n(t_{k}) - \textbf{f}(\textbf{x}_n(t_{k}),\textbf{u}_n(t_{k}))\Vert \leq \delta_{P}^{n}, \; \forall k=0,\dots,n_{t}\\
		\label{eq:eq_Pn}
		&\Vert\textbf{e}(\textbf{x}_n(0),\textbf{x}_n(1)) \Vert \leq \delta_{P}^{n} \\
		\label{eq:in_Pn}
		&\textbf{h}(\textbf{x}_n(t_{k}),\textbf{u}_n(t_{k})) \leq \delta_{P}^{n} \cdot \textbf{1}, \; \forall k=0,\dots,n_{t}
	\end{align}}
	where $w = \frac{1}{n_{t} + 1}$.
\end{problem}

\subsection{Analysis}
\label{sec:analysis}
This section addresses the feasibility and consistency issues of Problem~$P_{n}$, following the lines of analysis presented in~\cite{cichella_optimal_2021} for approximations using only Bernstein polynomials.
We first make the following assumptions regarding Problem~$P$.
\begin{assumption}
	\label{ass:lip}
	The functions $E, F, \textup{\textbf{f}}, \textup{\textbf{e}}$, and $\textup{\textbf{h}}$ are Lipschitz continuous with respect to their arguments.
\end{assumption}
\begin{assumption}
	\label{ass:exis}
	Problem~$P$ admits an optimal solution \textup{$(\textbf{x}^{*}(t),\textbf{u}^{*}(t))$} that satisfies \textup{$\textbf{x}^{*}(t)\in \mathcal{C}^{1}_{m_{\textbf{x}}}$} and \textup{$\textbf{u}^{*}(t)\in \mathcal{C}^{0}_{m_{\textbf{u}}}$ on $[0,1]$}.
\end{assumption}

\rev{Fix any decomposition of a feasible pair into periodic and non-periodic parts, \textup{$\textbf{x} = \textbf{x}_{\textbf{g}} + \textbf{x}_{\textbf{p}}$} and \textup{$\textbf{u} = \textbf{u}_{\textbf{g}} + \textbf{u}_{\textbf{p}}$}, with \textup{$\textbf{x}_{\textbf{p}} \in \mathcal{P}^{r}_{m_{\textbf{x}}}$} ($r \geq 2$) and \textup{$\textbf{u}_{\textbf{p}} \in \mathcal{P}^{\tilde{r}}_{m_{\textbf{u}}}$} ($\tilde{r} \geq 1$) arbitrary; such a choice always exists, e.g., $\textbf{x}_{\textbf{p}} \equiv \textbf{0}$, $\textbf{u}_{\textbf{p}} \equiv \textbf{0}$. Let \textup{$\delta_{\textbf{x}}^{n}$, $\delta_{\dot{\textbf{x}}}^{n}$, $\delta_{\textbf{u}}^{n}$} denote the associated error bounds of Eqs.~\eqref{eq:delta_f}--\eqref{eq:delta_fd}, evaluated for $(\textbf{x}_{\textbf{g}},\textbf{x}_{\textbf{p}},n_{B},n_{F},r)$ and $(\textbf{u}_{\textbf{g}},\textbf{u}_{\textbf{p}},n_{B},n_{F},\tilde{r})$.}
The following theorem summarizes the feasibility result of the primal problem.
\begin{theorem}[Feasibility]
	\label{th:feas}
	Let \textup{$(\textbf{x}(t),\textbf{u}(t))$} be a feasible solution to Problem $P$, which exists by \rev{Assumption~\ref{ass:exis}}, with the decomposition and error bounds $\delta_{\textbf{x}}^{n}$, $\delta_{\dot{\textbf{x}}}^{n}$, $\delta_{\textbf{u}}^{n}$ introduced above.
	Let
	\begin{align}
		\label{eq:del_P_n}
		\delta_{P}^{n} = C_{P} \max\{\delta_{\dot{\textbf{x}}}^{n} , \delta_{\textbf{x}}^{n} , \delta_{\textbf{u}}^{n} \} 
	\end{align}
	where $C_{P}$ is a positive constant independent of $n_{B}$, $n_{F}$, and $n_{t}$.
	Then, Problem $P_{n}$ admits a feasible solution \textup{$(\textbf{x}_{n},\textbf{u}_{n})$} for an arbitrary $n \in \mathbb{Z}^{+}$\rev{, and $\delta_{P}^{n} \rightarrow 0$ as $n \rightarrow \infty$}.
\end{theorem}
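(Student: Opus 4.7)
The plan is to exhibit a feasible candidate for Problem~$P_n$ constructed directly from the assumed feasible solution $(\textbf{x}(t),\textbf{u}(t))$ of Problem~$P$. Specifically, I would let $(\textbf{x}_n,\textbf{u}_n)$ be the mixed Bernstein-Fourier approximants of $(\textbf{x},\textbf{u})$ defined in Eqs.~\eqref{eq:x_app}--\eqref{eq:u_app}, with coefficients obtained from the scheme of Sec.~\ref{Sec:prop_sol}. Then I would verify the three relaxed constraints~\eqref{eq:dyn_Pn}--\eqref{eq:in_Pn} in turn by combining the uniform approximation bounds from Theorem~\ref{th:FB_conv1} and Corollary~\ref{cor:diff} with the Lipschitz hypothesis of Assumption~\ref{ass:lip}. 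The regularity needed to invoke the approximation lemmas is supplied by Assumption~\ref{ass:exis} applied to the feasible pair under consideration (which may be taken to be the optimal one if no other sufficiently regular feasible point is available).

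For the dynamics residual~\eqref{eq:dyn_Pn}, I would add and subtract $\dot{\textbf{x}}(t_k)$ and exploit the identity $\dot{\textbf{x}}(t_k)=\textbf{f}(\textbf{x}(t_k),\textbf{u}(t_k))$ from~\eqref{eq:dyn_P}, splitting via the triangle inequality into a derivative-approximation error bounded by $\delta_{\dot{\textbf{x}}}^n$ through Corollary~\ref{cor:diff} and a Lipschitz term of the form $L_{\textbf{f}}(\delta_{\textbf{x}}^n+\delta_{\textbf{u}}^n)$. For the endpoint equality~\eqref{eq:eq_Pn}, since $\textbf{e}(\textbf{x}(0),\textbf{x}(1))=0$, Lipschitz continuity of $\textbf{e}$ yields an estimate proportional to $\delta_{\textbf{x}}^n$ from the two endpoint values. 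For the path inequality~\eqref{eq:in_Pn}, I would argue componentwise: $\textbf{h}(\textbf{x}(t_k),\textbf{u}(t_k))\leq 0$ by feasibility, and the perturbation incurred by replacing $(\textbf{x},\textbf{u})$ with $(\textbf{x}_n,\textbf{u}_n)$ is Lipschitz-controlled by $L_{\textbf{h}}(\delta_{\textbf{x}}^n+\delta_{\textbf{u}}^n)$. Choosing $C_P$ to dominate the three prefactors---for example $C_P\geq \max\{1+2L_{\textbf{f}},\,2L_{\textbf{e}},\,2L_{\textbf{h}}\}$---collapses all three residuals into the common bound $\delta_P^n = C_P\max\{\delta_{\dot{\textbf{x}}}^n,\delta_{\textbf{x}}^n,\delta_{\textbf{u}}^n\}$, simultaneously verifying~\eqref{eq:dyn_Pn}--\eqref{eq:in_Pn}.

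The calculations themselves are routine once the estimates from Sec.~\ref{Sec:prop_sol} are available; the subtle points are methodological. First, Theorem~\ref{th:FB_conv1} requires choosing a periodic reference $\textbf{p}(t)$ for each of $\textbf{x}$ and $\textbf{u}$; the safe choice $\textbf{p}\equiv\textbf{0}$ always works and reduces the mixed approximation to the Bernstein-only case while still delivering the $\delta_{\textbf{x}}^n$, $\delta_{\textbf{u}}^n$, and $\delta_{\dot{\textbf{x}}}^n$ bounds needed above. Second, since these three error quantities can decay at different rates in $n_{\textbf{g}}$ and $n_{\textbf{p}}$, bundling them into a single $\delta_P^n$ through their maximum is precisely what permits a common right-hand side in the three inequalities. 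The main obstacle is therefore not mathematical but notational: ensuring that the constants absorbed into $C_P$ are independent of $n_B$, $n_F$, and $n_t$, which follows because the Lipschitz constants come from Assumption~\ref{ass:lip} while the approximation constants $C_0$, $C_1$ appearing inside the $\delta$'s are already independent of the discretization parameters.
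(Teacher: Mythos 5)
Your proposal follows essentially the same route as the paper's proof: construct $(\textbf{x}_n,\textbf{u}_n)$ as the mixed Bernstein--Fourier approximants of the given feasible pair, insert the nulled dynamics/boundary expressions, split by the triangle inequality into approximation errors (bounded via Theorem~\ref{th:FB_conv1} and Corollary~\ref{cor:diff}) plus Lipschitz perturbation terms, and choose $C_P$ to dominate the resulting constants. The paper does exactly this, with the constraint-by-constraint choices $C_P > 1 + L_{\textbf{f}_{\textbf{x}}} + L_{\textbf{f}_{\textbf{u}}}$, $C_P > L_{\textbf{h}_{\textbf{x}}} + L_{\textbf{h}_{\textbf{u}}}$, and $C_P > L_{\textbf{e}}$ matching your proposed dominating constant.
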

The proof is deferred to Appendix~\ref{sec:app1}.
Proving the consistency of the solution requires an additional assumption that ensures the convergence of the approximated solution.
\begin{assumption}
	\label{ass:conv}
	There exist \textup{$\textbf{x}^{\infty}(t)\in\mathcal{C}^{1}_{m_{\textbf{x}}}$} and \textup{$\textbf{u}^{\infty}(t)\in\mathcal{C}^{0}_{m_{\textbf{u}}}$} on $[0,1]$ such that\rev{, uniformly on $[0,1]$,}
	\begin{equation}
		\lim\limits_{n\rightarrow\infty} (\textbf{x}_{n}(t),\rev{\dot{\textbf{x}}_{n}(t),}\textbf{u}_{n}(t)) = (\textbf{x}^{\infty}(t),\rev{\dot{\textbf{x}}^{\infty}(t),}\textbf{u}^{\infty}(t)).
	\end{equation}
\rev{Assumption~\ref{ass:conv} is the analogue, for the present setting, of Assumption~1 in~\cite{gong_connections_2008}, which postulates a uniform accumulation point of the sequence $(\dot{\textbf{x}}_{N}(\cdot),\textbf{u}_{N}(\cdot))$ with continuous limits; see also Assumption~3 of~\cite{cichella_optimal_2021} in the Bernstein-only setting. The convergence postulated here is exhibited, e.g., by the feasible sequence constructed in the proof of Thm.~\ref{th:feas}.}
\end{assumption}
\begin{theorem}
	\label{th:consis}
	\rev{Let \textup{$\{(\textbf{x}_{n}(t),\textbf{u}_{n}(t))\}$} be a sequence of optimal solutions to Problem $P_{n}$ satisfying Assumption~\ref{ass:conv}. Then \textup{$(\textbf{x}^{\infty}(t),\textbf{u}^{\infty}(t))$} is an optimal solution of Problem~$P$.}
\end{theorem}
The proof is deferred to Appendix~\ref{sec:app2}.

\section{Necessary Conditions for Optimal Motion Planning}
\label{sec:dual}
This section presents the theoretical results related to the dual problems of Problem $P$ and Problem $P_{n}$, following the guidelines of~\cite{cichella_consistency_2022}.
\subsection{Problem Formulation}
\subsubsection{Applying PMP to Problem P}
First, we derive the first-order necessary conditions for Problem $P$.
Let $\bm{\lambda}(t) : [0,1] \rightarrow \mathbb{R}^{m_{\textbf{x}}}$ be the trajectory costate, and let $\bm{\mu}(t) : [0,1] \rightarrow \mathbb{R}^{m_{\textbf{h}}}$ and $\bm{\nu} \in \mathbb{R}^{m_{\textbf{e}}}$ be the multipliers for the inequality and equality constraints, respectively.
We define the Lagrangian of the Hamiltonian~\cite{hartl_survey_1995}:
\begin{equation}
	\mathcal{L}(\textbf{x}(t),\textbf{u}(t),\bm{\lambda}(t),\bm{\mu}(t)) = 
	\mathcal{H}(\textbf{x}(t),\textbf{u}(t),\bm{\lambda}(t)) + \bm{\mu}^{T}(t)\textbf{h}(\textbf{x}(t),\textbf{u}(t)),
\end{equation}
where the Hamiltonian is 
\begin{equation}
	\mathcal{H}(\textbf{x}(t),\textbf{u}(t),\bm{\lambda}(t)) = F(\textbf{x}(t),\textbf{u}(t))
	+ \bm{\lambda}^{T}(t)\textbf{f}(\textbf{x}(t),\textbf{u}(t)).
\end{equation}
The dual problem is formulated as follows~\cite{hartl_survey_1995}.
\begin{problem}[Problem $P_{\bm{\lambda}}$]
	\label{prob:P_lamb}
	Determine \textup{$\textbf{x}(t)$, $\textbf{u}(t)$, $\bm{\lambda}(t)$, $\bm{\mu}(t)$}, and $\bm{\nu}$ that for all $t \in [0,1]$ satisfy Eqs.~\eqref{eq:dyn_P}--\eqref{eq:in_P} and 
	\textup{\begin{align}
		\label{eq:mu_h_0}
		& \bm{\mu}^{T}(t)\textbf{h}(\textbf{x}(t),\textbf{u}(t)) = 0, \quad \bm{\mu}(t) \geq 0, \\
		\label{eq:lamb_dot}
		& \dot{\bm{\lambda}}^{T}(t) + \mathcal{L}_{\textbf{x}}(\textbf{x}(t),\textbf{u}(t),\bm{\lambda}(t),\bm{\mu}(t)) = 0, \\
		\label{eq:closure1}
		& \bm{\lambda}^{T}(0) = - \bm{\nu}^{T}\textbf{e}_{\textbf{x}(0)}(\textbf{x}(0),\textbf{x}(1)) - 
		E_{\textbf{x}(0)}(\textbf{x}(0),\textbf{x}(1)),\\
		\label{eq:closure2}
		& \bm{\lambda}^{T}(1) =  \bm{\nu}^{T}\textbf{e}_{\textbf{x}(1)}(\textbf{x}(0),\textbf{x}(1)) + 
		E_{\textbf{x}(1)}(\textbf{x}(0),\textbf{x}(1)), \\
		\label{eq:L_u}
		& \mathcal{L}_{\textbf{u}}(\textbf{x}(t),\textbf{u}(t),\bm{\lambda}(t),\bm{\mu}(t)) = 0.
	\end{align}}
\end{problem}
The subscripts are used to denote partial derivatives, e.g., $\textbf{h}_{\textbf{x}}(\textbf{x}(t),\textbf{u}(t)) = \frac{\partial}{\partial \textbf{x}} \textbf{h}(\textbf{x}(t),\textbf{u}(t))$.
\subsubsection{KKT Conditions of Problem \texorpdfstring{$P_{n}$}{Pn}}
We derive the necessary conditions of Problem~$P_{n}$. 
We first introduce the approximation of the trajectory costates and the inequality multipliers:
\begin{align}
	\label{eq:lamb_app}
	\bm{\lambda}_n&(t)  = \sum_{k=0}^{n_{B}} \bar{\textbf{c}}_{\lambda,k} b_{k,n}(t) + \frac{\textbf{a}_{\lambda,0}}{2}  + \sum_{k=1}^{n_{F}} \left[ \textbf{a}_{\lambda,k} \cos \left( 2 \pi k t\right) + \textbf{b}_{\lambda,k} \sin \left( 2 \pi k t\right) \right] \\
	\label{eq:mu_app}
	\bm{\mu}_n&(t)  = \sum_{k=0}^{n_{B}} \bar{\textbf{c}}_{\mu,k} b_{k,n}(t) + \frac{\textbf{a}_{\mu,0}}{2} + \sum_{k=1}^{n_{F}} \left[ \textbf{a}_{\mu,k} \cos \left( 2 \pi k t\right) + \textbf{b}_{\mu,k} \sin \left( 2 \pi k t\right) \right],
\end{align}
where $\bm{\lambda}_n(t) : [0,1] \rightarrow \mathbb{R}^{m_{\textbf{x}}}$, and $\bm{\mu}_n(t) : [0,1] \rightarrow \mathbb{R}^{m_{\textbf{h}}}$ are the mixed Bernstein-Fourier approximations of $\bm{\lambda}(t)$ and $\bm{\mu}(t)$, respectively.

Using these notations, we present the Lagrangian of Problem~$P_{n}$
\begin{align}
	\mathcal{L}_{n} & = E(\textbf{x}_n(0),\textbf{x}_n(1)) + w \sum_{k=0}^{n_{t}} F(\textbf{x}_n(t_{k}),\textbf{u}_n(t_{k})) + \bar{\bm{\nu}}^{T} \textbf{e}(\textbf{x}_n(0),\textbf{x}_n(1)) \notag \\
	& + \sum_{k=0}^{n_{t}} \bm{\lambda}_n^{T}(t_{k})\big(- \dot{\textbf{x}}_n(t_{k}) + \textbf{f}(\textbf{x}_n(t_{k}),\textbf{u}_n(t_{k}))\big)
	 + \sum_{k=0}^{n_{t}} \bm{\mu}_n^{T}(t_{k}) \textbf{h}(\textbf{x}_n(t_{k}),\textbf{u}_n(t_{k})) 
\end{align}
where $\bar{\bm{\nu}} \in \mathbb{R}^{m_{\textbf{e}}}$.
Now, we can formulate the dual problem of Problem~$P_{n}$.
\begin{problem}[Problem $P_{\bm{n\lambda}}$]
	Find $\textup{\textbf{x}}_n(t_{k})$, $\textup{\textbf{u}}_n(t_{k})$, $\bm{\lambda}_n(t_{k})$ , $\bm{\mu}_n(t_{k})$, $\bar{\bm{\nu}}$ that satisfy the primal feasibility conditions in Eqs.~\eqref{eq:dyn_Pn}--\eqref{eq:in_Pn}, the complementary slackness and dual feasibility conditions
	\textup{\begin{align}
		\label{eq:in_eq_Pnl}
		 \left\Vert \bm{\mu}_n^{T}(t_{k}) \textbf{h}(\textbf{x}_n(t_{k}),\textbf{u}_n(t_{k})) \right\Vert \leq \frac{\delta_{D}}{n_{t}}, 
		 \quad \bm{\mu}_n(t_{k}) \geq - \frac{\delta_{D}}{n_{t}} \bm{1}, \quad \forall k=0,\dots,n_{t},
	\end{align}}
	the stationarity conditions
	\textup{\begin{align}
		&\left\Vert \frac{\partial \mathcal{L}_{n}}{\partial \bar{\textbf{c}}_{x,k}} \right\Vert \leq \delta_{D}, \quad
		\left\Vert \frac{\partial \mathcal{L}_{n}}{\partial \bar{\textbf{c}}_{u,k}} \right\Vert \leq \delta_{D}, \quad \rev{k=0,\dots,n_{B},}  \notag\\
		& \left\Vert \frac{\partial \mathcal{L}_{n}}{\partial {\textbf{a}}_{x,k}} \right\Vert \leq \delta_{D}, \quad 
		\left\Vert \frac{\partial \mathcal{L}_{n}}{\partial {\textbf{a}}_{u,k}} \right\Vert \leq \delta_{D}, \quad \rev{k=0,\dots,n_{F},}  \notag\\
		\label{eq:stat}
		& \left\Vert \frac{\partial \mathcal{L}_{n}}{\partial {\textbf{b}}_{x,k}} \right\Vert \leq \delta_{D}, \quad
		\left\Vert \frac{\partial \mathcal{L}_{n}}{\partial {\textbf{b}}_{u,k}} \right\Vert \leq \delta_{D}, \quad \rev{k=1,\dots,n_{F},}
	\end{align}}
	and the closure conditions
	\textup{\begin{align}
		\label{eq:close1}
		& \left\Vert \frac{\bm{\lambda}_{n}^{T}(0)}{w} + \bar{\bm{\nu}}^{T}\textbf{e}_{\textbf{x}(0)}(\textbf{x}_{n}(0),\textbf{x}_{n}(1)) + 
		E_{\textbf{x}(0)}(\textbf{x}_{n}(0),\textbf{x}_{n}(1)) \right\Vert \leq \delta_{D}, \\
		\label{eq:close2}
		& \left\Vert \frac{\bm{\lambda}_{n}^{T}(1)}{w} - \bar{\bm{\nu}}^{T}\textbf{e}_{\textbf{x}(1)}(\textbf{x}_{n}(0),\textbf{x}_{n}(1)) - E_{\textbf{x}(1)}(\textbf{x}_{n}(0),\textbf{x}_{n}(1)) \right\Vert  \leq \delta_{D}, 
	\end{align}}
	where $ \delta_{D}$ is a small positive number that depends on $n$ and satisfies $\lim_{n\rightarrow\infty} \delta_{D} = 0$\rev{; an explicit tolerance realizing this requirement, and the condition on the node count under which it vanishes, are given below}.
\end{problem}
\begin{remark}
	The closure conditions must be added to Problem~$P_{n\lambda}$ to obtain a consistent approximation of Problem $P_{\lambda}$.
	These conditions come from the transversality conditions in Eqs.~\eqref{eq:closure1}--\eqref{eq:closure2}, which do not appear in the KKT conditions of Problem~$P_{n}$.
	More information regarding these conditions can be found in~\cite{fahroo_costate_2001,gong_connections_2008}.
\end{remark}
\subsection{Analysis of the Dual Problems}
This section analyzes the feasibility and consistency of Problem~$P_{n\lambda}$.
We first show the existence of a solution to this problem, and then we investigate its convergence properties when $n\rightarrow\infty$.
This yields the covector mapping theorem for the mixed Bernstein-Fourier approximants summarized in Fig~\ref{fig:Mapping}, which relates the solutions to the different Problems presented in this paper.
\begin{figure}[htb]
	\centering
	\includegraphics[width=3.25in]{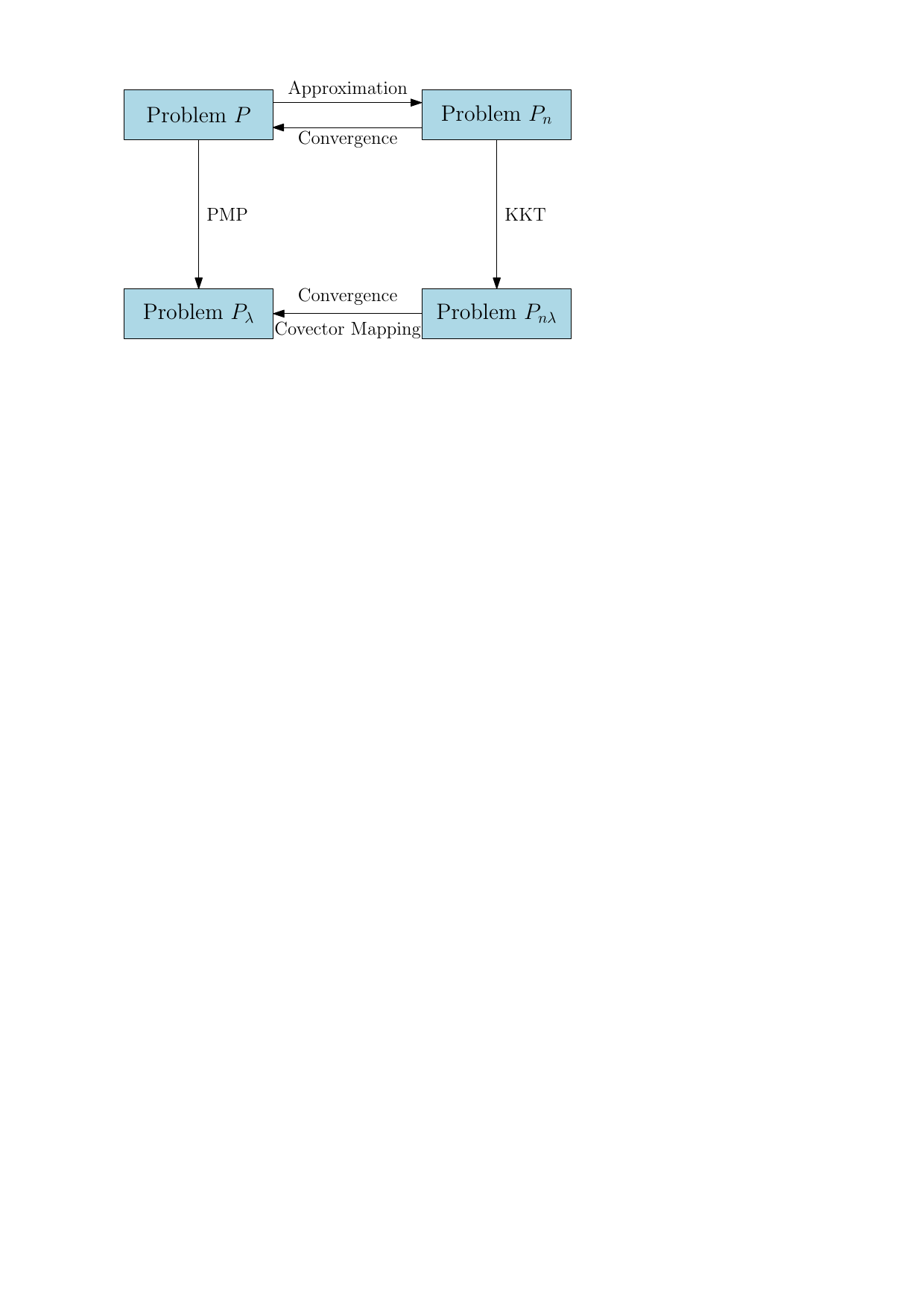}
	\caption{Schematic of the Covector Mapping Theorem for mixed Bernstein-Fourier approximants.}
	\label{fig:Mapping}
\end{figure}
We begin by making the following assumptions regarding Problem~$P_{\lambda}$:
\begin{assumption}
	\label{ass:lip_d}
	The functions $E, F, \textup{\textbf{f}}, \textup{\textbf{e}}$, and \textup{$\textbf{h}$} are continuously differentiable with respect to their arguments, and their gradients are Lipschitz continuous over the domain.
\end{assumption}
\begin{assumption}
	\label{ass:exis_d}
	Problem~$P_{\lambda}$ admits an optimal solution \textup{$(\textbf{x}^{*}(t),\textbf{u}^{*}(t),\bm{\lambda}^{*}(t),\bm{\mu}^{*}(t),\bm{\nu}^{*})$ that satisfies $\textbf{x}^{*}(t)\in \mathcal{C}^{1}_{m_{\textbf{x}}}$, $\textbf{u}^{*}(t)\in \mathcal{C}^{0}_{m_{\textbf{u}}}$, $\bm{\lambda}^{*}(t)\in \mathcal{C}^{1}_{m_{\textbf{x}}}$, $\bm{\mu}^{*}(t)\in \mathcal{C}^{0}_{m_{\textbf{h}}}$ on $[0,1]$, and $\bm{\nu}^{*} \in \mathbb{R}^{m_{\textbf{e}}}$}.
\end{assumption}
\begin{assumption}
	\label{ass:state}
	The inequality constraint function \textup{$\textbf{h}(\textbf{x}(t),\textbf{u}(t))$} does not contain pure state constraints. 
\end{assumption}
Note that these assumptions are more restrictive than the ones in Sec.~\ref{sec:primal}. 
Therefore, some problems which can be solved directly cannot be analyzed using the results of this section.
For example, the numerical problem in Sec.~\ref{sec:observe} can be solved directly. It violates Assumption~\ref{ass:state}, however, so we cannot verify that the solution satisfies the necessary conditions of Problem $P_{\lambda}$ because the costates become discontinuous.

\rev{Extend the primal decomposition of Thm.~\ref{th:feas} to the multipliers: let \textup{$\bm{\lambda}_{\textbf{p}} \in \mathcal{P}^{r_{\lambda}}_{m_{\textbf{x}}}$} ($r_{\lambda} \geq 2$) and \textup{$\bm{\mu}_{\textbf{p}} \in \mathcal{P}^{r_{\mu}}_{m_{\textbf{h}}}$} ($r_{\mu} \geq 1$) be arbitrary, with \textup{$\bm{\lambda}_{\textbf{g}} \dfn \bm{\lambda} - \bm{\lambda}_{\textbf{p}}$} and \textup{$\bm{\mu}_{\textbf{g}} \dfn \bm{\mu} - \bm{\mu}_{\textbf{p}}$}, and let \textup{$\delta_{\bm{\lambda}}^{n}$, $\delta_{\dot{\bm{\lambda}}}^{n}$, $\delta_{\bm{\mu}}^{n}$} be the corresponding bounds of Eqs.~\eqref{eq:delta_f}--\eqref{eq:delta_fd}, with $r_{\lambda}$, $r_{\mu}$ in place of $r$; the primal quantities \textup{$\delta_{\textbf{x}}^{n}$, $\delta_{\dot{\textbf{x}}}^{n}$, $\delta_{\textbf{u}}^{n}$} are as in Thm.~\ref{th:feas}.}

\begin{theorem}[Feasibility of Problem $P_{n\lambda}$]
	\label{th:feas_d}
	Let \textup{$(\textbf{x}(t),\textbf{u}(t),\bm{\lambda}(t),\bm{\mu}(t),\bm{\nu})$} be a feasible solution to Problem $P_{\lambda}$, which exists by \rev{Assumption~\ref{ass:exis_d}}, with the multiplier decomposition and error bounds \textup{$\delta_{\bm{\lambda}}^{n}$, $\delta_{\dot{\bm{\lambda}}}^{n}$, $\delta_{\bm{\mu}}^{n}$} introduced above.
	Let $\delta_{P}$ be as in~\eqref{eq:del_P_n}, and
	\textup{\begin{align}
		\label{eq:del_D}
		\delta_{D} = C_{D}\max\Big\{\delta_{\textbf{x}}^{n} , \delta_{\textbf{u}}^{n} , \delta_{\bm{\lambda}}^{n} , \delta_{\bm{\mu}}^{n} , \frac{1}{\sqrt{n}} \rev{, \delta_{\dot{\bm{\lambda}}}^{n} , W_{\textbf{u}_{\textbf{g}}}\Big(\frac{1}{\sqrt{n}}\Big) , W_{\bm{\mu}_{\textbf{g}}}\Big(\frac{1}{\sqrt{n}}\Big) , \frac{n_{B}^{2}}{n_{t}} , \frac{n_{F}^{2}}{n_{t}}} \Big\}
	\end{align}}
	where $C_{D}$ is a positive constant independent of $n_{B}$, $n_{F}$, and $n_{t}$.
	Problem $P_{n\lambda}$ then admits a feasible solution \textup{$(\textbf{x}_{n},\textbf{u}_{n},\bm{\lambda}_{n},\bm{\mu}_{n},\bar{\bm{\nu}})$} for an arbitrary $n \in \mathbb{Z}^{+}$\rev{; if, moreover, $n_{B}^{2}/n_{t} \rightarrow 0$ and $n_{F}^{2}/n_{t} \rightarrow 0$ as $n \rightarrow \infty$, then $\delta_{P}^{n}, \delta_{D}^{n} \rightarrow 0$}.
\end{theorem}
The proof is deferred to Appendix~\ref{sec:app3}. 
\begin{remark}
	\label{rem:coupling}
	\rev{The coupling $n_{B}^{2}/n_{t}, n_{F}^{2}/n_{t} \rightarrow 0$ is a quadrature-resolution requirement rather than a structural one: the dual stationarity conditions are enforced as sums over the $n_{t}$ sample nodes, and because their integrands contain the basis derivatives, matching these sums to the continuous conditions incurs errors of order $n_{B}^{2}/n_{t}$ and $n_{F}^{2}/n_{t}$. Under this coupling the tolerance~\eqref{eq:del_D} vanishes, and thereby realizes the requirement $\lim_{n \rightarrow \infty}\delta_{D} = 0$ imposed in the definition of Problem $P_{n\lambda}$.}

	\rev{This condition and the sampling condition of Remark~\ref{rem:nt} act on the same three orders but differ in kind. The sampling condition, $n_{t} > 2 n_{F}$, is imposed at each fixed order: it resolves the Fourier block on the node grid and governs the choice of $n_{t}$ when Problem $P_{n}$ is solved at a given order. The present condition instead constrains the joint growth of $n_{t}$, $n_{B}$, and $n_{F}$ as $n \rightarrow \infty$, and it is required only for the dual convergence guarantees of this section; the primal results of Sec.~\ref{sec:primal} need only the sampling condition. It is the stronger of the two and subsumes the sampling condition asymptotically, since $n_{F}^{2}/n_{t} \rightarrow 0$ implies $n_{t} > 2 n_{F}$ for all sufficiently large $n$.}

	\rev{The coupling is mild in practice: since the decision-vector dimension is independent of $n_{t}$, it is enough to let $n_{t}$ grow faster than $\max(n_{B},n_{F})^{2}$, refining the sampling grid at no cost in optimization variables.}
\end{remark}
Proving the consistency of the solution requires an additional assumption that ensures the convergence of the approximated solution.
\begin{assumption}
	\label{ass:conv_d}
	There exist \textup{$\textbf{x}^{\infty}(t)\in\mathcal{C}^{1}_{m_{\textbf{x}}}$, $\textbf{u}^{\infty}(t)\in\mathcal{C}^{0}_{m_{\textbf{u}}}$,  $\bm{\lambda}^{\infty}(t)\in\mathcal{C}^{1}_{m_{\textbf{x}}}$, $\bm{\mu}^{\infty}(t)\in\mathcal{C}^{0}_{m_{\textbf{h}}}$ on $[0,1]$}, and \textup{$\bar{\bm{\nu}}^{\infty}\in \mathbb{R}^{m_{\textbf{e}}}$}, such that\rev{, uniformly on $[0,1]$,}
	\textup{\begin{equation}
		\lim\limits_{n\rightarrow\infty} \left(\textbf{x}_{n}(t),\rev{\dot{\textbf{x}}_{n}(t),}\textbf{u}_{n}(t),\frac{\bm{\lambda}_{n}(t)}{w},\frac{\bm{\mu}_{n}(t)}{w},\bar{\bm{\nu}}\right) = (\textbf{x}^{\infty}(t),\rev{\dot{\textbf{x}}^{\infty}(t),}\textbf{u}^{\infty}(t),\bm{\lambda}^{\infty}(t),\bm{\mu}^{\infty}(t),\bar{\bm{\nu}}^{\infty}).
	\end{equation}}
\end{assumption}
\rev{Assumption~\ref{ass:conv_d} extends Assumption~\ref{ass:conv} to the dual variables. Note that no convergence of the multipliers' derivatives is postulated: only the regularity of the limits, $\bm{\lambda}^{\infty} \in \mathcal{C}^{1}_{m_{\textbf{x}}}$ and $\bm{\mu}^{\infty} \in \mathcal{C}^{0}_{m_{\textbf{h}}}$, enters the analysis.}
\rev{Here, an optimal solution of Problem~$P_{n\lambda}$ (resp.\ $P_{\lambda}$) denotes a solution of its conditions whose primal pair $(\textbf{x}_{n},\textbf{u}_{n})$ (resp.\ $(\textbf{x},\textbf{u})$) is optimal for Problem~$P_{n}$ (resp.\ $P$).}
\begin{theorem}
\label{th:consis_d}
The limit of the sequence of optimal solutions to Problem $P_{n\lambda}$ satisfying Assumption~\ref{ass:conv_d} is a solution to Problem~$P_{\lambda}$.
\end{theorem}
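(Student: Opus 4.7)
The plan is to show that the hypothetical limit point $(\mathbf{x}^{\infty}, \mathbf{u}^{\infty}, \bm{\lambda}^{\infty}, \bm{\mu}^{\infty}, \bar{\bm{\nu}}^{\infty})$ given by Assumption~\ref{ass:conv_d} satisfies every condition of Problem $P_{\lambda}$, by passing to the limit in each of the conditions that define Problem $P_{n\lambda}$. The structure parallels the consistency argument in~\cite{cichella_consistency_2022} for the Bernstein-only case, with two differences: (i) the convergence of the primal and dual approximants is now guaranteed by Theorem~\ref{th:FB_conv1} and Corollary~\ref{cor:diff} for the mixed basis, and (ii) the bounds $\delta_P^n$ and $\delta_D$ decay as $n\to\infty$ because each summand $\delta_{\mathbf{x}}^n, \delta_{\mathbf{u}}^n, \delta_{\bm{\lambda}}^n, \delta_{\bm{\mu}}^n, \delta_{\dot{\mathbf{x}}}^n$ tends to zero.

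First, I would dispose of the primal conditions. Using Theorem~\ref{th:consis} already established for Problem $P_n$, the limiting pair $(\mathbf{x}^{\infty},\mathbf{u}^{\infty})$ satisfies \eqref{eq:dyn_P}--\eqref{eq:in_P}; this step reuses Assumption~\ref{ass:lip_d} (which strengthens Assumption~\ref{ass:lip}) and the fact that $\delta_P^n\to 0$. Next, I would treat the complementary slackness and dual feasibility in~\eqref{eq:in_eq_Pnl}. Using Lipschitz continuity of $\mathbf{h}$ and the uniform convergence of $\mathbf{x}_n, \mathbf{u}_n, \bm{\mu}_n/w$ on the nodes $\{t_k\}$, passing to the limit in $\Vert \bm{\mu}_n^T(t_k)\mathbf{h}(\mathbf{x}_n(t_k),\mathbf{u}_n(t_k))\Vert \leq \delta_D/n_t$ and $\bm{\mu}_n(t_k)\geq -(\delta_D/n_t)\mathbf{1}$ yields $(\bm{\mu}^{\infty})^T\mathbf{h}(\mathbf{x}^{\infty},\mathbf{u}^{\infty})=0$ and $\bm{\mu}^{\infty}\geq 0$ pointwise, which are the conditions in~\eqref{eq:mu_h_0}. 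The closure conditions~\eqref{eq:closure1}--\eqref{eq:closure2} follow directly by passing to the limit in~\eqref{eq:close1}--\eqref{eq:close2}, again because $\delta_D\to 0$ and the partial derivatives $E_{\mathbf{x}(\cdot)}, \mathbf{e}_{\mathbf{x}(\cdot)}$ are continuous by Assumption~\ref{ass:lip_d}.

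The main obstacle is deriving the continuous adjoint equation~\eqref{eq:lamb_dot} and the Hamiltonian stationarity~\eqref{eq:L_u} from the stationarity bounds~\eqref{eq:stat}. My plan is to exploit the linearity of $\mathbf{x}_n, \mathbf{u}_n, \bm{\lambda}_n, \bm{\mu}_n$ in their respective mixed coefficients so that the partial derivatives of $\mathcal{L}_n$ decompose into weighted sums of $\mathcal{L}_{n,\mathbf{x}}$-type and $\mathcal{L}_{n,\mathbf{u}}$-type terms multiplied by the basis values $b_{j,n}(t_k)$, $\cos(2\pi j t_k)$, and $\sin(2\pi j t_k)$. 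For the control coefficients there is no derivative coupling, so the six stationarity bounds in~\eqref{eq:stat} for $\bar{\mathbf{c}}_{u,k}, \mathbf{a}_{u,k}, \mathbf{b}_{u,k}$ can be read as asserting that the inner product of $\mathcal{L}_{n,\mathbf{u}}/w$ against every mixed Bernstein-Fourier test function of order $\leq n$ tends to zero; combined with completeness of the mixed basis in $L^2$ on $[0,1]$, continuity of $\mathcal{L}_{\mathbf{u}}$ under the limit (via Assumption~\ref{ass:lip_d}), and Theorem~\ref{lem:int} for the Riemann-sum-to-integral passage, this forces $\mathcal{L}_{\mathbf{u}}(\mathbf{x}^{\infty},\mathbf{u}^{\infty},\bm{\lambda}^{\infty},\bm{\mu}^{\infty}) \equiv 0$.

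For the state coefficients the situation is harder because $\partial \mathcal{L}_n/\partial \bar{\mathbf{c}}_{x,k}$, $\partial \mathcal{L}_n/\partial \mathbf{a}_{x,k}$, and $\partial \mathcal{L}_n/\partial \mathbf{b}_{x,k}$ all pick up a contribution from $\bm{\lambda}_n^T(t_k)\dot{\mathbf{x}}_n(t_k)$ via the Bernstein differentiation matrix $\mathbf{D}$ and the Fourier differentiation rule~\eqref{eq:dFB_app}. I would apply discrete summation by parts (the discrete analog of integration by parts) to transfer the derivative from $\mathbf{x}_n$ onto $\bm{\lambda}_n$, generating interior terms that converge to $\int_0^1 \bm{\phi}^T(t)\big(\dot{\bm{\lambda}}^T(t) + \mathcal{L}_{\mathbf{x}}(\cdot)\big)dt$ for arbitrary mixed-basis test vectors $\bm{\phi}$, plus boundary terms $\bm{\lambda}_n^T(0)$ and $\bm{\lambda}_n^T(1)$ which get absorbed into the closure bounds~\eqref{eq:close1}--\eqref{eq:close2} already handled above. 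The residual, divided by $w$, is bounded by $\delta_D/w$, which by~\eqref{eq:del_D} and $w=1/(n_t+1)$ tends to zero as $n\to\infty$ provided $n_t$ grows appropriately with $n$. Invoking density of the mixed basis in $L^2$ once more yields the adjoint ODE~\eqref{eq:lamb_dot} pointwise a.e., and continuity of both sides upgrades this to everywhere on $[0,1]$. The delicate point to verify is that the Bernstein-Fourier differentiation matrices commute with the limit in the sense required for summation by parts, which I would handle by appealing to the exactness of the Fourier integral~\eqref{eq:F_int} and Lemma~\ref{lem:B_int} to control the Bernstein quadrature error.
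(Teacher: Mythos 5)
Your overall route is essentially the paper's (Appendix D): the primal conditions are inherited from Theorem~\ref{th:consis}; the conditions~\eqref{eq:mu_h_0} follow from~\eqref{eq:in_eq_Pnl} together with density of the uniform nodes and continuity of the limit functions (the paper runs this as a contradiction along a subsequence $t_{k_{n}}\rightarrow t'$, and note that the factor $1/n_{t}$ in~\eqref{eq:in_eq_Pnl} is exactly what cancels the $w$ hidden in $\bm{\mu}_{n}=w\bar{\bm{\mu}}_{n}$); the closure conditions~\eqref{eq:close1}--\eqref{eq:close2} pass to the limit to give~\eqref{eq:closure1}--\eqref{eq:closure2}; and the stationarity bounds~\eqref{eq:stat} are converted into weak-form integral conditions against the basis functions, with integration by parts moving the derivative onto the costate and the boundary terms cancelled by the just-obtained closure conditions. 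Your localization step via completeness of the mixed basis in $L^{2}$ plus continuity is, if anything, a cleaner justification than the paper's appeal to linear independence of the finite families $\{b_{k,n}\}$ and $\{c_{k}\}$, since orthogonality to finitely many test functions alone does not force the integrand to vanish; what is really used is that the integral conditions hold for every order, so the limit integrand is orthogonal to a family with dense span.

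There is, however, one concrete misstep in the state-stationarity argument: the claim that the residual ``divided by $w$'' is bounded by $\delta_{D}/w$, which ``tends to zero provided $n_{t}$ grows appropriately.'' This is false. From~\eqref{eq:del_D}, $\delta_{D}\geq C_{D}/\sqrt{n}$ with $n\leq n_{t}$, while $w=1/(n_{t}+1)$, so $\delta_{D}/w\geq C_{D}(n_{t}+1)/\sqrt{n_{t}}\rightarrow\infty$; increasing $n_{t}$ makes the quotient larger, not smaller. No division by $w$ is needed (or admissible) here: after substituting $\bm{\lambda}_{n}(t_{k})=w\bar{\bm{\lambda}}_{n}(t_{k})$ and $\bm{\mu}_{n}(t_{k})=w\bar{\bm{\mu}}_{n}(t_{k})$ (the scaled variables that Assumption~\ref{ass:conv_d} asserts converge), each partial derivative of $\mathcal{L}_{n}$ is already an $O(1)$ quantity---a $w$-weighted Riemann sum plus endpoint terms---that converges directly to the corresponding continuous weak-form expression, exactly as in the computation underlying Theorem~\ref{th:feas_d}; the tolerance $\delta_{D}$ in~\eqref{eq:stat} is calibrated at that scale, and only~\eqref{eq:in_eq_Pnl} carries the extra $1/n_{t}$, while the closure conditions have the $1/w$ built into their left-hand sides. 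With the division by $w$ removed (and hence no extra growth condition on $n_{t}$), your summation-by-parts and density argument goes through and matches the paper's conclusion.
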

The proof is deferred to Appendix~\ref{sec:app4}. 

\begin{theorem}[Covector Mapping Theorem]
\label{th:CMT}
    \begin{revblock}Let \textup{$\{(\textbf{x}_{n}(t),\textbf{u}_{n}(t),\bm{\lambda}_{n}(t),\bm{\mu}_{n}(t),\bar{\bm{\nu}})\}$} be a sequence of optimal solutions to Problem~$P_{n\lambda}$ satisfying Assumption~\ref{ass:conv_d}. Then
	\begin{align}
		\left(\textbf{x}_{n}(t),\textbf{u}_{n}(t),\frac{\bm{\lambda}_{n}(t)}{w},\frac{\bm{\mu}_{n}(t)}{w},\bar{\bm{\nu}}\right)
	\end{align}
	converges, uniformly on $[0,1]$, to an optimal solution of Problem~$P_{\lambda}$. Conversely, under the hypotheses of Thm.~\ref{th:feas_d}, the $w$-scaled mixed approximants of $(\bm{\lambda}^{*},\bm{\mu}^{*})$ of any optimal solution of Problem~$P_{\lambda}$, constructed through Eqs.~\eqref{eq:x_app}--\eqref{eq:mu_app} as in the proof of Thm.~\ref{th:feas_d}, yield a feasible solution of Problem~$P_{n\lambda}$ for every $n$, with $\delta_{D}\rightarrow 0$ under the node scaling of Remark~\ref{rem:coupling}.\end{revblock}
\end{theorem}
\begin{proof}
	\rev{The first statement follows from Thm.~\ref{th:consis_d} together with Thm.~\ref{th:consis}, which applies since Assumption~\ref{ass:conv_d} contains the convergence of Assumption~\ref{ass:conv} and the primal pairs are optimal for Problem~$P_{n}$. For the second, follow the proof of Thm.~\ref{th:feas_d} applied to the solution furnished by Assumption~\ref{ass:exis_d}, for which $\delta_{P}^{n}$ and $\delta_{D}^{n}$ in Eqs.~\eqref{eq:del_P_n} and~\eqref{eq:del_D} are computed. In this sense, discretization and dualization commute in the limit (cf. Fig.~\ref{fig:Mapping}).}
\end{proof}

\section{Numerical Examples}
\label{sec:sim}

This section presents illustrative examples demonstrating the versatility and efficacy of the proposed Mixed Bernstein-Fourier solution.
\textcolor{black}{
The first example, a disturbance-rejection scenario, serves as a theoretical validation ground to demonstrate the method's unique capability to handle combined transient and periodic behaviors using uniform sampling.}
The second example addresses a nonlinear observer trajectory optimization problem, highlighting superior constraint satisfaction.
\textcolor{black}{
Finally, we present an engineering application, the autonomous mine countermeasures (MCM) search.}

\subsection{\textcolor{black}{Illustrative Example: Disturbance Rejection}}
\label{sec:slew_shake}

\textcolor{black}{To validate the properties of the mixed basis, we consider a disturbance rejection problem designed to test the decoupling of transient and periodic dynamics. The system is a double integrator subject to a sinusoidal disturbance field $d(t)$:}
\begin{equation}
    \textcolor{black}{\ddot{x}(t) = u(t) + d(t), \quad d(t) = A \sin(\omega t)}
\end{equation}
\textcolor{black}{with $A=5.0$ and $\omega=\pi$ rad/s, i.e., frequency of 0.5 Hz. The objective is to drive the state from rest at the origin, $x_0=0, v_0=0$, to a target state, $x_f=10, v_f=0$, over $T=10$ s while minimizing the control effort $J = \int_{0}^{T} u(t)^2 dt$.}

\subsubsection{\textcolor{black}{Comparative Analysis}}
\textcolor{black}{We compared the proposed mixed Bernstein-Fourier approach against a Bernstein-only solver and a Chebyshev pseudospectral method with Clenshaw-Curtis quadrature.
To ensure a fair comparison of computational efficiency, all three formulations were implemented within the CasADi framework~\cite{Andersson2019}. This ensures that the reported differences in runtime and node counts are attributable to the mathematical properties of the approximation methods rather than implementation details of the solver.
The results, summarized in Table~\ref{tab:slew_comparison}, demonstrate the advantages of the mixed basis.}

\begin{table}[ht]
\centering
\caption{\textcolor{black}{Performance Comparison for the Disturbance Rejection Example.}}
\label{tab:slew_comparison}
\begin{tabular}{lccc}
\hline
\textbf{Method} & \textbf{Variables} & \textbf{CPU, ms} & \textbf{Max Error} \\ \hline
\textcolor{black}{Bernstein-Fourier} & \textcolor{black}{7} & \textcolor{black}{10} & \textcolor{black}{$5.5 \times 10^{-3}$} \\
\textcolor{black}{Bernstein-only} & \textcolor{black}{501} & \textcolor{black}{50} & \textcolor{black}{$1.1 \times 10^{-1}$} \\
\textcolor{black}{Pseudospectral} & \textcolor{black}{27} & \textcolor{black}{11} & \textcolor{black}{$7.5 \times 10^{-3}$} \\ \hline
\end{tabular}
\end{table}

\textcolor{black}{The Bernstein-Fourier method achieves the lowest maximum error, $5.5 \times 10^{-3}$, 
with the highest computational efficiency, utilizing only 7 optimization variables. \rev{The DC coefficient $a_{0}$ is omitted here, since the zero-mean disturbance induces no constant offset in the periodic component; the representation therefore uses $n_{B}+1=5$ Bernstein coefficients together with $\{a_{1},b_{1}\}$, for $7$ variables in total.} 
The approximation orders were selected to leverage the decoupled nature of the basis: the Bernstein order, $n_B=4$, captures the smooth, transient baseline trajectory, while 
the Fourier component, $n_F=1$ (a single harmonic), \rev{is placed at the disturbance frequency, the fifth harmonic over the horizon $[0,T]$, rather than at the fundamental mode}, matching the spectral content of the periodic disturbance. This separation allows for a parsimonious representation where $n_B$ and $n_F$ can be adjusted independently to match the distinct dynamical features of the problem. 
Critically, this high accuracy is attained using purely uniform temporal discretization, a regime in which polynomial methods typically 
suffer from the Runge phenomenon.
In comparison, the pseudospectral method, despite employing non-uniform Chebyshev nodes specifically to avoid such instabilities, requires nearly four times as many variables (27) to achieve a similar error of $7.5 \times 10^{-3}$, with comparable computation time.
The Bernstein-only approach fails to match the accuracy of the mixed method, exhibiting an error of $1.1 \times 10^{-1}$, two orders of magnitude larger, while requiring 501 variables and five times the CPU time.
These results validate that the mixed basis effectively decouples the transient and periodic dynamics, achieving \textit{spectral-like accuracy with uniform sampling} without the computational burden of high-order approximations.
The resulting optimal trajectories are shown in Fig.~\ref{fig:slew_primal}.}

\begin{figure}[ht]
	\begin{center}
		\begin{subfigure}{0.49\textwidth}
			\begin{center}
				\includegraphics[width=\linewidth]{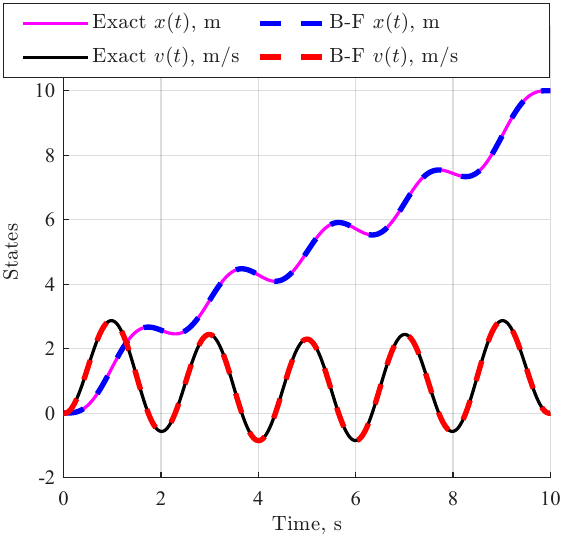}
				\caption{Optimal states.}
				\label{fig:slew_pos}
			\end{center}
		\end{subfigure}
		~
		\begin{subfigure}{0.49\textwidth}
			\begin{center}
				\includegraphics[width=\linewidth]{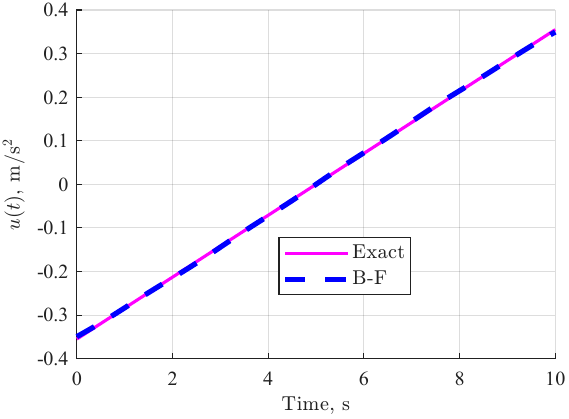}
				\caption{Optimal control effort $u(t)$.}
				\label{fig:slew_ctrl}
			\end{center}
		\end{subfigure}
		\caption{\textcolor{black}{Primal solution comparison. The mixed basis achieves smooth control reconstruction with minimal coefficients.}}
		\label{fig:slew_primal}
	\end{center}
\end{figure}

\subsubsection{\textcolor{black}{Dual Results}}
\textcolor{black}{The explicit time dependence of the disturbance $d(t)$ renders the system non-autonomous, meaning the Hamiltonian $\mathcal{H}(t)$ is not constant. Since the standard PMP necessary condition, $\mathcal{H} = \text{const}$, applies only to autonomous systems, direct verification of optimality using the covector mapping~\ref{th:CMT} is not possible in this form. To address this, we employ a state-augmentation technique to recover an autonomous structure. We introduce an auxiliary state variable $x_{n+1}(t) = \tau(t)$ such that $\dot{\tau} = 1$ and $\tau(0)=0$. This transformation maps the original non-autonomous dynamics into an autonomous system in a higher-dimensional state space.}

According to the PMP, the augmented Hamiltonian $\mathcal{H}_{aug}$ for this autonomous system is defined as:
\begin{equation}
    \mathcal{H}_{aug} = \mathcal{H}(x, u, t) + \lambda_\tau \cdot 1
\end{equation}
where $\lambda_\tau$ is the costate associated with time. For an optimal trajectory, the autonomous Hamiltonian $\mathcal{H}_{aug}$ must be constant. This yields a Hamiltonian consistency condition:
\begin{equation}
    \lambda_\tau(t) = -\mathcal{H}(t).
\end{equation}

Figure~\ref{fig:slew_dual} presents the validation of this condition. We computed the Hamiltonian $\mathcal{H}(t)$ derived from the Bernstein-Fourier discrete solution and compared it against the exact time costate $\lambda_\tau(t)$ obtained by explicitly integrating the augmented dual dynamics $\dot{\lambda}_\tau = -\partial \mathcal{H} / \partial t$ using  MATLAB\textsuperscript{\textregistered}'s built-in \textit{bvp4c} solver. The approximated overlap between $-\mathcal{H}$ and the analytical $\lambda_\tau$ confirms that the proposed method correctly captures the dual physics and satisfies the extended covector mapping theorem, even in the presence of strong time-varying disturbances.

\begin{figure}[ht]
	\begin{center}
		\begin{subfigure}{0.46\textwidth}
			\begin{center}
				\includegraphics[width=\linewidth]{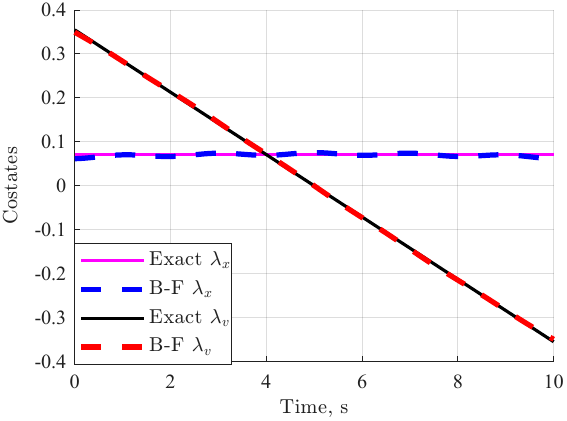}
				\caption{\textcolor{black}{Optimal costates $\lambda_x, \lambda_v$.}}
				\label{fig:slew_lam}
			\end{center}
		\end{subfigure}
		~
		\begin{subfigure}{0.49\textwidth}
			\begin{center}
				\includegraphics[width=\linewidth]{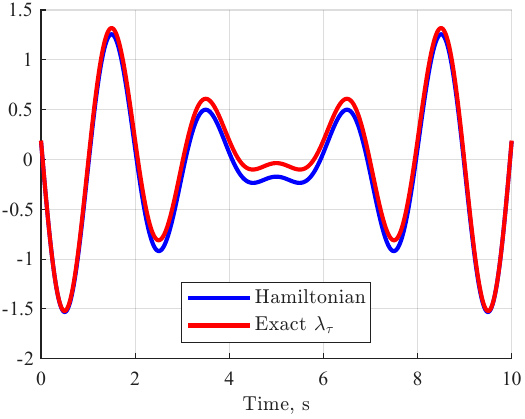}
				\caption{\textcolor{black}{Hamiltonian consistency check.}}
				\label{fig:slew_ham}
			\end{center}
		\end{subfigure}
		\caption{\textcolor{black}{Dual variable validation. The method recovers the exact costates and verifies the Hamiltonian consistency condition $\lambda_\tau \approx -\mathcal{H}(t)$.}}
		\label{fig:slew_dual}
	\end{center}
\end{figure}

\subsection{Optimization of Observer Trajectory}
\label{sec:observe}
We address the problem of finding optimal trajectories for localizing a stationary target using an unmanned aerial vehicle (UAV) equipped with bearing-only measurements in three-dimensional scenarios, as presented in~\cite{ponda_trajectory_2009}.
\subsubsection{Scenario Details}
The UAV is assumed to maintain a fixed altitude and constant speed.
Hence, it has the following equations of motion:
\begin{subequations}
	\begin{align}
			\dot{x}_{A}(t) & = V \cos \psi(t), \\
			\dot{y}_{A}(t) & = V \sin \psi(t) ,\\
			\dot{z}_{A}(t) & = 0, 
		\end{align}
	\label{eq:UAV_EOM}
\end{subequations}
where $(x_{A},y_{A},z_{A})$ is the UAV's position, $\psi$ is its heading angle, and $V$ is its speed.
The UAV is assumed to have an onboard vision system that provides the following elevation angle, $\alpha$, and azimuth angle, $\beta$, from the UAV to the target:
\begin{equation}
	\begin{bmatrix}
		\alpha_{k} \\ \beta_{k}
	\end{bmatrix} = \textbf{g}_{k} + \bm{\omega}_{k} = 
	\begin{bmatrix}
		\tan^{-1} \left( \frac{\Delta z_{k}}{\sqrt{\Delta x_{k}^{2} + \Delta y_{k}^{2}}}  \right) \\
		\tan^{-1} \left( \frac{\Delta y_{k}}{\Delta x_{k}} \right) - \psi_{k}
	\end{bmatrix} + \bm{\omega}_{k}
\end{equation}
where $(\Delta x_{k},\Delta y_{k},\Delta z_{k})$ is the position of the target, which is assumed to be time-invariant, relative to the UAV at $t_{k}$, $\{\bm{\omega}_{k}\}_{k=1}^{f}$ is a zero-mean white Gaussian noise sequence with a known constant covariance matrix $\Sigma$, and $t_{f}$ is the given final time of the scenario.

The main objective of this example is to generate a trajectory that maximizes the information about the target's location obtained from the bearing-only measurements.
We also add a 100 m no-fly zone around the target to reduce the detection probability of the UAV. 
Note that this constraint \textcolor{black}{is a pure state constraint, which} violates Assumption~\ref{ass:state}. \textcolor{black}{While this prevents applying the covector mapping theorem~\ref{th:CMT} (which assumes mixed constraints), numerical results demonstrate that the proposed method effectively handles such constraints via NLP relaxation.} Thus, we \textcolor{black}{defer the validation of the dual variables to the other examples and focus here on the primal solution quality.}
Selecting an objective function to maximize the information gathered along a trajectory is nontrivial because it is difficult to quantify this information.
We follow the practice of~\cite{ponda_trajectory_2009}, therefore, and use the trace of the Cramér–Rao lower bound (CRLB) to improve the data collected by the UAV regarding the target.
The CRLB is computed as the inverse of the celebrated Fisher information matrix (FIM), which satisfies in this case:
\begin{equation}
	J_{k} = \sum_{k=0}^{f} (G_{k})^{T} \Sigma^{-1} G_{k}.
\end{equation}
where $G_{k}$ is the Jacobian matrix of $\textbf{g}_{k}$.

In addition to the no-fly zone described above, the other scenario parameters are as follows.
Its duration is set to be 60 s.
The UAV starts its motion at $\begin{bmatrix}0 & 0 & 20 \end{bmatrix}^{T}$ m with a constant speed of 30 m/s, heading towards the x-axis.
The stationary target is located at $\begin{bmatrix}400 & 0 & 0 \end{bmatrix}^{T}$~m.
The UAV's vision system measures the angles to detected objects within its field of view at a frequency of 10 Hz, which yields 600 equally spaced time steps for this scenario.
\textcolor{black}{This fixed hardware sampling rate motivates the use of the mixed Bernstein-Fourier approach, which inherently supports uniform time discretization. Pseudospectral methods, while efficient, rely on non-uniform node distributions, e.g., Legendre-Gauss-Lobatto (LGL), that require post-processing interpolation to match the sensor's 10 Hz sampling rate, thereby introducing potential approximation errors.}
The measurement noise covariance matrices are assumed to be time-invariant, satisfying
$\Sigma = \text{diag} \{ (3\cdot10^{-3})^{2} , (3\cdot10^{-3})^{2} \}$.
\subsubsection{Results}
We now present the numerical solutions to the above problem.
We use MATLAB\textsuperscript{\textregistered}'s built-in \textit{fmincon} function as an off-the-shelf numerical solver to solve the NLP problem.
We set $n_{B} = 70$ and $n_{F} = 12$, i.e., we use $70^{th}$ order Bernstein polynomials and $12^{th}$ order Fourier series to construct candidate UAV trajectories using $n_{t} = n_{B} = 70$. 
Because we need to determine the horizontal position, $(x_{A},y_{A})$, and the heading of the UAV at each time step, the state vector has three entries. Therefore, the decision vector has 288 decision variables (71 from the Bernstein part and 25 from the Fourier part of the approximation, for each of the three states) to be determined by the solver.
For comparison, we also used a regular Bernstein-based solver, as presented in~\cite{cichella_optimal_2021}.
The current study indicates that Bernstein approximants of at least $282^{nd}$ order are required for the \textit{fmincon} function to converge when the number of function evaluations is limited to a million. 
Note that this is almost three times more than the number of variables used by the proposed mixed approximation, as the total length of the decision vector is 846\rev{: the three coefficients fixed by the given initial conditions through the Bernstein endpoint-value property are known a priori and are therefore excluded from the count. For the mixed approximant, in contrast, the initial condition couples all coefficients of each state linearly and fixes no single coefficient a priori, so all 288 coefficients remain decision variables}.
\textcolor{black}{In terms of computation time, the mixed method converged in approximately 20 s when initialized with zero Fourier coefficients (``cold start''), demonstrating the solver's ability to identify periodic structures from scratch. When initialized with the regulated LS solution (``warm start''), the time dropped to 3.5 s. This compares favorably to the Bernstein-only baseline, which took 8 s for the lower-order (but infeasible) solution, and 160 s for a high-order ($n_B=1000$) solution that still violated the constraint by 4 m.}
Additionally, Bernstein-only approximants achieve safety constraints only asymptotically. \textcolor{black}{When the strict 100~m no-fly zone was applied to the Bernstein-only solver, the optimization failed to converge to a feasible solution (maximum violation $\approx 12$ m with $n_B=282$).} Consequently, \textcolor{black}{to allow for a valid cost comparison,} we extended the no-flight zone radius to 112~m for these approximants.

We initialize both solutions using the same Bernstein coefficients, meaning that the Fourier part of the mixed solution is inactive. 
It corresponds to a path that directly flies toward the target until it reaches the no-flight zone's boundary and then encircles the target's position.
\textcolor{black}{Figure~\ref{fig:Traj} presents a top view of the optimal trajectories. The black line denotes the mixed Bernstein-Fourier approximation. The green line denotes the Bernstein-only approximation, which required three times more decision variables and resulted in a $7\%$ performance degradation. The black line illustrates the consistency of the high-order mixed approximation. The red asterisk and dashed line denote the target and no-fly zone, respectively.}
For all trajectories, the UAV flies toward the target's location and starts orbiting the no-flight zone, similar to the initial guess. 
However, when using the mixed Bernstein-Fourier approximation, the UAV is able to follow the no-fly zone boundary much closer.
Moreover, the mixed Bernstein-Fourier approximation produces a value of $6.76\cdot10^{-4}$ for the trace of the CRLB, while the Bernstein-based approximation produces a value of $7.23\cdot10^{-4}$, a nearly $7\%$ performance degradation despite using three times more decision variables.
Finally, to show the consistency of the mixed Bernstein-Fourier approximation given by Th.~\ref{th:consis}, we include a solution based on a significantly higher order of approximation, with $n_{B} = 600$ and $n_{F} = 300$.
For this case, the mixed Bernstein-Fourier approximation produces a value of $6.64\cdot10^{-4}$ for the trace of the CRLB, a performance improvement of only $1.8\%$ compared to the lower order mixed approximation.
\begin{figure}[htb]
	\centering
	\includegraphics[width=3.25in]{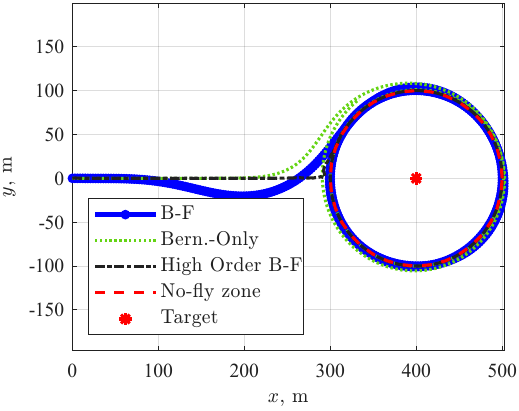}
	\caption{\textcolor{black}{Top-down view of optimal UAV trajectories using different numerical approximations.}}
	\label{fig:Traj}
\end{figure}

Fig.~\ref{fig:Pos} presents the optimal solution for the horizontal positioning of the UAV using the mixed Bernstein-Fourier basis functions.
The optimal trajectory consists of both periodic and non-periodic components. In the first ten seconds, there is a transient response which is non-periodic. After ten seconds, however,  a periodic behavior emerges, exhibiting circular motion with period $T = 2\pi r / v = 20.94$ s.
This periodic behavior matches the known results in~\cite{ponda_trajectory_2009} for target localization using bearing-only measurements. Note that in~\cite{hung_range-based_2020}, similar periodic behaviors appeared for moving target localization using range-only measurements.
\begin{figure}[htb]
	\centering
	\includegraphics[width=3.25in]{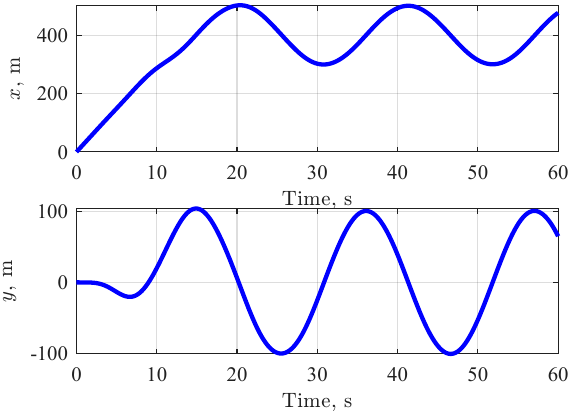}
	\caption{Optimal UAV motion in the x and y coordinate directions.}
	\label{fig:Pos}
\end{figure}
\subsection{\textcolor{black}{Autonomous Mine Countermeasures (MCM) Search}}
\label{sec:AUV}
\textcolor{black}{
To demonstrate the practical utility of the approach for periodic robotic missions, we consider an optimal motion planning problem for an Autonomous Underwater Vehicle (AUV) conducting an MCM search~\cite{kragelund_generalized_2021}.}

\subsubsection{\textcolor{black}{Problem Formulation}}
\textcolor{black}{
This scenario represents a time-critical engineering problem governed by strict hardware and operational constraints. Unlike traditional MCM operations that utilize active sonar, which can alert hostile forces or trigger acoustic mines, this mission requires a covert operation using a passive optical sensor, e.g., a high-resolution downward-facing camera.}

\textcolor{black}{
This sensor choice imposes a critical discretization constraint: optical systems operate at fixed frame rates. To accurately model the probability of detection, $P_{det}$, the optimization nodes must be aligned with the physical exposure times. \rev{Non-uniform node distributions, such as the LGL nodes of pseudospectral methods, cluster near the interval endpoints,} misaligning with the camera frames and necessitating post-processing interpolation that degrades the fidelity of the detection integral. The proposed Bernstein-Fourier method naturally supports the required uniform temporal sampling.}

\textcolor{black}{The AUV is modeled as a planar kinematic system:
\begin{subequations}
	\begin{align}
		\dot{x}(t) &= v(t) \cos \psi(t), \\
		\dot{y}(t) &= v(t) \sin \psi(t),
	\end{align}
	\label{eq:AUV_EOM}
\end{subequations}
where the states $(x, y)$ are the position, and the control inputs are the speed, $v$, and heading angle, $\psi$.
The speed is bounded by $v \in [v_{\min}, v_{\max}]$.
To maintain feasible turn commands, the turn rate is derived and constrained as a path constraint:
\begin{equation}
	\dot{\psi} = \frac{\dot{x}\ddot{y} - \dot{y}\ddot{x}}{v^2}, \quad \dot{\psi}^2 \leq u_{max}^2.
	\label{eq:AUV_turnrate}
\end{equation}
Unlike the optimal observer trajectory example, this turn rate constraint involves both states and control inputs, satisfying Assumption~\ref{ass:state}\rev{. Equivalently, treating the heading $\psi$ as a state and the turn rate $\omega = \dot\psi$ as a control casts Eq.~\eqref{eq:AUV_turnrate} as a mixed constraint $\omega^{2} - u_{\max}^{2} \le 0$ that depends on the control, so no pure-state constraint arises. The constraint is enforced pointwise at the sample nodes of Problem~$P_{n}$, so no rational representation of the mixed approximant is ever formed; since $v \geq v_{\min} > 0$ on the feasible set, Eq.~\eqref{eq:AUV_turnrate} is Lipschitz there, as the analysis of Secs.~\ref{sec:primal}--\ref{sec:dual} requires.}}
\textcolor{black}{The objective is to maximize coverage while regulating speed:
\begin{equation}
    J = \int_{0}^{t_f} \big[ - D(x,y) +   \rho_v (v - V_{nom})^2 \big] \, dt,
    \label{eq:AUV_cost}
\end{equation}
where the instantaneous detection rate is $D(x,y) = \sum_{j=1}^{M} w_j \gamma_j$, with $\gamma_j = \lambda_0 \exp(-d_j^2 / 2\sigma^2)$ representing the Gaussian sensor response at cell $j$, $d_j$ being the distance from the AUV to cell center $j$, $\sigma$ the sensor standard deviation, and $\lambda_0$ the base detection rate. $\rho_v$ is the speed penalty coefficient.}

\subsubsection{\textcolor{black}{Results and Validation}}
\textcolor{black}{The scenario considers a $100 \times 100$ m search area discretized into a $20 \times 20$ grid ($M = 400$ cells) with uniform prior probability. The AUV operates at nominal speed $V_{nom} = 2$ m/s with bounds $[1.8, 2.2]$ m/s and maximum turn rate $u_{max} = 0.5$ rad/s. Sensor parameters are $\sigma = 10$ m, $\alpha = 0.9$, and $\lambda_0 = 0.3$. The speed penalty coefficient is $\rho_v = 0.01$. \rev{The states $(x, y)$ are each parameterized with $n_B = 40$ Bernstein and $n_F = 10$ Fourier modes, so that each coordinate carries 62 coefficients and the decision vector has 124 entries in total.}}

\textcolor{black}{
We use MATLAB\textsuperscript{\textregistered}'s built-in \textit{fmincon} function as an off-the-shelf numerical solver to solve the NLP problem.
The optimizer is initialized with a lawnmower pattern, a standard coverage trajectory consisting of parallel survey legs connected by 180° turns. However, because the scenario is limited in duration, the lawnmower pattern cannot complete a thorough search.
This pattern exhibits inherent periodicity in the cross-track direction, making it well-suited for the mixed Bernstein-Fourier parameterization: the Fourier modes capture the oscillatory sweeping motion while the Bernstein polynomials handle the along-track progression.}

\textcolor{black}{Figure~\ref{fig:mcm_trajectory} compares the initial lawnmower pattern, yielding a 56.2\% coverage, with the optimized trajectory, yielding 82.7\% coverage, representing a 26.5 percentage point improvement. The optimizer reshapes the path to maximize time spent in high-value regions while maintaining kinematic feasibility: speed remains within 5\% of nominal (1.94--2.10 m/s) and the turn rate constraint is satisfied throughout.}
\rev{For comparison, a Bernstein-only parameterization initialized with the same lawnmower pattern, with orders $n_B = 200$ and $n_t = n_B + 1$ collocation nodes~\cite{cichella_optimal_2021}, attains the same coverage of 82.7\%. The mixed basis thus achieves this coverage at a fivefold lower approximation order ($n_B = 40$ versus $200$), with native uniform sampling matched to the sensor frame rate.}

\begin{figure}[ht]
	\begin{center}
		\textcolor{black}{
		\begin{subfigure}{0.49\textwidth}
			\begin{center}
				\includegraphics[width=\linewidth]{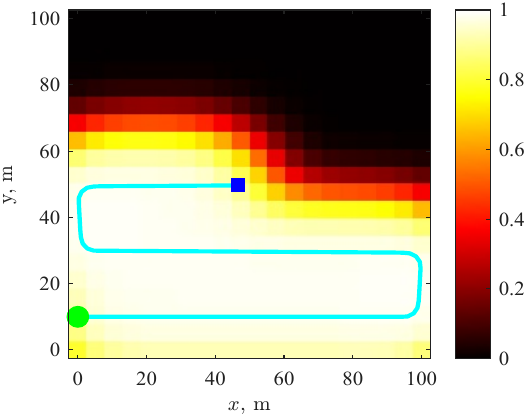}
				\caption{\textcolor{black}{Initial lawnmower (56.2\% coverage).}}
				\label{fig:mcm_initial}
			\end{center}
		\end{subfigure}
		~
		\begin{subfigure}{0.49\textwidth}
			\begin{center}
				\includegraphics[width=\linewidth]{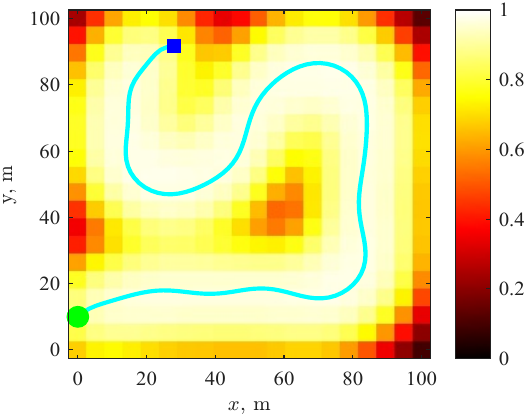}
				\caption{\textcolor{black}{Optimized trajectory (82.7\% coverage).}}
				\label{fig:mcm_optimized}
			\end{center}
		\end{subfigure}
		}
		\caption{\textcolor{black}{MCM coverage comparison with detection probability shown by color.}}
		\label{fig:mcm_trajectory}
	\end{center}
\end{figure}

The Lagrangian of the Hamiltonian for this system is:
\begin{equation}
	\mathcal{L} = -D(x,y) + \rho_v(v - V_{nom})^2 + \lambda_x v\cos\psi + \lambda_y v\sin\psi + \mu(\dot{\psi}^2 - u_{max}^2)
	\label{eq:AUV_hamiltonian}
\end{equation}
where $\lambda_x$ and $\lambda_y$ are the costates. The adjoint residuals are:
\begin{equation}
	r_{\lambda_{x}} = \dot{\lambda}_x - \frac{\partial D}{\partial x}, \quad 
    r_{\lambda_{y}} = \dot{\lambda}_y - \frac{\partial D}{\partial y},
	\label{eq:AUV_adjoint}
\end{equation}
where $\lambda_x(t_f) = \lambda_y(t_f) = 0$.
The stationarity conditions are $\partial \mathcal{L} / \partial v = 0$, and $\partial \mathcal{L} / \partial \psi = 0$.
Table~\ref{tab:mcm_cmt} summarizes \rev{a direct check of} the first-order necessary conditions \rev{(stationarity and adjoint residuals) for the computed solution}. 
Figure~\ref{fig:mcm_cmt} presents these residuals over time and confirms the satisfaction of the necessary conditions.

\begin{table}[ht]
\centering
\caption{CMT verification metrics for the MCM trajectory.}
\label{tab:mcm_cmt}
\begin{tabular}{lcc}
\hline
\textbf{Condition} & \textbf{Metric} & \textbf{Value} \\ \hline
Stationarity $\partial \mathcal{L} / \partial v$ & RMS & 0.0048 \\
Stationarity $\partial \mathcal{L} / \partial \psi$ & RMS & 0.0075 \\
Adjoint equations $(\lambda_x, \lambda_y)$ & $\| r_\lambda \|$ RMS & $1.3 \times 10^{-5}$ \\
\hline
\end{tabular}
\end{table}

\begin{figure}[htbp]
    \centering
    \textcolor{black}{\includegraphics[width=3.25in]{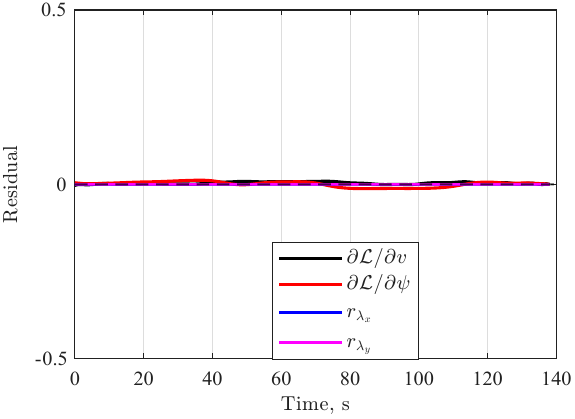}}
    \caption{\textcolor{black}{Stationarity and adjoint residuals for MCM verification.}}
    \label{fig:mcm_cmt}
\end{figure}
\section{Conclusions}
\label{sec:concl}
This paper introduced a mixed Bernstein-Fourier approximation methodology that combines Bernstein polynomials and Fourier series for optimal trajectory generation. 
We established theoretical results, including explicit proofs of uniform convergence for approximated functions, their derivatives, and integrals, accompanied by detailed error bound analyses demonstrating the robustness and accuracy of the method. 
A regulated least squares formulation was developed to determine approximation coefficients, enhancing practical flexibility and numerical stability.

Beyond these theoretical contributions, we applied the mixed Bernstein-Fourier approach to optimal motion planning problems, demonstrating significant advantages over traditional methods for systems exhibiting periodic behavior. 
By proving feasibility and consistency of the approximated optimal control solutions, we established theoretical guarantees on accuracy. 
We further extended the covector mapping theorem to this setting, providing rigorous assurances for dual variable approximations crucial for validating optimality conditions derived from PMP.

\textcolor{black}{Numerical examples reinforced these theoretical insights. The disturbance rejection example demonstrated that the mixed approach achieves spectral-like accuracy with purely uniform temporal discretization, a regime in which pseudospectral methods require non-uniform nodes to avoid the Runge phenomenon. The observer trajectory and MCM search examples illustrated effective handling of periodic dynamics and constraints, achieving high precision with significantly fewer variables than Bernstein-only approximations, while CMT verification confirmed satisfaction of the necessary optimality conditions.}

\textcolor{black}{Future work will extend this framework to handle unknown or time-varying periods by incorporating frequency-scaling parameters directly into the optimization, thereby mitigating the dependence on the fixed time horizon. Additionally, while this work focused on uniform sampling for onboard implementation, the least-squares formulation (Sec.~\ref{sec:LS}) suggests inherent robustness to sensor noise and missing data, which warrants further investigation in stochastic scenarios.}
\section*{Acknowledgment}
The authors would like to thank the Office of Naval Research Science of Autonomy Program under Grant No.\ N0001425GI01545 and Consortium for Robotics Unmanned Systems Education and Research at the Naval Postgraduate School for sponsoring this research.
This project was also supported in part by an appointment to the NRC Research Associateship Program at the Naval Postgraduate School, administered by the Fellowships Office of the National Academies of Sciences, Engineering, and Medicine.

\appendix
\section{Proof of Th.~\ref{th:feas}}
\label{sec:app1}
\begin{proof}
	Our objective is to show that there exist Bernstein and Fourier coefficients such that the constraints of Problem $P_{n}$ are satisfied with the given $\delta_{P}$, where there exist $\textbf{x}(t)$ and $\textbf{u}(t)$ which comprise a feasible solution to Problem $P$ according to the underlying assumptions.
	Define equispaced time nodes $t_{k} \dfn \frac{k}{n_{t}}$ such that $\textbf{x}_{k}$ and $\textbf{u}_{k}$ are defined according to \eqref{eq:x_app} and \eqref{eq:u_app}, respectively, whose construction is readily achieved using the coefficients presented in Eqs.~\eqref{eq:g_k}--\eqref{eq:b_k}.
	Similarly to~\eqref{eq:per}, the solution of problem P can be presented as:
	\begin{align}
		\label{eq:decomp_xu}
		\textbf{x}(t)  = \textbf{x}_{\textbf{g}}(t) + \textbf{x}_{\textbf{p}}(t), \quad
		\textbf{u}(t)  = \textbf{u}_{\textbf{g}}(t) + \textbf{u}_{\textbf{p}}(t)
	\end{align}
	where the subscript $\textbf{p}$ denotes the periodic part inside the function, and $\textbf{g}$ denotes the nonperiodic part of it.
	
	We first show that the dynamic constraint is satisfied. We add the nulled expression $\textbf{f}(\textbf{x}(t_{k}),\textbf{u}(t_{k})) - \dot{\textbf{x}}(t_{k})$ from Eq.~\eqref{eq:dyn_P} to get that
	\begin{align}
		\label{eq:proof_dyn_P}
		&\Vert\dot{\textbf{x}}_{k} - \textbf{f}(\textbf{x}_{k},\textbf{u}_{k})\Vert \leq
		\Vert\dot{\textbf{x}}_{k} - \dot{\textbf{x}}(t_{k}) \Vert + \Vert \textbf{f}(\textbf{x}(t_{k}),\textbf{u}(t_{k})) -  \textbf{f}(\textbf{x}_{k},\textbf{u}_{k})\Vert \leq \notag \\		&C_{1}W_{\dot{\textbf{x}}_{\textbf{g}}}\Big(\frac{1}{\sqrt{n_{B}}}\Big)
		+  \frac{A_{1}\log n_{F}}{n_{F}^{r-1}} W_{{\textbf{x}}_{\textbf{p}}^{(r)}}\Big(\rev{\frac{1}{n_{F}}}\Big) + L_{\textbf{f}_{\textbf{x}}}\left[ C_{0}W_{\textbf{x}_{\textbf{g}}}\Big(\frac{1}{\sqrt{n_{B}}}\Big) \rev{+} \frac{A_{0}\log n_{F}}{n_{F}^{r}} W_{\textbf{x}_{\textbf{p}}^{(r)}}\Big(\rev{\frac{1}{n_{F}}}\Big) \right] \notag\\
        & + L_{\textbf{f}_{\textbf{u}}}\left[ \bar{C}_{0}W_{\textbf{u}_{\textbf{g}}}\Big(\frac{1}{\sqrt{n_{B}}}\Big) \rev{+} \frac{\bar{A}_{0}\log n_{F}}{n_{F}^{\rev{\tilde{r}}}} W_{\textbf{u}_{\textbf{p}}^{(\rev{\tilde{r}})}}\Big(\rev{\frac{1}{n_{F}}}\Big)\right] = \delta_{\dot{\textbf{x}}}^{n} + L_{\textbf{f}_{\textbf{x}}} \delta_{\textbf{x}}^{n} + L_{\textbf{f}_{\textbf{u}}} \delta_{\textbf{u}}^{n}
	\end{align}
	where $A_{0}$, $\bar{A}_{0}$, $A_{1}$, $C_{0}$, $\bar{C}_{0}$, and $C_{1}$ are positive constants independent of $n_{B}$, $n_{F}$, and $n_{t}$; 
	$W_{\textbf{x}_{\textbf{g}}}(\cdot)$, $W_{\dot{\textbf{x}}_{\textbf{g}}}(\cdot)$, $\textcolor{black}{W_{\textbf{x}_{\textbf{p}}^{(r)}}(\cdot)}$,  $W_{\textbf{u}_{\textbf{g}}}(\cdot)$, and $\textcolor{black}{W_{\textbf{u}_{\textbf{p}}^{(r)}}(\cdot)}$,   are moduli of continuity for $\textbf{x}_{\textbf{g}}(t)$, $\dot{\textbf{x}}_{\textbf{g}}(t)$, $\textcolor{black}{\textbf{x}_{\textbf{p}}^{(r)}(t)}$, $\textbf{u}_{\textbf{g}}(t)$, and $\textcolor{black}{\textbf{u}_{\textbf{p}}^{(r)}(t)}$,  respectively; and
	$L_{\textbf{f}_{\textbf{x}}}$ and $L_{\textbf{f}_{\textbf{u}}}$ are the Lipschitz constants of $\textbf{f}$ with respect to $\textbf{x}$ and $\textbf{u}$, respectively.
	We set $C_{P} > (1 + L_{\textbf{f}_{\textbf{x}}} + L_{\textbf{f}_{\textbf{u}}})$ such that Eq.~\eqref{eq:del_P_n} holds.
	The second inequality in~\eqref{eq:proof_dyn_P} holds using Thm.~\ref{th:FB_conv1} and Corollary~\ref{cor:diff}.
	\rev{Under the hypotheses of Thm.~\ref{th:feas}, $W_{\textbf{x}_{\textbf{g}}}(1/\sqrt{n_{B}})$, $W_{\dot{\textbf{x}}_{\textbf{g}}}(1/\sqrt{n_{B}})$, and $W_{\textbf{u}_{\textbf{g}}}(1/\sqrt{n_{B}})$ vanish by continuity of $\textbf{x}_{\textbf{g}}$, $\dot{\textbf{x}}_{\textbf{g}}$, and $\textbf{u}_{\textbf{g}}$; $W_{\textbf{x}_{\textbf{p}}^{(r)}}(1/n_{F})$ and $W_{\textbf{u}_{\textbf{p}}^{(\tilde{r})}}(1/n_{F})$ vanish by continuity of $\textbf{x}_{\textbf{p}}^{(r)}$ and $\textbf{u}_{\textbf{p}}^{(\tilde{r})}$; and $\log n_{F}/n_{F}^{r-1} \rightarrow 0$ since $r \geq 2$. 
    Hence every component of Eq.~\eqref{eq:del_P_n} vanishes as $n = \min(n_{t},n_{B},n_{F}) \rightarrow \infty$.}
	
	Similarly, the inequality constraint satisfies
	\begin{align}
		\Vert\textbf{h}(\textbf{x}_n(t_{k}),\textbf{u}_n(t_k)) - \textbf{h}(\textbf{x}(t_{k}),\textbf{u}(t_{k}))\Vert \leq  L_{\textbf{h}_{\textbf{x}}}\delta_{\textbf{x}}^{n} + L_{\textbf{h}_{\textbf{u}}}\delta_{\textbf{u}}^{n}
	\end{align}
	where $L_{\textbf{h}_{\textbf{x}}}$ and $L_{\textbf{h}_{\textbf{u}}}$ are the Lipschitz constant of $\textbf{h}$ with respect to $\textbf{x}$ and $\textbf{u}$, respectively.
	Hence,
	\begin{align}
		\textbf{h}(\textbf{x}_n(t_{k}),\textbf{u}_n(t_k))  \leq \textbf{h}(\textbf{x}(t_{k}),\textbf{u}(t_{k})) +  (L_{\textbf{h}_{\textbf{x}}}\delta_{\textbf{x}}^{n} + L_{\textbf{h}_{\textbf{u}}}\delta_{\textbf{u}}^{n})\cdot \textbf{1} \leq  (L_{\textbf{h}_{\textbf{x}}}\delta_{\textbf{x}}^{n} + L_{\textbf{h}_{\textbf{u}}}\delta_{\textbf{u}}^{n}) \cdot \textbf{1}
	\end{align}
	where the second inequality holds by using~\eqref{eq:in_P} and by setting $C_{P} > L_{\textbf{h}_{\textbf{x}}} + L_{\textbf{h}_{\textbf{u}}}$.
	
	Similarly, for the equality constraint, we add the nulled expression $\textbf{e}(\textbf{x}(0),\textbf{x}(1))$ from Eq.~\eqref{eq:eq_P} to the constraints in Eq.~\eqref{eq:eq_Pn} to get that
	\begin{align}
		\Vert\textbf{e}(\textbf{x}_n(0),\textbf{x}_n(1)) - \textbf{e}(\textbf{x}(0),\textbf{x}(1))\Vert \leq  L_{\textbf{e}} \left[ C_{0}W_{\textbf{x}_{\textbf{g}}}\Big(\frac{1}{\sqrt{n_{B}}}\Big) +  \frac{A_{0}\log n_{F}}{n_{F}^{r}} W_{\textbf{x}_{\textbf{p}}^{(r)}}\Big(\rev{\frac{1}{n_{F}}}\Big) \right]  = L_{\textbf{e}} \delta_{\textbf{x}}^{n}
	\end{align}
	where $L_{\textbf{e}}$ is the Lipschitz constant of $\textbf{e}$. 
	Then, the constraint holds as we set $C_{P} > L_{\textbf{e}}$.
\end{proof}
\section{Proof of Th.~\ref{th:consis}}
\label{sec:app2}
\begin{proof}
	The proof is divided into three steps: 
	1) we show that $(\textbf{x}^{\infty}(t),\textbf{u}^{\infty}(t))$ is a feasible solution to Problem~$P$;
	2) we prove that 
	\begin{equation}
		\label{eq:I_con}
		\lim\limits_{n\rightarrow\infty } J_{n} (\textbf{x}_{n}(t),\textbf{u}_{n}(t)) = J(\textbf{x}^{\infty}(t),\textbf{u}^{\infty}(t));
	\end{equation}
	and 3) we show that $J(\textbf{x}^{\infty}(t),\textbf{u}^{\infty}(t)) = J(\textbf{x}^{*}(t),\textbf{u}^{*}(t))$.
	\paragraph*{Step 1)} 
	If $(\textbf{x}^{\infty}(t),\textbf{u}^{\infty}(t))$ satisfy the constraints of Problem~$P$, then it is a feasible solution.
	First, we show it satisfies the dynamic constraint, that is,
	\begin{equation}
		\dot{\textbf{x}}^{\infty}(t) - \textbf{f}(\textbf{x}^{\infty}(t),\textbf{u}^{\infty}(t)) = \textbf{0}.
	\end{equation}
	\begin{revblock}Fix $t^{\prime} \in [0,1]$ arbitrary and let $t_{k} = \frac{k}{n_{t}}$, $k=0,\dots,n_{t}$, denote the constraint nodes of Problem~$P_{n}$. For each $n$, let $t_{k(n)}$ be a node nearest $t^{\prime}$, so that $\vert t_{k(n)} - t^{\prime}\vert \leq \frac{1}{2n_{t}} \rightarrow 0$ as $n \rightarrow \infty$. By the triangle inequality,
	\begin{align}
		\Vert \dot{\textbf{x}}^{\infty}(t^{\prime}) &- \textbf{f}(\textbf{x}^{\infty}(t^{\prime}),\textbf{u}^{\infty}(t^{\prime})) \Vert \leq{} \Vert \dot{\textbf{x}}^{\infty}(t^{\prime}) - \dot{\textbf{x}}^{\infty}(t_{k(n)}) \Vert + \Vert \dot{\textbf{x}}^{\infty}(t_{k(n)}) - \dot{\textbf{x}}_{n}(t_{k(n)}) \Vert  \notag\\
        &+\Vert \dot{\textbf{x}}_{n}(t_{k(n)}) - \textbf{f}(\textbf{x}_{n}(t_{k(n)}),\textbf{u}_{n}(t_{k(n)})) \Vert + \Vert \textbf{f}(\textbf{x}_{n}(t_{k(n)}),\textbf{u}_{n}(t_{k(n)})) - \textbf{f}(\textbf{x}^{\infty}(t_{k(n)}),\textbf{u}^{\infty}(t_{k(n)})) \Vert \notag \\
		&+ \Vert \textbf{f}(\textbf{x}^{\infty}(t_{k(n)}),\textbf{u}^{\infty}(t_{k(n)})) - \textbf{f}(\textbf{x}^{\infty}(t^{\prime}),\textbf{u}^{\infty}(t^{\prime})) \Vert.
		\label{eq:five_term}
	\end{align}
	As $n \rightarrow \infty$: the first term vanishes by continuity of $\dot{\textbf{x}}^{\infty}$ and $t_{k(n)} \rightarrow t^{\prime}$; the second is bounded by $\max_{t\in[0,1]}\Vert \dot{\textbf{x}}_{n}(t) - \dot{\textbf{x}}^{\infty}(t)\Vert$ and vanishes by Assumption~\ref{ass:conv}; the third is bounded by $\delta_{P}^{n}$ by Eq.~\eqref{eq:dyn_Pn}, since $t_{k(n)}$ is a constraint node, and $\delta_{P}^{n} \rightarrow 0$ by Thm.~\ref{th:feas}; the fourth is bounded by $L_{\textbf{f}_{\textbf{x}}} \max_{t\in[0,1]}\Vert \textbf{x}_{n} - \textbf{x}^{\infty}\Vert + L_{\textbf{f}_{\textbf{u}}} \max_{t\in[0,1]}\Vert \textbf{u}_{n} - \textbf{u}^{\infty}\Vert$ by Assumption~\ref{ass:lip} and vanishes by Assumption~\ref{ass:conv}; the fifth vanishes by Lipschitz continuity of $\textbf{f}$ (Assumption~\ref{ass:lip}) together with continuity of $\textbf{x}^{\infty}$ and $\textbf{u}^{\infty}$ at $t^{\prime}$. The left-hand side is independent of $n$; hence it equals zero. Since $t^{\prime}$ was arbitrary, $(\textbf{x}^{\infty}(t),\textbf{u}^{\infty}(t))$ satisfy the dynamic constraint.\end{revblock}
	
	\rev{The inequality constraint in~\eqref{eq:in_P} follows from the same five-term estimate applied componentwise, with Eq.~\eqref{eq:in_Pn} in place of Eq.~\eqref{eq:dyn_Pn}.}
	
	The equality constraint in~\eqref{eq:eq_P} we get by using Assumption~\ref{ass:conv} and Eq.~\eqref{eq:eq_Pn} that
	\begin{equation}
		\textbf{e}(\textbf{x}^{\infty}(0),\textbf{x}^{\infty}(1)) = 
		\lim\limits_{n\rightarrow\infty }
		\textbf{e}(\textbf{x}_{n}(0),\textbf{x}_{n}(1)) = \textbf{0}
	\end{equation}
	as $\lim_{n\rightarrow\infty} \delta_{P}^{n} = 0$.
	\paragraph*{Step 2)} 
	To show that~\eqref{eq:I_con} is satisfied, we first need to show that the running cost satisfies
	\begin{align}
		\lim\limits_{n\rightarrow\infty } w \sum_{k=0}^{n_{t}} F(\textbf{x}_{n}(t_{k}),\textbf{u}_{n}(t_{k}))  = \int_{0}^{1} F(\textbf{x}^{\infty}(t),\textbf{u}^{\infty}(t))dt.
	\end{align}
	\begin{revblock}To establish this, split
	\begin{align}
		\Big\vert w \sum_{k=0}^{n_{t}} F(\textbf{x}_{n}(t_{k}),\textbf{u}_{n}(t_{k})) - \int_{0}^{1} F(\textbf{x}^{\infty},\textbf{u}^{\infty})dt \Big\vert \leq{}& w \sum_{k=0}^{n_{t}} \big\vert F(\textbf{x}_{n}(t_{k}),\textbf{u}_{n}(t_{k})) - F(\textbf{x}^{\infty}(t_{k}),\textbf{u}^{\infty}(t_{k})) \big\vert \notag \\
		&+ \Big\vert w \sum_{k=0}^{n_{t}} \varphi(t_{k}) - \int_{0}^{1} \varphi(t)\,dt \Big\vert
	\end{align}
	where $\varphi \dfn F(\textbf{x}^{\infty}(\cdot),\textbf{u}^{\infty}(\cdot))$ is a fixed continuous function. Since $w(n_{t}+1) = 1$, the first term is bounded by $L_{F_{\textbf{x}}} \max_{t\in[0,1]}\Vert \textbf{x}_{n} - \textbf{x}^{\infty}\Vert + L_{F_{\textbf{u}}} \max_{t\in[0,1]}\Vert \textbf{u}_{n} - \textbf{u}^{\infty}\Vert$ by Assumption~\ref{ass:lip} and vanishes by Assumption~\ref{ass:conv}; the second term is bounded by $C_{I}W_{\varphi}(1/\sqrt{n_{t}})$ by Lemma~\ref{lem:B_int} applied to $\varphi$ (with $t_{f}=1$ and $N=n_{t}$), and vanishes as $n_{t}\rightarrow\infty$.\end{revblock}
	
	The following relation regarding the terminal cost also holds using Assumptions~\ref{ass:lip} and~\ref{ass:conv}
	\begin{equation}
		\lim\limits_{n\rightarrow\infty} E(\textbf{x}_{n}(0),\textbf{x}_{n}(1)) = E(\textbf{x}^{\infty}(0),\textbf{x}^{\infty}(1)).
	\end{equation}
	\paragraph*{Step 3)} 
	To show the result of this step we first construct approximants to $(\textbf{x}^{*}(t),\textbf{u}^{*}(t))$, the optimal solution of Problem~$P$, using the relations in Eqs.~\eqref{eq:g_k}--\eqref{eq:b_k}. We denote them by $(\tilde{\textbf{x}}_n^{*}(t),\tilde{\textbf{u}}_n^{*}(t))$.
	Then, using the same arguments as in the previous step \rev{with Assumption~\ref{ass:conv} replaced by the uniform convergence of $(\tilde{\textbf{x}}_n^{*},\dot{\tilde{\textbf{x}}}{}_n^{*},\tilde{\textbf{u}}_n^{*})$ to $(\textbf{x}^{*},\dot{\textbf{x}}^{*},\textbf{u}^{*})$, guaranteed by Thm.~\ref{th:FB_conv1} and Corollary~\ref{cor:diff} for the decomposition of Thm.~\ref{th:feas}}, we can show that 
	\begin{equation}
    \label{eq:97}
		\lim\limits_{n\rightarrow\infty } J_{n} (\tilde{\textbf{x}}_n^{*}(t),\tilde{\textbf{u}}_n^{*}(t)) = J(\textbf{x}^{*}(t),\textbf{u}^{*}(t)).
	\end{equation}
	
	Having already shown the feasibility of the solution $(\textbf{x}^{\infty}(t),\textbf{u}^{\infty}(t))$ to Problem~$P$, we now use the result of the second step to obtain
	\begin{align}
    \label{eq:98}
		J(\textbf{x}^{*}(t),\textbf{u}^{*}(t)) \leq J(\textbf{x}^{\infty}(t),\textbf{u}^{\infty}(t)) = \lim\limits_{n\rightarrow\infty} J_{n} (\textbf{x}_{n}(t),\textbf{u}_{n}(t)).
	\end{align}
	\rev{However, following the argument in the proof of Thm.~\ref{th:feas} applied to $(\textbf{x}^{*},\textbf{u}^{*})$, the solution furnished by Assumption~\ref{ass:exis}, for which $\delta_{P}^{n}$ in Eq.~\eqref{eq:del_P_n} is computed, the approximants $(\tilde{\textbf{x}}_n^{*},\tilde{\textbf{u}}_n^{*})$ are feasible for Problem~$P_{n}$; optimality of $(\textbf{x}_{n},\textbf{u}_{n})$ for Problem~$P_{n}$ then yields}
	\begin{align}
    \label{eq:99}
        J(\textbf{x}^{\infty}(t),\textbf{u}^{\infty}(t)) \leq 
        \lim\limits_{n\rightarrow\infty} J_{n} (\tilde{\textbf{x}}_n^{*}(t),\tilde{\textbf{u}}_n^{*}(t)) = J(\textbf{x}^{*}(t),\textbf{u}^{*}(t)).
	\end{align}
    Thus, using expressions \eqref{eq:97} and~\eqref{eq:99}, we get
	\begin{align}
		J(\textbf{x}^{*}(t),\textbf{u}^{*}(t))  \leq J(\textbf{x}^{\infty}(t),\textbf{u}^{\infty}(t))  \leq J(\textbf{x}^{*}(t),\textbf{u}^{*}(t))
	\end{align}
	and the result follows by applying the squeeze theorem.
\end{proof}

\section{Proof of Th.~\ref{th:feas_d}}
\label{sec:app3}
\begin{proof}
	Similarly to the proof of Th.~\ref{th:feas}, we construct a feasible solution for Problem~$P_{n\lambda}$ using an existing solution to Problem~$P_{\lambda}$.
	Let $(\textbf{x}(t),\textbf{u}(t),\bm{\lambda}(t),\bm{\mu}(t),\bm{\nu})$ be a feasible solution to Problem $P_{\lambda}$. 
	Define equispaced time nodes $t_{k}\dfn \rev{\frac{k}{n_{t}}}$ for all $k=0,\dots,n_{t}$. The approximated states, $\textbf{x}_{n}(t)$,  and control inputs, $\textbf{u}_{n}(t)$, satisfy Eq.~\eqref{eq:x_app} and~\eqref{eq:u_app}, respectively, and the approximated costates, $\bm{\lambda}_{n}(t)$, and inequality multipliers, $\bm{\mu}_{n}(t)$, satisfy Eq.~\eqref{eq:lamb_app} and~\eqref{eq:mu_app}, respectively, whose construction is readily achieved using the coefficients presented in Eqs.~\eqref{eq:g_k}--\eqref{eq:b_k}.
	The equality multipliers satisfy $\bm{\nu} = \bar{\bm{\nu}}$.
	
	Theorem~\ref{th:feas} shows that $\textbf{x}_{n}(t)$ and $\textbf{u}_{n}(t)$ satisfy Eqs.~\eqref{eq:dyn_Pn}--\eqref{eq:in_Pn}.
	The remainder of this proof shows that $(\textbf{x}_{n}(t),\textbf{u}_{n}(t),\bm{\lambda}_{n}(t),\bm{\mu}_{n}(t),\bar{\bm{\nu}})$ also satisfy Eqs.~\eqref{eq:in_eq_Pnl}--\eqref{eq:close2}.
	First, we define:
	\begin{align}
		\bar{\bm{\lambda}}_{n}(t_{k}) = \frac{\bm{\lambda}_{n}(t_{k})}{w}, \quad
		\bar{\bm{\mu}}_{n}(t_{k}) = \frac{\bm{\mu}_{n}(t_{k})}{w}
	\end{align}
	for all $k = 0,\dots,n_{t}$.
	
	We start by showing that the conditions in~\eqref{eq:in_eq_Pnl} are satisfied.
	Using the definition above for $\bar{\bm{\mu}}_{n}(t_{k})$, we add and subtract the expression
	$w \bm{\mu}^{T}(t_{k}) ( \textbf{h}(\textbf{x}_{n}(t_{k}) , \textbf{u}_{n}(t_{k})) +  \textbf{h}(\textbf{x}(t_{k}) , \textbf{u}(t_{k})))$, so we get
	\begin{align}
		& \left\Vert \bm{\mu}_n^{T}(t_{k}) \textbf{h}(\textbf{x}_n(t_{k}),\textbf{u}_n(t_{k})) \right\Vert = \left\Vert w \bar{\bm{\mu}}_n^{T}(t_{k}) \textbf{h}(\textbf{x}_n(t_{k}),\textbf{u}_n(t_{k})) \right\Vert 
		\leq w \left\| \bm{\mu}^{T}(t_{k}) \textbf{h}(\textbf{x}(t_{k}) , \textbf{u}(t_{k})) \right\|
		\\
		& + w \left\| (\bar{\bm{\mu}}_n^{T}(t_{k}) - \bm{\mu}^{T}(t_{k})) \textbf{h}(\textbf{x}_n(t_{k}),\textbf{u}_n(t_{k})) \right\| 
		+ w \left\| \bm{\mu}^{T}(t_{k}) (\textbf{h}(\textbf{x}_{n}(t_{k}) , \textbf{u}_{n}(t_{k})) -  \textbf{h}(\textbf{x}(t_{k}) , \textbf{u}(t_{k}))) \right\|  \notag
	\end{align}
	where the inequality holds by using the triangle inequality. 
	Note that the first expression in the inequality is nulled using~\eqref{eq:mu_h_0}.
	Moreover, we get that
	\begin{align}
		&\left\| (\bar{\bm{\mu}}_n^{T}(t_{k}) - \bm{\mu}^{T}(t_{k})) \textbf{h}(\textbf{x}_n(t_{k}),\textbf{u}_n(t_{k})) \right\| \leq \delta^{n}_{\bm{\mu}}\textbf{h}_{\max}
		\\
		&\left\| \bm{\mu}^{T}(t_{k}) (\textbf{h}(\textbf{x}_{n}(t_{k}) , \textbf{u}_{n}(t_{k})) -  \textbf{h}(\textbf{x}(t_{k}) , \textbf{u}(t_{k}))) \right\| \leq \bm{\mu}_{\max}(L_{\textbf{h}_{x}} \delta^{n}_{\textbf{x}} + L_{\textbf{h}_{u}} \delta^{n}_{\textbf{u}})
	\end{align}
	where $L_{\textbf{h}_{x}}$ and $L_{\textbf{h}_{u}}$ are the Lipschitz constant of $\textbf{h}$ with respect to $\textbf{x}$ and $\textbf{u}$, respectively, and $\textbf{h}_{\max}$ and $\bm{\mu}_{\max}$ are the maximal values for $\textbf{h}$ and $\bm{\mu}$ on $[0,1]$, respectively, whose existence is guaranteed from Assumptions~\ref{ass:lip} and~\ref{ass:exis_d}.
	Thus, there exists a positive constant $\tilde{C}$ that is independent of $n$, such that
	\begin{equation}
		\left\Vert \bm{\mu}_n^{T}(t_{k}) \textbf{h}(\textbf{x}_n(t_{k}),\textbf{u}_n(t_{k})) \right\Vert \leq w \tilde{C} \max\Big\{ \delta_{\bm{\mu}}^{n} , \delta_{\textbf{x}}^{n}  , \delta_{\textbf{u}}^{n} \Big\}, \quad \forall k=0,\dots,n_{t}
	\end{equation}
	where we choose $C_{D} > \tilde{C}$, thus $\left\Vert \bm{\mu}_n^{T}(t_{k}) \textbf{h}(\textbf{x}_n(t_{k}),\textbf{u}_n(t_{k})) \right\Vert$ converges uniformly as $n \rightarrow \infty$ using $\delta_{D}$ from~\eqref{eq:del_D}.
	Similarly, we add and subtract $w\bm{\mu}(t_{k})$, and get
	\begin{align}
		\bm{\mu}_n(t_{k}) = w \bar{\bm{\mu}}_n(t_{k})\geq  w \bm{\mu}(t_{k}) - 
		w \left\| \bm{\mu}(t_{k}) - \bar{\bm{\mu}}_n(t_{k}) \right\| \bm{1}
		\geq - w \delta^{n}_{\bm{\mu}}\bm{1},
	\end{align}
	which proves that Eq.~\eqref{eq:in_eq_Pnl} holds.
	
	\begin{revblock}The stationarity analysis transfers node sums to integrals; we record the elementary estimate used throughout. For continuous $g$ and continuously differentiable $\psi$ on $[0,1]$,
	\begin{align}
		\label{eq:transfer}
		\Big\vert w\sum_{k=0}^{n_{t}} g(t_{k})\psi(t_{k}) - \int_{0}^{1} g(t)\psi(t)dt \Big\vert
		\leq \Vert\psi\Vert_{\max} W_{g}\Big(\frac{1}{n_{t}}\Big) + \frac{\Vert g\Vert_{\max}\Vert\dot{\psi}\Vert_{\max}}{n_{t}} + \frac{2\Vert g\Vert_{\max}\Vert\psi\Vert_{\max}}{n_{t}}
	\end{align}
	where $\Vert\cdot\Vert_{\max}$ denotes the maximum over $[0,1]$: with $h \dfn g\psi$, replacing $w\sum_{k=0}^{n_{t}}h(t_{k})$ by the left Riemann sum $\frac{1}{n_{t}}\sum_{k=0}^{n_{t}-1}h(t_{k})$ costs at most $2\Vert h\Vert_{\max}/(n_{t}+1)$, the left Riemann sum differs from the integral by at most $W_{h}(1/n_{t})$, and 
    \begin{equation}
        W_{h}(\delta) \leq \Vert\psi\Vert_{\max}W_{g}(\delta) + \Vert g\Vert_{\max}\Vert\dot{\psi}\Vert_{\max}\delta.
    \end{equation}
	The mixed-basis weights satisfy, uniformly in $k$: $\Vert b_{k,n_{B}}\Vert_{\max} \leq 1$, $\Vert\dot{b}_{k,n_{B}}\Vert_{\max} \leq 2n_{B}$, and $\Vert\ddot{b}_{k,n_{B}}\Vert_{\max} \leq 4n_{B}^{2}$ (from $\dot{b}_{k,n} = n(b_{k-1,n-1} - b_{k,n-1})$); $\Vert c_{k}\Vert_{\max}, \Vert s_{k}\Vert_{\max} \leq 1$, $\Vert\dot{c}_{k}\Vert_{\max}, \Vert\dot{s}_{k}\Vert_{\max} \leq 2\pi n_{F}$, and $\Vert\ddot{c}_{k}\Vert_{\max}, \Vert\ddot{s}_{k}\Vert_{\max} \leq 4\pi^{2}n_{F}^{2}$; moreover $w\sum_{k=0}^{n_{t}}\vert\psi(t_{k})\vert \leq 1$ for $\psi \in \{b_{k,n_{B}}, c_{k}, s_{k}\}$ and $\int_{0}^{1}\vert b_{k,n_{B}}\vert dt = \frac{1}{n_{B}+1}$.
	By Assumption~\ref{ass:lip_d} the moduli of the composite integrands reduce to those of the solution, e.g., $W_{F_{\textbf{x}}(\textbf{x}(\cdot),\textbf{u}(\cdot))}(\delta) \leq L_{F_{\textbf{x}}}(W_{\textbf{x}}(\delta) + W_{\textbf{u}}(\delta))$; since $\textbf{x}, \bm{\lambda} \in \mathcal{C}^{1}$ (Assumption~\ref{ass:exis_d}), $W_{\textbf{x}}(1/n_{t})$ and $W_{\bm{\lambda}}(1/n_{t})$ are $O(1/n_{t})$, while $W_{\textbf{u}}(1/n_{t}) \leq W_{\textbf{u}_{\textbf{g}}}(1/\sqrt{n}) + O(1/\sqrt{n})$ and $W_{\bm{\mu}}(1/n_{t}) \leq W_{\bm{\mu}_{\textbf{g}}}(1/\sqrt{n}) + O(1/\sqrt{n})$, using $1/n_{t} \leq 1/\sqrt{n}$ and the $\mathcal{C}^{1}$ regularity of $\textbf{u}_{\textbf{p}}$ and $\bm{\mu}_{\textbf{p}}$.
	Throughout, let $\eta_{n} \dfn \max\big\{\delta_{\textbf{x}}^{n}, \delta_{\textbf{u}}^{n}, \delta_{\bm{\lambda}}^{n}, \delta_{\dot{\bm{\lambda}}}^{n}, \delta_{\bm{\mu}}^{n}, W_{\textbf{u}_{\textbf{g}}}(1/\sqrt{n}), W_{\bm{\mu}_{\textbf{g}}}(1/\sqrt{n}), 1/\sqrt{n}, n_{B}^{2}/n_{t}, n_{F}^{2}/n_{t}\big\}$ collect the components of Eq.~\eqref{eq:del_D}.\end{revblock}

	Next, we prove that the stationarity conditions hold.
	First, we consider
	\begin{align}
		&\left\Vert \frac{\partial \mathcal{L}_{n}}{\partial \bar{\textbf{c}}_{x,0}} \right\Vert  = \Bigg\| E_{\textbf{x}(0)}(\textbf{x}_n(0),\textbf{x}_n(1)) 
		+ w \sum_{k=0}^{n_{t}} F_{\textbf{x}}(\textbf{x}_n(t_{k}),\textbf{u}_n(t_{k})) b_{0,n}(t_{k})
		+ \bar{\bm{\nu}}^{T} \textbf{e}_{\textbf{x}(0)}(\textbf{x}_n(0),\textbf{x}_n(1)) 
		\notag \\
		& 
		+ \sum_{k=0}^{n_{t}} \bm{\lambda}_n^{T}(t_{k})\Big(\textbf{f}_{\textbf{x}}(\textbf{x}_n(t_{k}),\textbf{u}_n(t_{k}))b_{0,n}(t_{k})-\dot{b}_{0,n}(t_{k})\Big)
		+ \sum_{k=0}^{n_{t}} \bm{\mu}_n^{T}(t_{k}) \textbf{h}_{\textbf{x}}(\textbf{x}_n(t_{k}),\textbf{u}_n(t_{k})) b_{0,n}(t_{k})
		\bigg\| \notag \\
		& = \Bigg\| E_{\textbf{x}(0)}(\textbf{x}_n(0),\textbf{x}_n(1)) 
		+ w \sum_{k=0}^{n_{t}} F_{\textbf{x}}(\textbf{x}_n(t_{k}),\textbf{u}_n(t_{k})) b_{0,n}(t_{k})
		+ \bar{\bm{\nu}}^{T} \textbf{e}_{\textbf{x}(0)}(\textbf{x}_n(0),\textbf{x}_n(1)) 
		\notag \\
		& 
		+ w\sum_{k=0}^{n_{t}} \bar{\bm{\lambda}}_n^{T}(t_{k})\Big(\textbf{f}_{\textbf{x}}(\textbf{x}_n(t_{k}),\textbf{u}_n(t_{k}))b_{0,n}(t_{k})-\dot{b}_{0,n}(t_{k})\Big)
		+ w\sum_{k=0}^{n_{t}} \bar{\bm{\mu}}_n^{T}(t_{k}) \textbf{h}_{\textbf{x}}(\textbf{x}_n(t_{k}),\textbf{u}_n(t_{k})) b_{0,n}(t_{k})
		\bigg\| \notag
	\end{align}
	where in the second equality we substitute $\bm{\lambda}_n^{T}(t_{k}) = w \bar{\bm{\lambda}}_n^{T}(t_{k})$ and $\bm{\mu}_n^{T}(t_{k}) = w \bar{\bm{\mu}}_n^{T}(t_{k})$.
	
	\begin{revblock}We add and subtract $w\sum_{k=0}^{n_{t}}F_{\textbf{x}}(\textbf{x}(t_{k}),\textbf{u}(t_{k}))b_{0,n}(t_{k})$. The node-swap term satisfies
	\begin{align}
		\Big\Vert w\sum_{k=0}^{n_{t}}\big[F_{\textbf{x}}(\textbf{x}_{n}(t_{k}),\textbf{u}_{n}(t_{k})) - F_{\textbf{x}}(\textbf{x}(t_{k}),\textbf{u}(t_{k}))\big]b_{0,n}(t_{k})\Big\Vert \leq L_{F_{\textbf{x}}}(\delta_{\textbf{x}}^{n} + \delta_{\textbf{u}}^{n})
	\end{align}
	by Assumption~\ref{ass:lip_d} and $w\sum_{k}\vert b_{0,n}(t_{k})\vert \leq 1$, while the remaining transfer is bounded by Eq.~\eqref{eq:transfer} with the fixed function $g = F_{\textbf{x}}(\textbf{x}(\cdot),\textbf{u}(\cdot))$ and $\psi = b_{0,n}$, whose error terms are $W_{g}(1/n_{t}) + 2F_{\textbf{x},\max}n_{B}/n_{t} + 2F_{\textbf{x},\max}/n_{t}$. Reducing $W_{g}$ as above, we obtain
	\begin{align}
		\label{eq:int_F}
		\left\|w \sum_{k=0}^{n_{t}} F_{\textbf{x}}(\textbf{x}_n(t_{k}),\textbf{u}_n(t_{k})) b_{0,n}(t_{k}) -
		\int_{0}^{1} F_{\textbf{x}}(\textbf{x}(t) , \textbf{u}(t))b_{0,n}(t)dt \right\|\leq \bar{C}_{1}\,\eta_{n}
	\end{align}
	where $\bar{C}_{1}$ is a positive constant independent of $n$, $n_{t}$, $n_{B}$, $n_{F}$, and of the row index (the identical bound holds for every value weight $b_{k,n_{B}}$, $c_{k}$, $s_{k}$).\end{revblock}
	\begin{revblock}For the derivative weight $\dot{b}_{0,n}$ the argument swap is performed on the continuous side. By Eq.~\eqref{eq:transfer} with $g = \bar{\bm{\lambda}}_{n}$, whose derivative is uniformly bounded, $\Vert\dot{\bar{\bm{\lambda}}}_{n}\Vert_{\max} \leq \Vert\dot{\bm{\lambda}}\Vert_{\max} + \delta_{\dot{\bm{\lambda}}}^{n}$, by Corollary~\ref{cor:diff}, and $\psi = \dot{b}_{0,n}$, the node sum differs from $\int_{0}^{1}\bar{\bm{\lambda}}_{n}^{T}\dot{b}_{0,n}dt$ by at most $2n_{B}\Vert\dot{\bar{\bm{\lambda}}}_{n}\Vert_{\max}/n_{t} + 4n_{B}^{2}\bar{\bm{\lambda}}_{\max}/n_{t} + 4n_{B}\bar{\bm{\lambda}}_{\max}/n_{t}$; integrating by parts,
	\begin{align}
		\Big\Vert \int_{0}^{1}(\bar{\bm{\lambda}}_{n} - \bm{\lambda})^{T}(t)\dot{b}_{0,n}(t)dt\Big\Vert \leq \delta_{\dot{\bm{\lambda}}}^{n}\int_{0}^{1}\vert b_{0,n}\vert dt + 2\delta_{\bm{\lambda}}^{n} \leq \delta_{\dot{\bm{\lambda}}}^{n} + 2\delta_{\bm{\lambda}}^{n},
	\end{align}
	so that
	\begin{align}
		\label{eq:int_lamb1}
		\left\|w\sum_{k=0}^{n_{t}} \bar{\bm{\lambda}}_n^{T}(t_{k})\dot{b}_{0,n}(t_{k})
		- \int_{0}^{1} \bm{\lambda}^{T}(t) \dot{b}_{0,n}(t) dt \right\| \leq \bar{C}_{2}\,\eta_{n}.
	\end{align}
	The value-weight transfers
	\begin{align}
		\label{eq:int_lamb2}
		&\left\|w\sum_{k=0}^{n_{t}} \bar{\bm{\lambda}}_n^{T}(t_{k}) \textbf{f}_{\textbf{x}}(\textbf{x}_n(t_{k}),\textbf{u}_n(t_{k}))b_{0,n}(t_{k})
		- \int_{0}^{1} \bm{\lambda}^{T}(t) \textbf{f}_{\textbf{x}}(\textbf{x}(t),\textbf{u}(t))b_{0,n}(t) dt\right\| \leq \bar{C}_{3}\,\eta_{n}
		\\
		\label{eq:int_mu}
		&\left\|w\sum_{k=0}^{n_{t}} \bar{\bm{\mu}}_n^{T}(t_{k}) \textbf{h}_{\textbf{x}}(\textbf{x}_n(t_{k}),\textbf{u}_n(t_{k})) b_{0,n}(t_{k})
		- \int_{0}^{1} \bm{\mu}^{T}(t) \textbf{h}_{\textbf{x}}(\textbf{x}(t),\textbf{u}(t)) b_{0,n}(t) dt\right\| \leq \bar{C}_{4}\,\eta_{n}
	\end{align}
	follow exactly as Eq.~\eqref{eq:int_F}, with $g = \bm{\lambda}^{T}\textbf{f}_{\textbf{x}}(\textbf{x},\textbf{u})$ and $g = \bm{\mu}^{T}\textbf{h}_{\textbf{x}}(\textbf{x},\textbf{u})$ and node swaps costing $\textbf{f}_{\textbf{x},\max}\delta_{\bm{\lambda}}^{n} + \bm{\lambda}_{\max}L_{\textbf{f}_{\textbf{x}}}(\delta_{\textbf{x}}^{n}+\delta_{\textbf{u}}^{n})$ and $\textbf{h}_{\textbf{x},\max}\delta_{\bm{\mu}}^{n} + \bm{\mu}_{\max}L_{\textbf{h}_{\textbf{x}}}(\delta_{\textbf{x}}^{n}+\delta_{\textbf{u}}^{n})$, respectively.\end{revblock}
	Similarly, we can obtain that
	\begin{align}
		&\left\|E_{\textbf{x}(0)}(\textbf{x}_n(0),\textbf{x}_n(1)) - E_{\textbf{x}(0)}(\textbf{x}(0),\textbf{x}(1)) \right\| \leq \bar{C}_{5} \delta_{\textbf{x}}^{n}, \\
		&\left\|\bar{\bm{\nu}}^{T} \textbf{e}_{\textbf{x}(0)}(\textbf{x}_n(0),\textbf{x}_n(1)) -  \bm{\nu}^{T} \textbf{e}_{\textbf{x}(0)}(\textbf{x}(0),\textbf{x}(1))\right\| \leq \bar{C}_{6} \delta_{\textbf{x}}^{n}.
	\end{align}
	where $\bar{C}_{5}$ and $\bar{C}_{6}$ are positive constants independent of $n$.
	Using these relations, we can write the following inequality
	\begin{align}
		&\left\Vert \frac{\partial \mathcal{L}_{n}}{\partial \bar{\textbf{c}}_{x,0}} \right\Vert \leq \Bigg\| E_{\textbf{x}(0)}(\textbf{x}(0),\textbf{x}(1)) 
		+ \int_{0}^{1} F_{\textbf{x}}(\textbf{x}(t) , \textbf{u}(t))b_{0,n}(t)dt
		+ \bar{\bm{\nu}}^{T} \textbf{e}_{\textbf{x}(0)}(\textbf{x}(0),\textbf{x}(1)) 
		\notag \\
		& 
		- \int_{0}^{1} \bm{\lambda}^{T}(t) \dot{b}_{0,n}(t) dt
		+ \int_{0}^{1} \bm{\lambda}^{T}(t) \textbf{f}_{\textbf{x}}(\textbf{x}(t),\textbf{u}(t))b_{0,n}(t) dt \notag \\
		&+ \int_{0}^{1} \bm{\mu}^{T}(t) \textbf{h}_{\textbf{x}}(\textbf{x}(t),\textbf{u}(t)) b_{0,n}(t) dt
		\bigg\| 
		+ \bar{C}\,\rev{\eta_{n}}
	\end{align}
	where $\bar{C} > 6 \max_{i=1,\dots,6}\{ \bar{C}_{i}\}$. 
	We apply integration by parts to obtain:
	\begin{align}
		\int_{0}^{1} \bm{\lambda}^{T}(t) \dot{b}_{0,n}(t) dt & = - \int_{0}^{1} \dot{\bm{\lambda}}^{T}(t) b_{0,n}(t) dt +\Big[ \bm{\lambda}^{T}(t) b_{0,n}(t)\Big]_{0}^{1} \notag \\
		& = - \int_{0}^{1} \dot{\bm{\lambda}}^{T}(t) b_{0,n}(t) dt - \bm{\lambda}^{T}(0)
	\end{align}
	where $b_{0,n}(0) = 1$ and $b_{0,n}(1) = 0$.
	We get that
	\begin{align}
		&\left\Vert \frac{\partial \mathcal{L}_{n}}{\partial \bar{\textbf{c}}_{x,0}} \right\Vert \leq \left\| E_{\textbf{x}(0)}(\textbf{x}(0),\textbf{x}(1)) + \bm{\lambda}^{T}(0)
		+ \bar{\bm{\nu}}^{T} \textbf{e}_{\textbf{x}(0)}(\textbf{x}(0),\textbf{x}(1)) \right\| 
		\notag \\
		& 
		+\left\| \int_{0}^{1} \Big( F_{\textbf{x}}(\textbf{x}(t) , \textbf{u}(t))
		+\dot{\bm{\lambda}}^{T}(t) 	+ \bm{\lambda}^{T}(t) \textbf{f}_{\textbf{x}}(\textbf{x}(t),\textbf{u}(t))
		+ \bm{\mu}^{T}(t) \textbf{h}_{\textbf{x}}(\textbf{x}(t),\textbf{u}(t))\Big)b_{0,n}(t) dt \right\| \notag \\
		& + \bar{C}\,\rev{\eta_{n}}
	\end{align}
	The first term is nulled by applying Eq.~\eqref{eq:closure1} and the second term using~\eqref{eq:lamb_dot}.
	\rev{The same estimate holds for every Bernstein row $\bar{\textbf{c}}_{x,k}$, $k=0,\dots,n_{B}$, the weight bounds above are uniform in $k$, the boundary group at $t=0$ is replaced by that at $t=1$ for $k=n_{B}$ and is absent for $0<k<n_{B}$ since $b_{k,n_{B}}(0)=b_{k,n_{B}}(1)=0$, and for the $\textbf{u}$-block rows, which contain no $\dot{b}$ terms.}
	
	Similarly, we prove the stationarity conditions in~\eqref{eq:stat} for the Fourier part of the approximation.
	Consider:
	\begin{align}
		&\left\Vert \frac{\partial \mathcal{L}_{n}}{\partial \bar{\textbf{a}}_{x,1}} \right\Vert  = \Bigg\| E_{\textbf{x}(0)}(\textbf{x}_n(0),\textbf{x}_n(1)) 
		+ E_{\textbf{x}(1)}(\textbf{x}_n(0),\textbf{x}_n(1)) 
		+ w \sum_{k=0}^{n_{t}} F_{\textbf{x}}(\textbf{x}_n(t_{k}),\textbf{u}_n(t_{k})) c_{1}(t_{k}) \notag \\
		& + \sum_{k=0}^{n_{t}} \bm{\lambda}_n^{T}(t_{k})\Big(\textbf{f}_{\textbf{x}}(\textbf{x}_n(t_{k}),\textbf{u}_n(t_{k}))c_{1}(t_{k})-\dot{c}_{1}(t_{k})\Big)
		+ \sum_{k=0}^{n_{t}} \bm{\mu}_n^{T}(t_{k}) \textbf{h}_{\textbf{x}}(\textbf{x}_n(t_{k}),\textbf{u}_n(t_{k})) c_{1}(t_{k}) \notag \\
		& + \bar{\bm{\nu}}^{T} \textbf{e}_{\textbf{x}(0)}(\textbf{x}_n(0),\textbf{x}_n(1))
		+ \bar{\bm{\nu}}^{T} \textbf{e}_{\textbf{x}(1)}(\textbf{x}_n(0),\textbf{x}_n(1)) 
		\bigg\| \notag \\
		& = \Bigg\| E_{\textbf{x}(0)}(\textbf{x}_n(0),\textbf{x}_n(1)) + E_{\textbf{x}(1)}(\textbf{x}_n(0),\textbf{x}_n(1)) 
		+ w \sum_{k=0}^{n_{t}} F_{\textbf{x}}(\textbf{x}_n(t_{k}),\textbf{u}_n(t_{k})) c_{1}(t_{k}) \notag \\
		& + w\sum_{k=0}^{n_{t}} \bar{\bm{\lambda}}_n^{T}(t_{k})\Big(\textbf{f}_{\textbf{x}}(\textbf{x}_n(t_{k}),\textbf{u}_n(t_{k}))c_{1}(t_{k})-\dot{c}_{1}(t_{k})\Big)
		+ w\sum_{k=0}^{n_{t}} \bar{\bm{\mu}}_n^{T}(t_{k}) \textbf{h}_{\textbf{x}}(\textbf{x}_n(t_{k}),\textbf{u}_n(t_{k})) c_{1}(t_{k}) \notag \\
		& + \bar{\bm{\nu}}^{T} \textbf{e}_{\textbf{x}(0)}(\textbf{x}_n(0),\textbf{x}_n(1)) 
		+ \bar{\bm{\nu}}^{T} \textbf{e}_{\textbf{x}(1)}(\textbf{x}_n(0),\textbf{x}_n(1)) 
		\bigg\| 
	\end{align}
	where $c_{j}(t_{k}) \dfn \cos \left( 2 \pi j t_{k}\right)$ and $\dot{c}_{j}(t_{k})$ is its time derivative, and in the second equality we substitute $\bm{\lambda}_n^{T}(t_{k}) = w \bar{\bm{\lambda}}_n^{T}(t_{k})$ and $\bm{\mu}_n^{T}(t_{k}) = w \bar{\bm{\mu}}_n^{T}(t_{k})$.
	\rev{By Eq.~\eqref{eq:transfer} exactly as in Eqs.~\eqref{eq:int_F}--\eqref{eq:int_mu}, with the continuous-side swap of Eq.~\eqref{eq:int_lamb1} for the $\dot{c}_{1}$ weight, using $\Vert\dot{c}_{j}\Vert_{\max} \leq 2\pi n_{F}$ and $\Vert\ddot{c}_{j}\Vert_{\max} \leq 4\pi^{2}n_{F}^{2}$,} we get that
	\begin{align}
		& \left\| w \sum_{k=0}^{n_{t}} F_{\textbf{x}}(\textbf{x}_n(t_{k}),\textbf{u}_n(t_{k})) c_{1}(t_{k}) -
		\int_{0}^{1} F_{\textbf{x}}(\textbf{x}(t) , \textbf{u}(t))c_{1}(t)dt \right\|\leq \bar{C}_{7}\,\rev{\eta_{n}} \\
		&\left\|w\sum_{k=0}^{n_{t}} \bar{\bm{\lambda}}_n^{T}(t_{k})\dot{c}_{1}(t_{k})
		- \int_{0}^{1} \bm{\lambda}^{T}(t) \dot{c}_{1}(t) dt \right\| \leq \bar{C}_{8}\,\rev{\eta_{n}}\bm{1}
		\\
		&\left\|w\sum_{k=0}^{n_{t}} \bar{\bm{\lambda}}_n^{T}(t_{k}) \textbf{f}_{\textbf{x}}(\textbf{x}_n(t_{k}),\textbf{u}_n(t_{k}))c_{1}(t_{k})(t_{k})
		- \int_{0}^{1} \bm{\lambda}^{T}(t) \textbf{f}_{\textbf{x}}(\textbf{x}(t),\textbf{u}(t))c_{1}(t) dt\right\| \notag \\
		& \leq \bar{C}_{9}\,\rev{\eta_{n}}\bm{1}
		\\
		&\left\|w\sum_{k=0}^{n_{t}} \bar{\bm{\mu}}_n^{T}(t_{k}) \textbf{h}_{\textbf{x}}(\textbf{x}_n(t_{k}),\textbf{u}_n(t_{k})) c_{1}(t_{k})
		- \int_{0}^{1} \bm{\mu}^{T}(t) \textbf{h}_{\textbf{x}}(\textbf{x}(t),\textbf{u}(t)) c_{1}(t) dt\right\| \notag \\
		& \leq \bar{C}_{10}\,\rev{\eta_{n}}\bm{1}
	\end{align}
	where $\bar{C}_{7}$, $\bar{C}_{8}$, $\bar{C}_{9}$, and $\bar{C}_{10}$ are positive constants independent of $n$.
	We can also obtain that
	\begin{align}
		&\left\|E_{\textbf{x}(1)}(\textbf{x}_n(0),\textbf{x}_n(1)) - E_{\textbf{x}(1)}(\textbf{x}(0),\textbf{x}(1)) \right\| \leq \bar{C}_{11} \delta_{\textbf{x}}^{n}, \\
		&\left\|\bar{\bm{\nu}}^{T} \textbf{e}_{\textbf{x}(1)}(\textbf{x}_n(0),\textbf{x}_n(1)) -  \bm{\nu}^{T} \textbf{e}_{\textbf{x}(1)}(\textbf{x}(0),\textbf{x}(1))\right\| \leq \bar{C}_{12} \delta_{\textbf{x}}^{n}.
	\end{align}
	where $\bar{C}_{11}$ and $\bar{C}_{12}$ are positive constants independent of $n$.
	Using these relations, we can write the following inequality
	\begin{align}
		&\left\Vert \frac{\partial \mathcal{L}_{n}}{\partial \textbf{a}_{x,1}} \right\Vert \leq \Bigg\| E_{\textbf{x}(0)}(\textbf{x}(0),\textbf{x}(1)) 
		+ \int_{0}^{1} F_{\textbf{x}}(\textbf{x}(t) , \textbf{u}(t))c_{1}(t)dt
		+ \bar{\bm{\nu}}^{T} \textbf{e}_{\textbf{x}(0)}(\textbf{x}(0),\textbf{x}(1)) 
		\notag \\
		& + E_{\textbf{x}(1)}(\textbf{x}(0),\textbf{x}(1)) 
		+ \bar{\bm{\nu}}^{T} \textbf{e}_{\textbf{x}(1)}(\textbf{x}(0),\textbf{x}(1)) 
		\notag \\
		& 
		- \int_{0}^{1} \bm{\lambda}^{T}(t) \dot{c}_{1}(t) dt
		+ \int_{0}^{1} \bm{\lambda}^{T}(t) \textbf{f}_{\textbf{x}}(\textbf{x}(t),\textbf{u}(t))c_{1}(t) dt \notag \\
		&+ \int_{0}^{1} \bm{\mu}^{T}(t) \textbf{h}_{\textbf{x}}(\textbf{x}(t),\textbf{u}(t)) c_{1}(t) dt
		\bigg\| 
		+ \bar{C}\,\rev{\eta_{n}}
	\end{align}
	where $\bar{C} > 8 \max_{i=5,\dots,12}\{ \bar{C}_{i}\}$ . 
	We apply integration by parts to obtain:
	\begin{align}
		\int_{0}^{1} \bm{\lambda}^{T}(t) \dot{c}_{1}(t) dt & = - \int_{0}^{1} \dot{\bm{\lambda}}^{T}(t) c_{1}(t) dt +\Big[ \bm{\lambda}^{T}(t) c_{1}(t)\Big]_{0}^{1} \notag \\
		& = - \int_{0}^{1} \dot{\bm{\lambda}}^{T}(t) c_{1}(t) dt + \bm{\lambda}^{T}(1) - \bm{\lambda}^{T}(0)
	\end{align}
	where $c_{1}(0) = c_{1}(1) = 1$.
	We get that
	\begin{align}
		&\left\Vert \frac{\partial \mathcal{L}_{n}}{\partial \textbf{a}_{x,1}} \right\Vert \leq \left\| E_{\textbf{x}(0)}(\textbf{x}(0),\textbf{x}(1)) 
		+ \bm{\lambda}^{T}(0)
		+ \bar{\bm{\nu}}^{T} \textbf{e}_{\textbf{x}(0)}(\textbf{x}(0),\textbf{x}(1)) \right\| 
		\notag \\
		& + \left\| E_{\textbf{x}(1)}(\textbf{x}(0),\textbf{x}(1)) 
		- \bm{\lambda}^{T}(1)
		+ \bar{\bm{\nu}}^{T} \textbf{e}_{\textbf{x}(1)}(\textbf{x}(0),\textbf{x}(1)) \right\| 
		\notag \\
		& 
		+\left\| \int_{0}^{1} \Big( F_{\textbf{x}}(\textbf{x}(t) , \textbf{u}(t))
		+\dot{\bm{\lambda}}^{T}(t) 	+ \bm{\lambda}^{T}(t) \textbf{f}_{\textbf{x}}(\textbf{x}(t),\textbf{u}(t))
		+ \bm{\mu}^{T}(t) \textbf{h}_{\textbf{x}}(\textbf{x}(t),\textbf{u}(t))\Big)c_{1}(t) dt \right\| \notag \\
		& + \bar{C}\,\rev{\eta_{n}}
	\end{align}
	The first two terms are nulled by applying Eq.~\eqref{eq:closure1} and~\eqref{eq:closure2}, and the last term is nulled using~\eqref{eq:lamb_dot}.
	\rev{The same estimate holds uniformly for every Fourier row $\bar{\textbf{a}}_{x,j}$, $\bar{\textbf{b}}_{x,j}$, $j \leq n_{F}$, and for the constant row $\bar{\textbf{a}}_{x,0}$ (weight $1/2$): the weight bounds grow at most like $2\pi n_{F}$ and $4\pi^{2}n_{F}^{2}$, which the $n_{F}^{2}/n_{t}$ component of $\eta_{n}$ absorbs; the $s_{j}$ rows carry no boundary group since $s_{j}(0)=s_{j}(1)=0$; and the $\textbf{u}$-block Fourier rows follow identically without $\dot{c}$, $\dot{s}$ terms.}
	
	The first closure condition~\eqref{eq:close1} satisfies
	\begin{align}
		&\left\Vert \frac{\bm{\lambda}_{n}^{T}(0)}{w} + \bar{\bm{\nu}}^{T}\textbf{e}_{\textbf{x}(0)}(\textbf{x}_{n}(0),\textbf{x}_{n}(1)) + E_{\textbf{x}(0)}(\textbf{x}_{n}(0),\textbf{x}_{n}(1)) \right\Vert \notag \\
		& = \left\Vert \bar{\bm{\lambda}}_{n}^{T}(0) + \bar{\bm{\nu}}^{T}\textbf{e}_{\textbf{x}(0)}(\textbf{x}_{n}(0),\textbf{x}_{n}(1)) + E_{\textbf{x}(0)}(\textbf{x}_{n}(0),\textbf{x}_{n}(1)) \right\Vert \notag \\
		&\leq \left\| \bm{\lambda}^{T}(0) + \bm{\nu}^{T}\textbf{e}_{\textbf{x}(0)}(\textbf{x}(0),\textbf{x}(1)) + 
		E_{\textbf{x}(0)}(\textbf{x}(0),\textbf{x}(1)) \right\| + \delta_{\bm{\lambda}}^{n} + (\bar{C}_{5} + \bar{C}_{6})\delta_{\textbf{x}}^{n}
	\end{align}
	where the first term after the inequality is nulled by applying Eq.~\eqref{eq:closure1}, and $C_{D} > 2 \max\{ 1 , (\bar{C}_{5} + \bar{C}_{6}) \}$.
	Using the same arguments, we can show that the other closure condition in~\eqref{eq:close2} also satisfies its inequality constraint.
\end{proof}
\section{Proof of Th.~\ref{th:consis_d}}
\label{sec:app4}
\begin{proof}
	We show that $\textbf{x}_n(t_{k})$, $\textbf{u}_n(t_{k})$, $\bm{\lambda}_n(t_{k})$ , $\bm{\mu}_n(t_{k})$, and $\bar{\bm{\nu}}$ satisfy the conditions in Eqs.~\eqref{eq:dyn_P}--\eqref{eq:in_P} and the condition of the dual variables in~\eqref{eq:mu_h_0}--\eqref{eq:L_u}.
	\rev{Step~1 of the proof of Th.~\ref{th:consis}, which uses only the convergence in Assumption~\ref{ass:conv_d} and $\delta_{P}^{n}\rightarrow 0$, shows the satisfaction of the conditions of the primal variables.}
	\begin{revblock}Fix $t^{\prime}\in[0,1]$ and, for each $n$, let $t_{k(n)}$ be a constraint node nearest $t^{\prime}$, so that $\vert t_{k(n)} - t^{\prime}\vert \leq 1/(2n_{t}) \rightarrow 0$. By the triangle inequality,
	\begin{align}
		\Vert(\bar{\bm{\mu}}^{\infty})^{T}(t^{\prime})&\textbf{h}(\textbf{x}^{\infty}(t^{\prime}),\textbf{u}^{\infty}(t^{\prime}))\Vert \leq{} \Vert(\bar{\bm{\mu}}^{\infty})^{T}(t^{\prime})\textbf{h}(\textbf{x}^{\infty}(t^{\prime}),\textbf{u}^{\infty}(t^{\prime})) - (\bar{\bm{\mu}}^{\infty})^{T}(t_{k(n)})\textbf{h}(\textbf{x}^{\infty}(t_{k(n)}),\textbf{u}^{\infty}(t_{k(n)}))\Vert \notag \\
		&+ \Vert(\bar{\bm{\mu}}^{\infty})^{T}(t_{k(n)})\textbf{h}(\textbf{x}^{\infty}(t_{k(n)}),\textbf{u}^{\infty}(t_{k(n)})) - \bar{\bm{\mu}}_{n}^{T}(t_{k(n)})\textbf{h}(\textbf{x}_{n}(t_{k(n)}),\textbf{u}_{n}(t_{k(n)}))\Vert \notag \\
		&+ \tfrac{1}{w}\Vert\bm{\mu}_{n}^{T}(t_{k(n)})\textbf{h}(\textbf{x}_{n}(t_{k(n)}),\textbf{u}_{n}(t_{k(n)}))\Vert.
	\end{align}
	As $n\rightarrow\infty$, the first term vanishes by continuity of $\bar{\bm{\mu}}^{\infty}$, $\textbf{x}^{\infty}$, $\textbf{u}^{\infty}$, and $\textbf{h}$; the second by Assumption~\ref{ass:conv_d} (uniform convergence through the Lipschitz continuity of $\textbf{h}$ and boundedness of the sequences); the third is bounded by $\frac{n_{t}+1}{n_{t}}\delta_{D} \leq 2\delta_{D} \rightarrow 0$ by Eq.~\eqref{eq:in_eq_Pnl} and Thm.~\ref{th:feas_d}. Since $t^{\prime}$ was arbitrary, the left-hand side vanishes identically. The sign condition follows likewise: 
    \begin{equation}
        \bar{\bm{\mu}}^{\infty}(t^{\prime}) = \lim_{n\rightarrow\infty}\bar{\bm{\mu}}_{n}(t_{k(n)}) \geq -\lim_{n\rightarrow\infty}2\delta_{D}\bm{1} = \bm{0}
    \end{equation}
    by the second inequality in Eq.~\eqref{eq:in_eq_Pnl}. Thus $\textbf{x}^{\infty}(t)$, $\textbf{u}^{\infty}(t)$, and $\bar{\bm{\mu}}^{\infty}(t)$ \end{revblock}satisfy Eq.~\eqref{eq:mu_h_0}.
	
	\begin{revblock}Next, we pass the closure conditions and the stationarity rows to the limit. Evaluating Eq.~\eqref{eq:close1} and Eq.~\eqref{eq:close2} along the sequence, using Assumption~\ref{ass:conv_d} (uniform convergence, hence convergence at $t = 0, 1$), the continuity of $E_{\textbf{x}(0)}$, $E_{\textbf{x}(1)}$, $\textbf{e}_{\textbf{x}(0)}$, $\textbf{e}_{\textbf{x}(1)}$ from Assumption~\ref{ass:lip_d}, and $\delta_{D}\rightarrow 0$ (Thm.~\ref{th:feas_d}), we obtain
	\begin{align}
		& \bar{\bm{\lambda}}^{\infty}(0) + \big((\bar{\bm{\nu}}^{\infty})^{T}\textbf{e}_{\textbf{x}(0)}(\textbf{x}^{\infty}(0),\textbf{x}^{\infty}(1)) + E_{\textbf{x}(0)}(\textbf{x}^{\infty}(0),\textbf{x}^{\infty}(1))\big)^{T} = \bm{0}, \notag \\
		& \bar{\bm{\lambda}}^{\infty}(1) - \big((\bar{\bm{\nu}}^{\infty})^{T}\textbf{e}_{\textbf{x}(1)}(\textbf{x}^{\infty}(0),\textbf{x}^{\infty}(1)) + E_{\textbf{x}(1)}(\textbf{x}^{\infty}(0),\textbf{x}^{\infty}(1))\big)^{T} = \bm{0},
	\end{align}
	i.e., Eqs.~\eqref{eq:closure1} and~\eqref{eq:closure2} hold for the limit.
	For the stationarity conditions, fix $j \in \mathbb{N} \cup \{0\}$ and consider the Fourier rows $\partial\mathcal{L}_{n}/\partial\bar{\textbf{a}}_{x,j}$ and $\partial\mathcal{L}_{n}/\partial\bar{\textbf{b}}_{x,j}$, which are present for all $n$ with $n_{F} \geq j$; their weights $c_{j}$, $s_{j}$, $\dot{c}_{j}$, $\dot{s}_{j}$ are fixed functions with $n$-independent bounds. Passing $n\rightarrow\infty$ in $\Vert\partial\mathcal{L}_{n}/\partial\bar{\textbf{a}}_{x,j}\Vert \leq \delta_{D}$, the node sums converge to the corresponding integrals of the limit functions: the argument swaps use Assumption~\ref{ass:conv_d} and Assumption~\ref{ass:lip_d}, and the transfer uses Eq.~\eqref{eq:transfer}, whose error terms now vanish as $n_{t}\rightarrow\infty$ with no condition among the orders because the weights are fixed; the endpoint terms converge as above, and $\delta_{D}\rightarrow 0$. Integrating $\int_{0}^{1}(\bar{\bm{\lambda}}^{\infty})^{T}\dot{c}_{j}\,dt$ by parts, which is admissible since $\bm{\lambda}^{\infty}\in\mathcal{C}^{1}$ by Assumption~\ref{ass:conv_d}, and canceling the boundary groups with the closure identities above, we obtain
	\begin{align}
		\int_{0}^{1}\textbf{g}_{\textbf{x}}^{T}(t)\,c_{j}(t)\,dt = \bm{0}, \qquad \int_{0}^{1}\textbf{g}_{\textbf{x}}^{T}(t)\,s_{j}(t)\,dt = \bm{0}, \qquad \int_{0}^{1}\textbf{g}_{\textbf{x}}^{T}(t)\,dt = \bm{0},
	\end{align}
	for every $j \in \mathbb{N}$, where
	\begin{align}
		\textbf{g}_{\textbf{x}}^{T} \dfn (\dot{\bar{\bm{\lambda}}}^{\infty})^{T} + F_{\textbf{x}}(\textbf{x}^{\infty},\textbf{u}^{\infty}) + (\bar{\bm{\lambda}}^{\infty})^{T}\textbf{f}_{\textbf{x}}(\textbf{x}^{\infty},\textbf{u}^{\infty}) + (\bar{\bm{\mu}}^{\infty})^{T}\textbf{h}_{\textbf{x}}(\textbf{x}^{\infty},\textbf{u}^{\infty})
	\end{align}
	is continuous on $[0,1]$, and the third identity is the $\bar{\textbf{a}}_{x,0}$ row. The identical passage on the $\textbf{u}$-block Fourier rows yields 
    \begin{equation}
        \int_{0}^{1}\textbf{g}_{\textbf{u}}^{T}c_{j}\,dt = \int_{0}^{1}\textbf{g}_{\textbf{u}}^{T}s_{j}\,dt = \int_{0}^{1}\textbf{g}_{\textbf{u}}^{T}\,dt = \bm{0},
    \end{equation} 
    with 
    \begin{equation}
        \textbf{g}_{\textbf{u}}^{T} \dfn F_{\textbf{u}}(\textbf{x}^{\infty},\textbf{u}^{\infty}) + (\bar{\bm{\lambda}}^{\infty})^{T}\textbf{f}_{\textbf{u}}(\textbf{x}^{\infty},\textbf{u}^{\infty}) + (\bar{\bm{\mu}}^{\infty})^{T}\textbf{h}_{\textbf{u}}(\textbf{x}^{\infty},\textbf{u}^{\infty}).
    \end{equation} 
    continuous. Every trigonometric Fourier coefficient of $\textbf{g}_{\textbf{x}}$ and $\textbf{g}_{\textbf{u}}$ therefore vanishes; since the system $\{1, \cos(2\pi j t), \sin(2\pi j t)\}_{j\in\mathbb{N}}$ is complete in $L^{2}[0,1]$ and $\textbf{g}_{\textbf{x}}$, $\textbf{g}_{\textbf{u}}$ are continuous, we conclude $\textbf{g}_{\textbf{x}} \equiv \bm{0}$ and $\textbf{g}_{\textbf{u}} \equiv \bm{0}$ on $[0,1]$, i.e., Eqs.~\eqref{eq:lamb_dot} and~\eqref{eq:L_u} hold.\end{revblock}

	Therefore, we have shown that  $\textbf{x}^{\infty}(t)$, $\textbf{u}^{\infty}(t)$, $\bar{\bm{\lambda}}^{\infty}(t)$, $\bar{\bm{\mu}}^{\infty}(t)$, and $\bar{\bm{\nu}}^{\infty}$ satisfy Eqs.\eqref{eq:lamb_dot}--\eqref{eq:L_u}.
\end{proof}

\bibliography{Bib1}
\end{document}